\definecolor{dark-blue}{rgb}{0,0,0.6}
\definecolor{Purple}{rgb}{0.2,0,0.25}
\newcommand{\bref}[1]{\textbf{\ref{#1}}} 
\newcommand{\beqref}[1]{\textbf{(\ref{#1})}} 
\theoremstyle{plain}
\newtheorem{thm}{Theorem}[section]
\newtheorem{lem}[thm]{Lemma}
\newtheorem{defin}[thm]{Definition}
\newtheorem{cor}[thm]{Corollary}
\newtheorem{prop}[thm]{Proposition}
\newtheorem{alg}[thm]{Algorithm}
\theoremstyle{definition}
\newtheorem{remark}[thm]{Remark}
\newtheorem{expl}[thm]{Example}
\newtheorem{unclear-issue}[thm]{Unclear issue}
\newtheorem{unclear-issues}[thm]{Unclear issues}
\newcommand{\dom}{\textnormal{dom}}
\newcommand{\Dom}{\textnormal{Dom}}
\newcommand{\Int}{\textnormal{Int}}
\newcommand{\bisect}{\textnormal{bisect}}
\newcommand{\R}{\mathbb{R}}
\newcommand{\N}{\mathbb{N}}
\title[Computation of zone and double zone diagrams]{On the computation of zone and double zone diagrams}
\author{Daniel Reem}
\address{Department of Mathematics, The Technion - Israel Institute of Technology, 3200003 Haifa, Israel}
\email{dream@technion.ac.il}
\subjclass[2010]{68U05, 47H10, 51M05, 53C22, 46B20, 65D18, 51N05}
\keywords{Bisector, computation, double zone diagram, fixed point, geodesic metric space, geodesic inclusion property, strictly convex normed space, Voronoi diagram, zone diagram.}
\begin{document}
\date{December 31, 2017}

\maketitle
\vspace{0.3cm}
\begin{center}
{ \scriptsize\bf Dedicated to the memory of Ji{\v{r}}{\'{\i}} (Jirka) Matou{\v{s}}ek (1963--2015), an outstanding  scientist, one of the founders of implicit computational geometry}
\end{center}
\vspace{0.5cm}

\begin{abstract}
Classical objects in computational geometry are defined by explicit
relations. Several years ago the pioneering works of T. Asano,
J. Matou\v{s}ek and T. Tokuyama introduced ``implicit computational geometry''
in which the geometric objects are defined by implicit relations involving sets. An
important member in this family is called ``a zone diagram''. The implicit nature of zone  diagrams implies, as has already been observed in the original works, that their computation is a
challenging task. In a continuous setting this task has been addressed (briefly)
only by these authors in the Euclidean plane with point sites.
We discuss the possibility to compute zone diagrams in a wide
class of spaces and also shed new light on their computation in the original
setting. The class of spaces, which is introduced here, includes,
in particular, Euclidean spheres and finite-dimensional strictly convex
normed spaces. Sites of a general form are allowed and it is shown that
a generalization of the iterative method suggested by Asano,
 Matou\v{s}ek and Tokuyama converges to a double zone diagram, another implicit geometric
object whose existence is known in general. Occasionally a zone diagram
can be obtained from this procedure. The actual (approximate) computation of the iterations
is based on a simple algorithm which enables the approximate computation of Voronoi diagrams
in a general setting. Our analysis also yields a few byproducts of independent interest,
such as certain topological properties of Voronoi cells (e.g., that in the considered setting
their boundaries cannot be ``fat'').
\end{abstract}

\section{\bf Introduction}\label{sec:Intro}
\subsection{Background}\label{subsec:Background}
Classical objects in computational geometry, such as polytopes,
arrangements, and Delaunay triangulations are defined by explicit relations
\cite{BoissonnatYvinec1998book,Edelsbrunner-book-1987,GoodmanORourke2004book,Mulmuley1994,SharirAgarwal1995book}. Several years ago the pioneering works \cite{AMT2,AMTn} of T. Asano,
J. Matou\v{s}ek and T. Tokuyama introduced ``implicit computational geometry'', a branch of computational geometry in which the geometric objects are defined by implicit relations involving sets.
An important member in this family is called ``a zone diagram'' (this notion is significantly different from other notions of ``zones'' which exist in computational geometry \cite[Chapter 5]{Edelsbrunner-book-1987}, \cite[p.\,146]{Matousek2002book}, \cite[pp.\,231--236]{SharirAgarwal1995book}).

In order to understand this geometric object better, consider first the much
more familiar concept of a Voronoi diagram, another classical object which is defined explicitly. Suppose first that our setting is the Euclidean plane. We are given a collection of finitely many distinct points $p_1,\ldots,p_n$, $n\in \N$ (called the sites or the  generators) and we associate with each site $p_k$ ($k\in \{1,\ldots,n\}$) its Voronoi cell (or Voronoi region), namely the set of all points $x=(x_1,x_2)$ in $\R^2$ having the property that $d(x,p_k)$, that is, the Euclidean  distance between $x$ and $p_k$, is not greater than  $d(x,p_j)$ for all $j\neq k$, $j\in \{1,\ldots,n\}$. The Voronoi diagram corresponding to the sites $p_1,\ldots,p_n$ is the tuple $(R_1,\ldots,R_n)$ of the corresponding Voronoi cells.  More generally, given a set $X$, a distance function $d\colon X^2\to[0,\infty)$, and a collection of nonempty subsets $(P_k)_{k\in K}$ in $X$,  we associate with each site $P_k$ ($k\in K$) its Voronoi cell, that is, the set $R_k$ of all $x\in X$ whose distance to $P_k$ is not greater than their distance to any of the other sites $P_j$, $j\neq k$, $j\in K$ (exact details appear in Definition \bref{def:Voronoi} below). The collection $(R_k)_{k\in K}$ of Voronoi cells is the Voronoi diagram.

On the other hand, in the case of a zone diagram, we associate with each site $P_k$ the set $R_k$ of all $x\in X$ whose distance to $P_k$ is not greater than their distance to any of the other sets (or cells, or regions) $R_j$, $j\neq k$. The zone diagram induced by these sites is $R=(R_k)_{k\in K}$ (a formal definition appear in  Definition \bref{def:zone} below). Figures \bref{fig:IntroVoronoi} and  \bref{fig:IntroZone} show the Voronoi and zone diagrams, respectively, corresponding to the same ten point sites in the Euclidean plane.
\begin{figure}[t]
\begin{minipage}[t]{0.45\textwidth}
\begin{center}
{\includegraphics[scale=0.54]{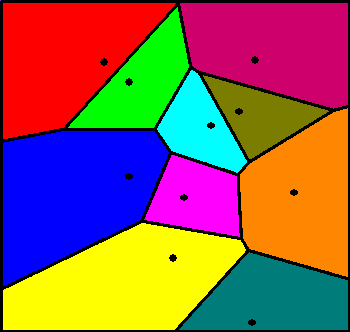}}
\end{center}
 \caption{The Voronoi diagram of 10 point sites in a square in the Euclidean plane.}
\label{fig:IntroVoronoi}
\end{minipage}
\hfill
\begin{minipage}[t]{0.48\textwidth}
\begin{center}
{\includegraphics[scale=0.54]{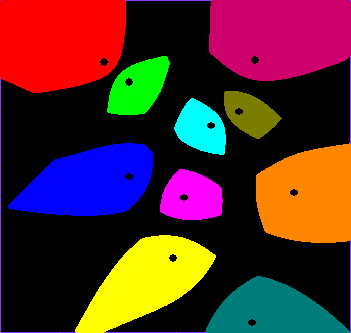}}
\end{center}
 \caption{The zone diagram of the sites of Figure \bref{fig:IntroVoronoi}.}
\label{fig:IntroZone}
\end{minipage}
\end{figure}

At  first glance it seems that the definition of a zone diagram is
circular, because the definition of each region $R_k$ depends on $R_k$ itself via
the definition of the other regions $R_j$, $j\neq k$. On a second thought, we
see that a zone diagram is defined formally to be a fixed point of a
certain mapping (called the $\Dom$ mapping: see Definition \bref{def:zone} below), that is, a solution to the equation $R=\Dom(R)$. Hence, in contrast to the case of Voronoi diagrams which are defined explicitly and hence their existence and uniqueness is obvious, neither the  existence nor the uniqueness of a zone diagram are obvious in advance.
In addition, even if some existence (or uniqueness) results are proved, one still faces the problem of finding algorithms for computing zone diagrams.

Th situation mentioned above is somewhat similar to the situation occurring in the case of  differential equations, where issues related to existence, uniqueness and computation of the solution are frequent \cite{Brezis2011book,CourantHilbert1962IIbook,PinchoverRubinstein2005book,Walter1998book}.  There are, however, a few significant differences between the differential equations setting and our setting, even when a solution of the differential equation (if it exists) induces a geometric object such as a curve or a surface. The first difference is that  the solution in the latter case is a function, which may be defined only locally, and should satisfy some nice properties (e.g., to be smooth or, in the case of weak solutions, to be, say, square integrable). Second, the scientific domain of differential equations is hundreds years old and it is rich with various techniques for solving and analyzing equations.  In contrast, in the case of zone diagrams in particular and implicit computational geometry in general, the equation involves tuple of sets, the solution may be very exotic, and the whole domain of research is in its initial stages in terms of age, techniques, etc.

In their pioneering works, Asano, Matou{\v{s}}ek and Tokuyama introduced and studied zone diagrams. The setting was the Euclidean plane, each site $P_k$ was a single point, and all of these (finitely many) points were distinct. They proved that in this setting there exists a unique zone diagram and they suggested an iterative algorithm for approximating this object. In later works some of these results have been generalized. For instance, in \cite[Theorem 5.6]{ReemReichZone} it was shown that a zone diagram of two general sites
 in any metric space always exists (i.e., the sites $P_1$ and $P_2$ can be any nonempty subsets of the space). In fact, the proof holds in a more  general setting
 called $m$-spaces, in which $X$ is an arbitrary nonempty set and the ``distance'' function $d$
 should only satisfy the condition $d(x,x)\leq d(x,y)\,\,\,\,
\forall x,y\in X$ and can take any value in the interval
$[-\infty,\infty]$. Simple examples given there show that in general uniqueness of a zone
diagram does not necessarily hold.  In \cite{KopeckaReemReich} it is shown, in particular,
that a zone diagram of any finite  number  of sites exists, assuming these sites are compact subsets of the space and they are positively separated (that is, there is a positive number $r$ such that the distance between any two sites $P_k$ and $P_j$, $k\neq j$ is at least $r$) and that they are located in the interior of a large compact subset of a finite dimensional normed space where the norm is strictly convex (the exact  result is more general). Another existence (and also uniqueness) result is discussed in \cite{KMT2012}. Here the setting is a finite dimensional Euclidean space, or, more generally, a finite dimensional normed space which is both strictly convex and smooth  \cite{KMT2012}. The sites are again positively separated.

Zone diagrams have a nice interpretation (introduced in \cite{AMTn} and extended in \cite{ReemReichZone,Reem2014jour}) as a certain equilibrium between mutually hostile kingdoms.
This interpretation is closely related to the fact that a zone diagram induces a neutral region, or a ``no-man's land'', as is illustrated in Figure \bref{fig:IntroZone}. See \cite{Reem2014jour} for a rigorous discussion on this issue. In the case of a discrete  setting there is a different  interpretation: a zone diagram is a certain equilibrium in a  certain combinatorial game involving one player. See \cite{ReemReichZone}.

In addition to zone diagrams, other implicit geometrical objects were introduced and studied,
partly in order to understand better zone diagrams. One such an object is called a double zone diagram  \cite{ReemReichZone}. Formally, it is  defined to be the fixed point of the second iteration of $\Dom$, the mapping which defines zone diagrams. In the language of dynamical systems, double zones diagram can be thought of as being cycles (periodic points) of order two.  It can be shown that any zone diagram must be a double zone diagram, and, as a result, if we are able to show the existence of a double zone diagram, then we have a candidate to be a zone diagram (see also Remark \bref{rem:WeakSolution} below). As a matter of fact, it was proved in \cite[Theorem 5.5]{ReemReichZone} that double zone diagrams exist in a relatively general setting (the setting is an arbitrary $m$-space with arbitrary sites, possibly infinitely many).   Actually, it was shown that there exist a least and a greatest double zone diagrams, namely double zone diagrams $m$ and $M$ such that any other double zone diagram $R$ satisfies $m\subseteq R\subseteq M$ (the inclusion is component-wise). Unfortunately, the proof is based on a nonconstructive argument (the Knaster-Tarski fixed point theorem \cite{Knaster,Tarski}; as a side remark we note that interestingly, the Knaster-Tarski theorem has also applications in logic and computer science \cite{EbbinghausFlum1999book},\cite[Chapters 2--3]{GKLMSVVW2007book}). Hence no general  procedure was suggested to compute the double zone diagrams. The importance of double zone diagrams to the computation of zone diagrams will  become clear later (Sections \bref{sec:QualitativeDescription}--\bref{sec:ConvergenceResult}  below).

One of the main challenges regarding zone diagrams is their computation. This is evident already in the title of the original work \cite{AMTn}.
In a continuous setting this task has been addressed  so far only by  Asano, Matou{\v{s}}ek, and Tokuyama in \cite{AMTn} in the case of the Euclidean plane with finitely many point sites. While the formal claims and proofs mentioned there are insightful,
the discussion about an actual method for the approximate computation of zone diagrams
is very brief with almost no theoretical or  practical details. More precisely, it is written there that  one can use convex polygons with many sides for approximating the components of the iterative sequences, and that each iteration is computationally  demanding, but with the exception of a few interesting
pictures, no additional details can be found there. One of the main difficulties in trying to extend  the method of approximate computation mentioned in \cite{AMTn} to more general settings is the  need for a method for computing Voronoi diagrams in a general setting. Another difficulty is  the lack of any known representation of the boundaries of the involved regions. In fact, they are conjectured to be non-algebraic in many cases and this conjecture is supported by the recent paper  \cite{MonterdeOngay2014jour} which establishes this claim for the case of two point sites in the Euclidean plane.

In a discrete setting ($X$ is a finite set of points) there has been a limited discussion on the issue of computing zone diagrams in two places. First, in \cite{COT} the setting is  digital Euclidean  plane with two sites (a point and a ``line segment'' or a ``curve''; a possible generalization to finitely many line segments was mentioned but no details were given). The actual computation in \cite{COT} is heavily based on the considered setting and  the corresponding analysis (time complexity, etc.) is quite brief. The second place which discusses the computation of zone diagrams in a discrete setting is \cite{ReemReichZone}. There the setting is  a finite $m$-space with two finite sites. The actual computation is by brute force. In both cases mentioned above it is  not clear in which sense the resulting discrete zone diagrams approximate the continuous ones, although  intuitively they may approximate them with respect to the Hausdorff distance, at least in some familiar settings. The computation of double zone diagrams was discussed briefly in \cite{ReemReichZone} only in a discrete setting (finitely many finite sites in an $m$-space which is composed of finitely many points).

In addition to zone and double zone diagrams, other implicit geometric objects have been  studied, among them trisectors \cite{AsanKirk,AMT2,AsanoTokuyama, ChunOkadeTokuyama2010}, $k$-sectors and $k$-gradations \cite{FHKLOMN2013conf,ImaiKawamuraMatousekReemTokuyamaCGTA}, territory diagrams \cite{DeBiasiKalantaris2011} (also called ``mollified zone diagrams'' \cite{DeBiasiKalantaris2010inproc}), and double territory diagrams \cite{Reem2014jour}.  The computation of some of these objects have  been considered in several works, e.g., in \cite{AsanKirk,AMT2,AsanoTokuyama, ChunOkadeTokuyama2010,ImaiKawamuraMatousekReemTokuyamaCGTA} (trisectors and $k$-sectors) and in   \cite{DeBiasiKalantaris2011}  (territory diagrams). A typical feature of the computation is that it is an approximate computation (based on, say, polygonal approximation). One can find interesting pictures and ideas in these works, but in many cases a corresponding precise analysis of convergence/level of approximation/time complexity is very brief or absent.

\subsection{Contributions:}
 This paper considers the question of computing zone diagrams in a new and wide class of metric spaces called ``proper geodesic metric spaces which have the geodesic inclusion property'' (Section  \bref{sec:Geodesic} below). This  class of spaces includes, in particular, Euclidean spheres of any dimension and finite  dimensional strictly convex spaces. The considered sites can be arbitrary  positively separated closed sets (possibly infinitely many). Despite the general setting considered here, our results also shed new light on the computation of zone diagrams in familiar settings such as the Euclidean plane with  point sites. We prove that a generalization of the iterative algorithm suggested by Asano, Matou{\v{s}}ek, and Tokuyama  converges to the least and the greatest double zone diagrams (Theorem \bref{thm:AlgUniConvex} and Corollary \bref{cor:CompactLimit} below). In various cases  a zone diagram can be obtained from the resulting double zone diagrams, and, as a matter of fact, in some of these cases (e.g., when the space is a finite dimensional Euclidean space) the limit is the unique zone diagram (Corollary \bref{cor:ZoneDoubleZone} below). Additional properties are established too (Theorem \bref{thm:Metric} below).

In the normed space case the suggested way to compute approximately the corresponding iterations and the induced (double) zone diagrams is done using the algorithm for computing Voronoi diagrams which was suggested in \cite{ReemISVD09}. This algorithm enables the approximate computation of Voronoi diagrams in a general setting (any  norm, and dimension, sites of a general form). Since this  latter algorithm enables the computation of each Voronoi cell independently of the other ones, the above-mentioned iterative algorithm for computing zone diagrams and double zone diagrams supports, in a  natural way, their computation in a parallel computing environment. Many pictures of (double) zone diagrams and of the corresponding   approximating iterations are given, pictures which have been produced using a computer based  implementation of the algorithm \cite{Vdream2017web} (Section \bref{sec:ActualCompute} below). Our analysis yields a few byproducts of independent interest, among them certain topological properties  of Voronoi cells (that in the setting that we consider, their boundaries cannot be ``fat'': Theorem \bref{thm:BoundaryInterior} below).

\subsection{Paper layout}
In Section \bref{sec:definitions} the basic definitions and notation are presented. In Section \bref{sec:Geodesic} geodesic metric spaces and particular classes of them are discussed. In Section  \bref{sec:QualitativeDescription} we discuss qualitatively issues related to the iterative algorithm. In Section \bref{sec:ConvergenceResult}  the main convergence results are  presented. In Section \bref{sec:ActualCompute} a few  theoretical and practical issues related to the algorithm for computing zone and double zone diagrams are discussed, among them the issue of time complexity and some implementation details regarding the computation of the corresponding  iterations.  The topological properties of Voronoi cells which follow as a byproduct of our approach are described in Section  \bref{sec:TopologicalProperties}.  The proofs of the convergence theorems and other claims are  given in Section \bref{sec:proofs}. We conclude the paper in Section \bref{sec:end} with a discussion on lines for further investigation, interesting open questions, and unexplained phenomena.

\section{Notations and a few definitions}\label{sec:definitions}

Throughout the text we will make use of tuples, the components of which are sets (which are subsets of a given nonempty set $X$). Every operation or relation between such tuples, or on a single tuple, is done component-wise. Hence, for example, if $K\neq \emptyset$ is a set of indices, and if $R=(R_k)_{k\in K}$ and $S=(S_k)_{k\in K}$ are two tuples  of sets, then $R\bigcap S=(R_k\cap S_k)_{k\in K}$,  $\overline{R}=(\overline{R_k})_{k\in K}$, and $R\subseteq S$ means $R_k\subseteq S_k$ for each $k\in K$. When $R$ is the tuple
$(R_k)_{k\in K}$, the  notation $(R)_k$ means the $k$-th component of $R$, i.e, $(R)_k=R_k$. If $S$ is a finite set, then we denote by $|S|$ the number of elements in $S$.

\begin{defin}\label{def:dom}
Let $(X,d)$ be a metric space. Given two nonempty subsets $P,A\subseteq X$, the dominance region
$\dom(P,A)$ of $P$ with respect to $A$ is the set of all $x\in X$
closer (not necessarily strictly) to $P$ than to $A$, i.e.,
$\dom(P,A):=\{x\in X: d(x,P)\leq d(x,A)\}$, where $d(x,A):=\inf\{d(x,a): a\in A\}$. The bisector (or equidistant set) between $P$ and $A$ is the (possibly empty) set $\bisect(P,A):=\{x\in X: d(x,P)=d(x,A)\}$.
\end{defin}
\begin{defin}\label{def:Voronoi}
Let $(X,d)$ be a metric space. Let $K$ be a set of at least two elements (indices), possibly
infinite. Given a  tuple $(P_k)_{k\in K}$ of nonempty subsets
$P_k\subseteq X$, called the generators or the sites, the Voronoi diagram  induced by this tuple is the tuple $(R_k)_{k\in K}$ of nonempty subsets
$R_k\subseteq X$ which satisfy, for all $k\in K$,
\begin{equation*}
R_k=\dom\Bigl(P_k,{\bigcup_{j\neq k} P_j}\Bigr)=\{x\in X: d(x,P_k)\leq d(x,P_j)\,\,\forall j\neq k,\,j\in K \}.
\end{equation*}
 In other words, each $R_k$, called a Voronoi cell or a Voronoi region, is the set of
all $x\in X$ whose distance to the site $P_k$ is not greater than
their distance to any other site $P_j$, $j\neq k$.
\end{defin}
\begin{defin}\label{def:zone}
Let $(X,d)$ be a metric space. Let $K$ be a set of at least two elements (indices), possibly
infinite. Given a  tuple $(P_k)_{k\in K}$ of nonempty subsets
$P_k\subseteq X$, a zone diagram with respect to
that tuple is a tuple $R=(R_k)_{k\in K}$ of nonempty subsets
$R_k\subseteq X$ which satisfy
\begin{equation*}
R_k=\dom\Bigl(P_k,\bigcup_{j\neq k} R_j\Bigr)\quad \forall k\in K.
\end{equation*}
In other words, if one defines $X_k:=\{C: P_k\subseteq C\subseteq X\}$, then a zone diagram
is a fixed point of the mapping $\Dom\colon \prod_{k\in K}
X_k\to \prod_{k\in K}
X_k$, defined by
\begin{equation}\label{eq:TZoneDef}
\Dom(R):=\Bigl(\dom\Bigl(P_k,\bigcup_{j\neq k} R_j\Bigr)\Bigr)_{k\in K},
\end{equation}
that is, $R=\Dom(R)$. A tuple $R=(R_k)_{k\in K}$ is called a double zone diagram if it is a fixed point of the second iteration $\Dom\circ \Dom$, i.e., $R=\Dom^2(R)$.
\end{defin}
If $R=(R_k)_{k\in K}$ is a zone diagram, then $R$ is a double zone diagram
(this can be seen by applying $\Dom$ to the equation  $R=\Dom(R)$). In addition, $P_k\subseteq R_k$ for all $k\in K$ as follows the fact that $P\subseteq \dom(P,A)$ and from Definition \bref{def:zone}.

\begin{defin}\label{def:proper}
A metric space $(X,d)$ is called proper, or finitely compact, if closed balls in $X$ are compact, or, equivalently,
 any bounded sequence has a convergent subsequence.
\end{defin}
Typical examples of proper metric spaces are finite dimensional normed spaces, finite dimensional manifolds, and compact metric spaces.

\begin{defin}\label{def:Hausdorff}
Given two nonempty sets $A_1,A_2$ in a metric space $(X,d)$, the Hausdorff distance between them is defined by
\begin{equation*}
D(A_1,A_2):=\max{\bigl\{\sup_{a_1\in A_1}d(a_1,A_2),\sup_{a_2\in A_2}d(a_2,A_1)\bigr\}}.
\end{equation*}
\end{defin}
Of course, the Hausdorff distance is different from the ``usual'' distance
\begin{equation*}
d(A_1,A_2):=\inf{\{d(a_1,a_2): a_1\in A_1,\,a_2\in A_2\}}.
\end{equation*}

\section{Geodesic metric spaces of various types}\label{sec:Geodesic}
This section discusses geodesic metric spaces. In particular,
it introduces the class of spaces which is central to this paper, namely
geodesic metric spaces which have the geodesic inclusion property.
\begin{defin}\label{def:GeodesicMetric}
Let $(X,d)$ be a metric space.
\begin{enumerate}[(a)]
\item \label{def:GeodesicMetric:GeodesicSegment}
Let $x,y\in S\subseteq X$.  The subset $S$ is
called a geodesic segment (or a metric segment,or, briefly, a segment) between $x$ and $y$  if there exists
 an isometric function $\gamma$ (that is, a distance preserving mapping) which maps
 a  real line segment $[r_1,r_2]$ onto $S$ and satisfies  $\gamma(r_1)=x$ and  $\gamma(r_2)=y$.
 We denote $S=[x,y]_{\gamma}$, or simply $S=[x,y]$.
 If between all points $x,y\in X$ there exists a geodesic segment,
 then $(X,d)$ is called  a  geodesic metric space. The sets $[x,y)=[x,y]\backslash\{y\}$,
 $(x,y]=[x,y]\backslash\{x\}$, and $(x,y)=[x,y]\backslash\{x,y\}$ represent the half-open segments and open
 segments respectively.
\item\label{def:GeodesicMetric:GeodesicInclusion} A geodesic metric space $(X,d)$ is said to have
the geodesic inclusion property if
a nontrivial intersection (namely, an intersection which contains at least two different points) between two different geodesics can have no strict bifurcation point of the form ``$\alpha$''. More precisely, the following holds: given $u,v,b,z\in X$, if  $b\in [u,z)_{\gamma_1}$ and $b\in [v,z)_{\gamma_2}$ (here $b$ is the candidate to be the ``$\alpha$''-like bifurcation point, i.e., in the intersection of the arcs of the ``$\alpha$'' sign), then either $u\in [v,z]_{\gamma_2}$ or $v\in [u,z]_{\gamma_1}$.
\end{enumerate}
\end{defin}

An illustration of Definition \bref{def:GeodesicMetric}\,\beqref{def:GeodesicMetric:GeodesicInclusion} is
given in Figure \bref{fig:GeodesicInclusion}. Note that a simple consequence of the definition is
that actually either $u\in [v,b]_{\gamma_2}$ or $v\in [u,b]_{\gamma_1}$, because if for
instance $u\in (b,z]_{\gamma_2}$, then this and $b\in [u,z]_{\gamma_1}$ imply $d(u,z)<d(b,z)\leq d(u,z)$, a contradiction.

\begin{figure}
\begin{minipage}[t]{1\textwidth}
\begin{center}
{\includegraphics[trim=40 690 0 0, clip=true, scale=0.9]{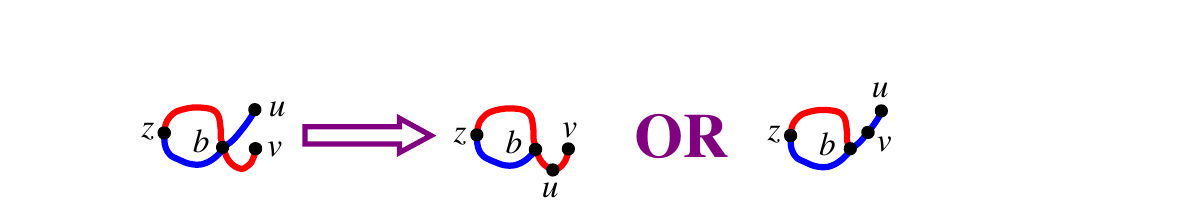}}
\end{center}
 \caption{An illustration of the geodesic inclusion property (Definition \bref{def:GeodesicMetric}\,\beqref{def:GeodesicMetric:GeodesicInclusion}).}
\label{fig:GeodesicInclusion}
\end{minipage}
\end{figure}

\begin{expl}
 Simple and familiar examples of geodesic metric spaces are: the Euclidean plane,
 any nonempty convex subset of a normed space [the geodesic segments are line segments, with $\gamma(t):=x+t(y-x)/|y-x|$, $t\in [0,|x-y|]$ as a possible isometric mapping between $x\neq y$, but when the norm is not strictly
 convex, then other types of geodesic segments exist, e.g., ``zigzag'' ones:
 for instance, $([0,1]\times \{0\})\cup (\{1\}\times [0,1])$ connecting
 $x=(0,0)$ and  $y=(1,1)$ in $(\R^2,\ell_1)$; see also Example \bref{ex:EuclidSphereStrictlyConvex} and Proposition \bref{prop:GeodesicInclusionStrictlyConvex} below],
 Euclidean spheres (a geodesic segment between two points is the shortest arc on a great circle on which the
points are located), complete  Riemannian manifolds \cite[Theorem 1.7.1, p.\,35]{Jost2011book} (in particular, compact ones), hyperbolic spaces \cite[pp.\,538--539]{ReichShafrir},  CAT(0) spaces and Hadamard spaces \cite[pp.\,5--6, Definitions 1.2.1, 1.2.3]{Bacak2014book}.
\end{expl}
\begin{expl}\label{ex:EuclidSphereStrictlyConvex}
The geodesic inclusion property holds for Euclidean
spheres of any dimension (because a nontrivial intersection between two geodesics, namely an at-least-two-points-intersection between short arcs of great circles, can happen only at either two opposite points (opposite poles) or when the geodesics
are identical; but then the points $u$ and $v$ from Definition \bref{def:GeodesicMetric}\,\beqref{def:GeodesicMetric:GeodesicInclusion} coincide).
As shown in Proposition \bref{prop:GeodesicInclusionStrictlyConvex} below, this property also holds for (finite- and infinite-dimensional) strictly convex normed spaces, as well as for non-trivial convex subsets of these spaces; in fact, this proposition shows that a normed space $(X,|\cdot|)$ is strictly convex if and only if it has the geodesic inclusion property. Strictly convex normed spaces are very common in applications. They are characterized by the property that $|x+y|<|x|+|y|$ holds
whenever $x$ and $y$ are arbitrary elements in the space which are  not on the
 same ray (i.e., $x\neq 0, y\neq 0$, and $x/|x|\neq y/|y|$).
 Equivalently,  the unit sphere of the space does not contain any line segment.
 In particular, given $t\in\N$, if we endow $\R^t$ with the $\ell_p$ norm, $p\in (1,\infty)$, then we obtain a space which is both
 strictly convex and proper; $\R^t$ with the $\ell_1$ or $\ell_{\infty}$ norms is
 a typical example of a normed space which is not strictly convex (see  \cite{LindenTzafriri, Prus2001} for more details).
\end{expl}

\section{A qualitative description of the algorithm}\label{sec:QualitativeDescription}
As mentioned before, a tuple $R=(R_k)_{k\in K}$ of regions (subsets) is a zone diagram if it satisfies the fixed point equation $R=\Dom(R)$. A common and natural approach in fixed point theory for the computation of a fixed point of a given mapping $f$ is to use iterations  \cite{GranasDugondji,HB}. More precisely, one starts with some point $y_0$ in the space $Y$ on which $f$  is defined, and starts iterating $f$. A sequence $y_1=f(y_0), \ldots, y_{i+1}=f(y_i),\ldots$ is generated, and one hopes that it converges in some sense to a fixed point of $f$. In general convergence is not guaranteed (just take a point
on the unit circle and apply iteratively a rotation operator on it), but under some assumptions on the mapping and the space it is possible to prove appropriate convergence results.

Returning to our setting, the given mapping is the $\Dom$ mapping. The given  space on which it is defined is $Y:=\prod_{k\in K}
X_k$, where $X_k:=\{C: P_k\subseteq C\subseteq X\}$ for all $k\in K$. In other words, $Y$ is the collection of all tuples (vectors) whose $k$-th component is an arbitrary subset $C$ of the given world $X$ such that $C$ contains the site $P_k$. Hence a natural choice for the starting point $y_0\in Y$ is the collection $(P_k)_{k\in K}$ of the given sites. This discussion leads to the following algorithm.

\begin{alg}\label{alg:InOn}$\,$\\
{\bf\noindent Initialization:} Let $I^{(0)}:=(P_k)_{k\in K}$ and $O^{(0)}:=\Dom (I^{(0)})$.\\
{\bf \noindent Iterative step:} For each $n\in\N$ define $I^{(n)}:=\Dom(O^{(n-1)})$ and $O^{(n)}:=\Dom (I^{(n)})$.
\end{alg}
We refer to $(I^{(n)})_{n=0}^{\infty}$ as the \emph {inner approximation sequence} and to $(O^{(n)})_{n=0}^{\infty}$ as the \emph {outer approximation sequence}. As observed  in  \cite{AMTn}, and a small part also in \cite{Reem2010PhD,ReemReichZone}, it can be shown that $(I^{(n)})_{n=0}^{\infty}$ is an increasing sequence, that
$(O^{(n)})_{n=0}^{\infty}$ is a decreasing sequence, that $I^{(n)}\subseteq O^{(n)}$ for all $n\in \N\cup\{0\}$, and finally, that if $R$ is a zone or a double done diagram, then $I^{(n)}\subseteq R\subseteq O^{(n)}$ for all $n\in \N\cup\{0\}$ (see Lemma
\bref{lem:DomPk} below for a proof). Hence the inner approximation sequence approximates any zone or a double zone diagram from inside (namely, from below), and the outer approximation sequence approximates them from  outside (namely, from above). Figures \bref{fig:Zone2D-10Sites-1inSite-L2-UnitSphere001-It1-EndpointsConnect}--\bref{fig:Zone2D-10Sites-1inSite-L2-UnitSphere001-It4-EndpointsConnect} show $I^{(n)}$ and $O^{(n)}$ (actually, their boundaries) for all $n\in \{1,2,3,4\}$, in a square in $(\R^2,\ell_2)$. The sites are as in Figure \bref{fig:IntroZone} and thus these figures approximate
the zone diagram of Figure \bref{fig:IntroZone}. Other examples illustrating (the boundaries of) $I^{(n)}$ and $O^{(n)}$ can be found in Figures
\bref{fig:Zone2D-10Sites-1inSite-L1-UnitSphere005-It2-Endpoints}--\bref{fig:Zone2D-10Sites-1inSite-L1-UnitSphere005-It3-EndpointsConnect} and \bref{fig:ZoneLinfty}.

\begin{figure}
\begin{minipage}[t]{0.45\textwidth}
\begin{center}
{\includegraphics[scale=0.5]{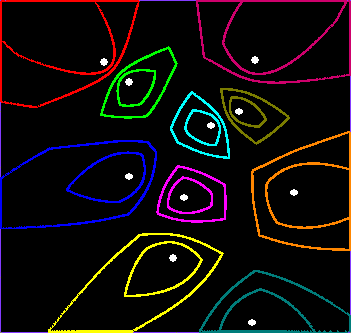}}
\end{center}
 \caption{Approximation of the zone diagram of Figure  \bref{fig:IntroZone} using $I^{(1)}$
 and $O^{(1)}$ (800 endpoints approximate the boundary of each region, and neighbor endpoints are connected by a line segment).}
\label{fig:Zone2D-10Sites-1inSite-L2-UnitSphere001-It1-EndpointsConnect}
\end{minipage}
\begin{minipage}[t]{0.45\textwidth}
\begin{center}
{\includegraphics[scale=0.5]{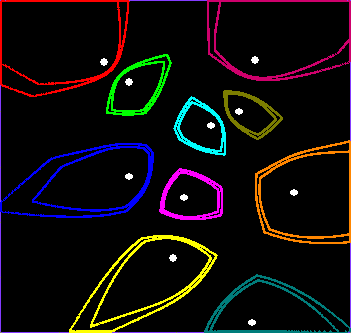}}
\end{center}
 \caption{The setting of Figure \bref{fig:Zone2D-10Sites-1inSite-L2-UnitSphere001-It1-EndpointsConnect}, but with $I^{(2)}$ and $O^{(2)}$. }
\label{fig:Zone2D-10Sites-1inSite-L2-UnitSphere001-It2-EndpointsConnect}
\end{minipage}
\hfill
\begin{minipage}[t]{0.45\textwidth}
\begin{center}
{\includegraphics[scale=0.5]{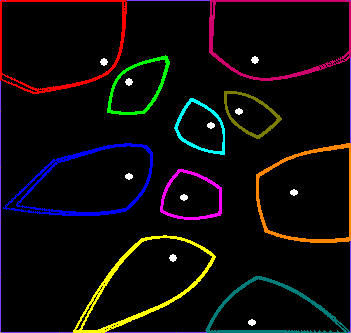}}
\end{center}
 \caption{The setting of Figure \bref{fig:Zone2D-10Sites-1inSite-L2-UnitSphere001-It1-EndpointsConnect}, but with $I^{(3)}$ and $O^{(3)}$. }
\label{fig:Zone2D-10Sites-1inSite-L2-UnitSphere001-It3-EndpointsConnect}
\end{minipage}
\begin{minipage}[t]{0.45\textwidth}
\begin{center}
{\includegraphics[scale=0.5]{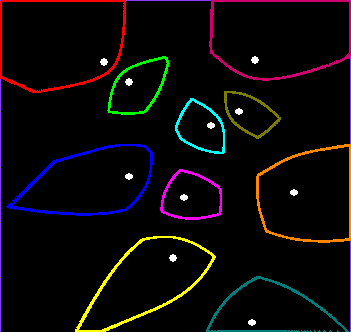}}
\end{center}
 \caption{The setting of Figure \bref{fig:Zone2D-10Sites-1inSite-L2-UnitSphere001-It1-EndpointsConnect}, but with $I^{(4)}$ and $O^{(4)}$. }
\label{fig:Zone2D-10Sites-1inSite-L2-UnitSphere001-It4-EndpointsConnect}
\end{minipage}

\end{figure}

At this stage several difficulties arise. First, it is not clear that these sequences converge, and if they do converge, it is not clear whether both of them converge to the same limit and whether the limit of at least one of them is indeed a zone diagram (as Theorem \bref{thm:AlgUniConvex} below shows, under certain conditions the limits are actually the least and the greatest double zone diagrams).  Second, even if a convergence result is obtained, one faces the problem of the actual computation of $I^{(n)}$ and $O^{(n)}$ for $n\geq 1$. Even for $n=0$ this is not a simple task since $O^{(0)}=\Dom\bigl(I^{(0)}\bigr)=\bigl(\dom\bigl(P_k,\bigcup_{j\neq k}P_j\bigr)\bigr)_{k\in K}$ is the Voronoi diagram of the sites $(P_k)_{k\in K}$, and hence, for sites of a general form, or sites which possibly form a degenerate configuration, or for a space with a general norm, one has to find algorithms which  enable the computation of Voronoi diagrams in such a setting, and most of the familiar algorithms for computing Voronoi diagrams (see e.g., \cite{Aurenhammer,AurenhammerKlein,ComputeGeoBook_BCKO,Edelsbrunner-book-1987,Mulmuley1994,OBSC,SharirAgarwal1995book,YapSharmaLien2012inproc} for some  reviews) are not able to achieve this task (or, in some cases,
they are too complicated or too  slow).

Even if one restricts oneself to the familiar case of the Euclidean plane with point sites, one still faces problems starting from the iteration $n=1$. This is because one has to  know how to compute the components of $\Dom(S)$ for a given tuple $S=(S_k)_{k\in K}$, and hence one has to know a method for computing  $\dom\bigl(P_k,\bigcup_{j\neq k}S_j\bigr)$, i.e., the Voronoi cell of $P_k$ with respect to the set $\bigcup_{j\neq k}S_j$. Unfortunately, when $S=I^{(n)}$ or $S=O^{(n)}$, $n\in\N$, the components of $S$ are general sets, and again, it seems that most of the familiar algorithms for computing Voronoi diagrams  are not helpful here. In Section \bref{sec:ActualCompute} below we explain how to overcome these difficulties.

The sequences $(I^{(n)})_{n=0}^{\infty}$ and $(O^{(n)})_{n=0}^{\infty}$ were introduced in \cite{AMTn} in the case of the Euclidean plane with finitely many point sites. It was shown that $\overline{\bigcup_{n=0}^{\infty} I^{(n)}}=\bigcap_{n=0}^{\infty}O^{(n)}$ and that this tuple is the unique zone diagram.
The proof of this result is not at all obvious and it contains several insightful ideas (see also Section
\bref{sec:ConvergenceResult} below). Although this is not a pure  convergence result (no limits) and although
no error estimates for the level of approximation were given, one
still has the pleasant phenomenon that $(I^{(n)})_{n=0}^{\infty}$ increases to a zone diagram and $(O^{(n)})_{n=0}^{\infty}$ decreases to it.  As for the actual computation of $I^{(n)}$ and $O^{(n)}$, we already mentioned in
Section \bref{sec:Intro} that very few details were given in \cite{AMTn}.

\section{The convergence theorems}\label{sec:ConvergenceResult}
As mentioned in Section \bref{sec:QualitativeDescription}, it was proved in \cite{AMTn} that in the case of the Euclidean plane with point sites one has the equality $m:=\overline{\bigcup_{n=0}^{\infty}I^{(n)}}=\bigcap_{n=0}^{\infty}O^{(n)}:=M$, and $m=M$ is the unique zone diagram. As Theorem \bref{thm:Metric} below shows, it is always true that $M=\Dom(m)$
in any metric space and for arbitrary sites. It is not known that $m=\Dom(M)$ in general, but,
as Theorem \bref{thm:AlgUniConvex} shows,  under the assumptions that the space is a proper geodesic metric space with the geodesic inclusion property and the (possibly infinitely many general) sites are positively separated, the equality $m=\Dom(M)$ holds. However, in this latter case both $m$ and $M$ are double zone diagrams rather than  zone diagrams (actually $m$ is the least double zone diagram and $M$ is the greatest one, that is, $m\subseteq R\subseteq M$ for every double zone diagram $R$) and hence they are not necessarily equal as Example \bref{ex:mM} below shows.
In addition, Theorem \bref{thm:AlgUniConvex} also discusses another way to obtain a zone diagram in the special case of two sites.

A corollary to the theorem ensures that  when the world $X$ is compact, then $(I^{(n)})_{n=0}^{\infty}$ and $(O^{(n)})_{n=0}^{\infty}$ converge to $m$ and $M$ respectively  with respect to the Hausdorff distance. As a result, since in practice $X$ is taken to be compact (e.g., a large box), if $I^{(n)}$ and $O^{(n)}$ are shown experimentally to be almost the same for some $n$, then one has a good approximation to both $m$ and $M$. When it is known that a zone diagram exists, for instance, under the assumptions of \cite{KopeckaReemReich}, then one has a good approximation to this zone diagram, and in fact this also shows that probably this zone diagram is unique and coincides with both $m$ and $M$.  Another corollary to the above theorem ensures that whenever the least and the greatest double zone diagrams coincide, then $m=M$ and  they both coincide with the unique zone diagram. In particular, this is true when the space is Euclidean.

\begin{thm}\label{thm:Metric}
Let $(X,d)$ be a metric space and let $P=(P_k)_{k\in K}$ be a tuple of nonempty sets in $X$. Consider the inner and outer approximation sequences $(I^{(n)})_{n=0}^{\infty}$ and $(O^{(n)})_{n=0}^{\infty}$, respectively (Algorithm \bref{alg:InOn} above) and let
\begin{equation}\label{eq:mM}
M:=\bigcap_{n=0}^{\infty}O^{(n)}, \quad\quad m:=\overline{\bigcup_{n=0}^{\infty}I^{(n)}}.
\end{equation}
Then $M=\Dom(m)$.
\end{thm}

\begin{thm}\label{thm:AlgUniConvex}
Let $(X,d)$ be a proper geodesic metric space which has the geodesic
inclusion property. Suppose that $(P_k)_{k\in K}$ is a given tuple
of closed subsets of $X$ satisfying
\begin{equation}\label{eq:d(P_k,P_j)>0}
 \inf{\{d(P_k,P_j): j,k\in K,\, j\neq k\}}>0.
\end{equation}
Let $m$ and $M$ be defined by \beqref{eq:mM}.
Then $M=\Dom(m)$, $m=\Dom(M)$, and $m$ and $M$ are, respectively, the least  and  greatest double zone diagrams.
In addition, if $|K|=2$, then by letting  $m=(m_1,m_2)$ and $M=(M_1,M_2)$, each of the
pairs $(m_1,M_2)$ and $(M_1,m_2)$ is a zone diagram.
\end{thm}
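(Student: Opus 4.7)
The plan is to note that $M = \Dom(m)$ is supplied directly by Theorem~\ref{thm:Metric}, and to concentrate almost all the work on the companion identity $m = \Dom(M)$; once both are in hand, the double zone diagram statements and the two-site claim drop out quickly from Lemma~\ref{lem:DomPk}. The easy half $m \subseteq \Dom(M)$ is immediate from antimonotonicity: since $M \subseteq O^{(n)}$ for every $n$, $\Dom(M) \supseteq \Dom(O^{(n)}) = I^{(n+1)}$, and since each component of $\Dom(M)$ is a dominance region and hence closed (the distance to a set is continuous), one may pass to unions and closures on the right to obtain $\Dom(M) \supseteq \overline{\bigcup_n I^{(n)}} = m$.

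For the reverse inclusion I would argue by contradiction. Suppose $x \in \Dom(M)_k$ but $x \notin m_k$. Then $d(x,P_k) > 0$ since $P_k \subseteq m_k$, and by properness together with closedness of $P_k$ one can pick $p \in P_k$ with $d(x,p) = d(x,P_k)$. Fix a geodesic $\gamma$ from $p$ to $x$ and set $x_t = \gamma(t)$ for $t \in [0,d(x,p)]$. Triangle inequality computations give $d(x_t,P_k) = t$ and, for every $j \neq k$, $d(x_t,M_j) \geq d(x,M_j)-(d(x,p)-t) \geq t$, so each $x_t$ also lies in $\Dom(M)_k$. The \emph{core claim} is that for every $t \in (0,d(x,p))$ there exists $n$ with $x_t \in I^{(n)}_k$; once this is proved, letting $t \uparrow d(x,p)$ exhibits $x$ as a limit of points of $\bigcup_n I^{(n)}_k$, contradicting $x \notin m_k$ since $m_k$ is closed.

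The core claim is the principal technical obstacle and is where properness, positive separation, and the geodesic inclusion property all come into play. Suppose it fails: then $x_t \notin I^{(n)}_k$ for every $n\geq 1$, so one can choose $j_n \neq k$ and $z_n \in O^{(n-1)}_{j_n}$ with $d(x_t,z_n) < d(x_t,P_k) = t$. Properness extracts a subsequence $z_n \to z^*$. Antimonotonicity applied to $I^{(n)} \supseteq P$ yields $O^{(n-1)}_{j_n} \subseteq O^{(0)}_{j_n}$, so $z_n$ lies in the classical Voronoi cell of $P_{j_n}$; consequently a nearest point $p_n$ of $P_{j_n}$ to $z_n$ is uniformly close to $x_t$, and a short Cauchy/pigeonhole argument exploiting the positive separation hypothesis shows that along a further subsequence the index $j_n$ is constant, equal to some $j^* \neq k$. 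Since $O^{(N)}_{j^*}$ is closed and decreasing in $N$, the limit satisfies $z^* \in \bigcap_N O^{(N)}_{j^*} = M_{j^*}$. The inequality $d(x_t,z^*) \leq t$ together with $d(x_t,M_{j^*}) \geq t$ (from $x_t \in \Dom(M)_k$) forces $d(x_t,z^*) = t$, which in turn yields $d(x,z^*) = d(x,p)$ and realises $x_t$ as an interior common point of $\gamma$ and of some geodesic $\alpha$ from $z^*$ to $x$. Applying the geodesic inclusion property with $u=p$, $v=z^*$, $z=x$, $b=x_t$ yields either $p \in [z^*,x]_\alpha$ or $z^* \in [p,x]_\gamma$; substituting the distance equalities above shows that in either case $z^* = p$. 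But then $p \in P_k \cap M_{j^*}$, and since $M_{j^*} = \Dom(m)_{j^*}$ with $p \in m_k$ and $k \neq j^*$, the defining inequality of $M_{j^*}$ forces $d(p,P_{j^*}) \leq d(p,m_k) = 0$, so $p \in P_{j^*}$, contradicting the positive separation $d(P_k,P_{j^*}) > 0$. This proves the core claim and hence $m = \Dom(M)$.

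The remaining conclusions are now formal. From $m = \Dom(M)$ and $M = \Dom(m)$ one has $\Dom^2(m) = m$ and $\Dom^2(M) = M$, so both are double zone diagrams. For any double zone diagram $R$, Lemma~\ref{lem:DomPk} gives $I^{(n)} \subseteq R \subseteq O^{(n)}$ componentwise; each $R_k$ is a dominance region hence closed, and taking closures of unions on the left and intersections on the right yields $m \subseteq R \subseteq M$, so $m$ and $M$ are the least and greatest double zone diagrams. When $|K| = 2$ the map $\Dom$ reduces to $\Dom(R)_1 = \dom(P_1,R_2)$ and $\Dom(R)_2 = \dom(P_2,R_1)$, so reading off the components of $m = \Dom(M)$ and $M = \Dom(m)$ shows that both $(m_1,M_2)$ and $(M_1,m_2)$ satisfy the zone diagram fixed-point equations.
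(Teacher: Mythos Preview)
Your proof is correct, and it takes a genuinely different route from the paper's. The paper establishes $m=\Dom(M)$ by building an abstract machine: it first proves Proposition~\ref{prop:StrictInt} (the closure representation $\dom(P,A)=\overline{\{d(\cdot,P)<d(\cdot,A)\}}$, whose proof via Proposition~\ref{prop:StrictSegment} is where the geodesic inclusion property enters), then an ``anti-continuity'' lemma (Lemma~\ref{lem:AntiIntersec_dom}) turning that representation plus a distance--limit condition into the identity $\dom(P,\bigcap_\gamma C_\gamma)=\overline{\bigcup_\gamma\dom(P,C_\gamma)}$, and finally Proposition~\ref{prop:mDomM}, which verifies the hypotheses of that lemma for $C_\gamma=\bigcup_{j\neq k}(O^{(\gamma)})_j$ using the separation Lemma~\ref{lem:Dom3} and the union--intersection interchange \eqref{eq:UnionIntersect}. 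You bypass all of that: you take $x\in\Dom(M)_k\setminus m_k$, run along the geodesic $[p,x]$ to a nearest $p\in P_k$, and argue directly that each interior point $x_t$ lies in some $I^{(n)}_k$; the contradiction step extracts a limit $z^*\in M_{j^*}$ with $d(x_t,z^*)=t$ and then invokes the geodesic inclusion property \emph{once} to force $z^*=p$. Your pigeonhole argument fixing $j^*$ plays the role of Lemma~\ref{lem:Dom3} and \eqref{eq:UnionIntersect}, and your compactness extraction of $z^*$ replaces Lemma~\ref{lem:DistIntersect}. The trade-off: your argument is shorter and more self-contained for this theorem alone, while the paper's decomposition isolates Proposition~\ref{prop:StrictInt} and Proposition~\ref{prop:BoundaryInterior} as results of independent interest (Section~\ref{sec:TopologicalProperties}). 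Two small remarks: the inclusion $O^{(n-1)}\subseteq O^{(0)}$ you use follows directly from Lemma~\ref{lem:DomPk}(b) (your phrasing ``antimonotonicity applied to $I^{(n)}\supseteq P$'' is slightly off but the conclusion is correct), and in the final contradiction you could equally well use $M_{j^*}\subseteq O^{(0)}_{j^*}$ instead of $M_{j^*}=\Dom(m)_{j^*}$, avoiding the appeal to Theorem~\ref{thm:Metric} at that point.
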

\begin{cor}\label{cor:CompactLimit}
Under the setting of Theorem \bref{thm:AlgUniConvex}, let
$M=(M_k)_{k\in K}$ and $m=(m_k)_{k\in K}$ be defined by \beqref{eq:mM}. Assume also
that $(X,d)$ is compact. Then, with respect to the Hausdorff distance,
\begin{equation}\label{eq:Mm_k}
M_k=\lim_{n\to\infty}O^{(n)}_k
\quad \textnormal{and} \quad
m_k=\lim_{n\to\infty}I^{(n)}_k\quad \forall k\in K.
\end{equation}
\end{cor}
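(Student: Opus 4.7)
The corollary is essentially the translation of the monotonicity in Lemma \ref{lem:DomPk} into Hausdorff convergence, using compactness of $X$. Since every operation on tuples is componentwise, the expressions in \eqref{eq:mM} give $M_k=\bigcap_{n\ge 0}(O^{(n)})_k$ and $m_k=\overline{\bigcup_{n\ge 0}(I^{(n)})_k}$, with $(O^{(n)})_k=(\Dom^{2n+1}(P))_k$ and $(I^{(n)})_k=(\Dom^{2n}(P))_k$. By Lemma \ref{lem:DomPk}(b), $((O^{(n)})_k)_{n\ge 0}$ is decreasing and $((I^{(n)})_k)_{n\ge 0}$ is increasing. For $n\ge 1$ each $(O^{(n)})_k$ and $(I^{(n)})_k$ is a dominance region, and the sublevel set $\{x:d(x,P)-d(x,A)\le 0\}$ of a continuous function is closed; the case $n=0$ follows since $(I^{(0)})_k=P_k$ is closed by hypothesis. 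Compactness of $X$ then makes every element of both sequences compact.

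The first fact I would invoke is the standard statement that \emph{a decreasing sequence $(A_n)$ of nonempty compact sets in a metric space converges in Hausdorff distance to $A:=\bigcap_n A_n$}. Since $A\subseteq A_n$, only $\sup_{x\in A_n}d(x,A)$ requires control: if it does not tend to zero, choose $x_n\in A_n$ with $d(x_n,A)\ge\varepsilon$, extract by compactness a convergent subsequence $x_{n_j}\to x$, and note that $x\in A_m$ for every $m$ since $A_m$ is closed and contains $x_{n_j}$ for all large $j$; this forces $x\in A$, a contradiction. Applied to $A_n=(O^{(n)})_k$ this yields the first half of \eqref{eq:Mm_k}.

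The second fact I would use is: \emph{in a compact metric space, an increasing sequence $(B_n)$ of subsets converges in Hausdorff distance to $B:=\overline{\bigcup_n B_n}$}. Here $B_n\subseteq B$ is automatic, so it suffices to bound $\sup_{b\in B}d(b,B_n)$. Given $\varepsilon>0$, total boundedness provides a finite $(\varepsilon/2)$-net $\{p_1,\dots,p_N\}\subseteq B$; for each $p_i$ pick some $q_i\in B_{m_i}$ with $d(p_i,q_i)<\varepsilon/2$, which is possible because $p_i\in\overline{\bigcup_m B_m}$. Let $n_0=\max_i m_i$; by monotonicity $q_i\in B_n$ for every $n\ge n_0$ and every $i$, so each $p_i$, and hence every $b\in B$, lies within $\varepsilon$ of $B_n$ for $n\ge n_0$. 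Applied to $B_n=(I^{(n)})_k$ this yields the second half of \eqref{eq:Mm_k}.

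I do not anticipate a serious obstacle; the argument is routine once the component-wise interpretation of \eqref{eq:mM}, the closedness of dominance regions, and the monotonicity in Lemma \ref{lem:DomPk} are in place. The only point that deserves a line of attention is the initial reduction from the tuplewise $\bigcap$ and $\bigcup$ in \eqref{eq:mM} to the componentwise expressions, which is immediate from the convention, fixed in Section \ref{sec:definitions}, that all set operations on tuples are performed component by component.
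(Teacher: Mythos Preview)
Your proposal is correct and follows essentially the same route as the paper: the paper isolates the two monotone-convergence facts you prove inline as a separate lemma (Lemma~\ref{lem:HausdorffUnionIntersection}) and then simply invokes it together with Lemma~\ref{lem:DomPk}. The only cosmetic difference is that for the increasing sequence the paper argues by contradiction via a convergent subsequence rather than via a finite $\varepsilon/2$-net, but both arguments are standard and yield the same conclusion.
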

\begin{cor}\label{cor:ZoneDoubleZone}
Under the setting of Theorem \bref{thm:AlgUniConvex}, if it is known that
 the least and the greatest double zone diagrams coincide, then $m=M$ and they both coincide with the unique
  zone  diagram. In particular this is true when $(X,d)$ is a finite dimensional Euclidean space.
\end{cor}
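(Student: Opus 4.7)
The plan is to first establish the main implication from the stated hypothesis, and then to verify the hypothesis itself in the finite dimensional Euclidean case. For the main implication I would invoke Theorem~\ref{thm:AlgUniConvex}, which identifies $m$ and $M$ as, respectively, the least and the greatest double zone diagrams, and crucially also furnishes the explicit identities $M = \Dom(m)$ and $m = \Dom(M)$. The hypothesis says that these two extreme double zone diagrams coincide, so $m = M$; substituting this equality into either identity yields $\Dom(m) = m$, so $m$ is a fixed point of $\Dom$, i.e., a zone diagram.

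For uniqueness, let $R$ be any zone diagram. By the observation immediately following Definition~\ref{def:zone}, $R$ is automatically a double zone diagram, and hence $m \subseteq R \subseteq M$; combined with $m = M$, this forces $R = m = M$, which establishes the uniqueness assertion and completes the general implication.

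To verify that the hypothesis holds when $(X,d)$ is a finite dimensional Euclidean space, I would note first that such a space is proper and strictly convex, so by the discussion in Section~\ref{sec:Geodesic} it has the geodesic inclusion property, and hence Theorem~\ref{thm:AlgUniConvex} applies. I would then appeal to \cite{KMT2012}, which establishes existence and uniqueness of the zone diagram in finite dimensional normed spaces that are both strictly convex and smooth (the Euclidean case being a particular instance), under the positive separation condition~\eqref{eq:d(P_k,P_j)>0}. The main obstacle is that uniqueness of the zone diagram is a priori weaker than coincidence of the least and the greatest double zone diagrams, so a bridging argument is needed. To close this gap I would either extract from the proof in \cite{KMT2012}, or from the original Asano--Matou{\v{s}}ek--Tokuyama analysis in the planar point-site case \cite{AMTn}, that the stronger conclusion $m = M$ is in fact what is proved there, or attempt a direct argument exploiting the sandwich $m \subseteq R^* \subseteq M$ around the unique zone diagram $R^*$ together with the identities $M = \Dom(m)$ and $m = \Dom(M)$ and the strict convexity of the Euclidean norm.
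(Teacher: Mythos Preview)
Your argument for the general implication is correct and in fact slightly more self-contained than the paper's: where you derive $m=M\Rightarrow \Dom(m)=m$ and uniqueness directly from Theorem~\ref{thm:AlgUniConvex} and the sandwich $m\subseteq R\subseteq M$, the paper instead cites \cite[Corollary~6.2]{ReemReichZone} for the statement that coincidence of the least and greatest double zone diagrams forces a unique zone diagram equal to both. Both routes are equivalent here.

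For the Euclidean case, your diagnosis of the obstacle is exactly right, and the paper resolves it precisely via your first proposed option: it appeals not to the \emph{statement} of \cite[Theorem~1.1]{KMT2012} (uniqueness of the zone diagram) but to its \emph{proof}, where it is actually established that the least and greatest double zone diagrams coincide. So you may simply commit to that option; your fallback ``direct argument'' from the sandwich and strict convexity alone is not needed (and it is not clear it would succeed without essentially reproving the KMT result).
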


The proof of the above-mentioned theorems and corollaries is quite long and technical and it can be found in Section \bref{sec:proofs} below.
The proof of  Corollary \bref{cor:CompactLimit} follows from a quite general argument not related to zone diagrams. As already mentioned, the proof given in \cite{AMTn} for the 2D Euclidean case of point sites is
far from being obvious and it contains certain useful ideas which can be modified to the
setting considered here. In particular, this is true for \cite[Lemma 3.1, Lemma 5.1]{AMTn}.
However, the generalization of some of the arguments given there to the setting of this paper is definitely
not immediate and one has to pay attention to certain subtle points, among them the verification of the equality
\begin{equation}\label{eq:UnionIntersectBody}
\bigcup_{j\neq  k}\bigcap_{\gamma=1}^{\infty}\,(\Dom^{2\gamma+1}(P_k)_{k\in K})_j=\bigcap_{\gamma=1}^{\infty}\bigcup_{j\neq k}\, (\Dom^{2\gamma+1}(P_k)_{k\in K})_j
\end{equation}
in the setting discussed in this paper (this equality is implicit in the proof of \cite[Lemma 5.1]{AMTn} but its derivation there is based on the setting discussed there).
One may wonder regarding additional differences between the proof in the setting of \cite{AMTn}
and the one considered here. Essentially, the main tools established here
are the derivation of certain properties of $\dom$ and $\Dom$ (e.g., Lemma \bref{lem:Dom3} below), all of them seem to be new but some of them (see below) generalize known results.
We feel that when known results are generalized, the approach given here either illuminates
certain implicit or explicit key points discussed elsewhere or establishes new tools not
discussed elsewhere. Again, even in the case where a generalization is made,
because of the general setting considered here there are several difficulties and
subtle points that should be handled correctly and one example was mentioned
above regarding \beqref{eq:UnionIntersectBody}. One may also wonder whether it is
possible to simplify the proof in the specific case of the Euclidean plane (or at least
for Euclidean spaces). Unfortunately, it seems that the answer is no, partly because (as mentioned
above) already the Euclidean proof is not simple. But this can be regarded as an advantage
because it shows that the arguments given here use only certain mild but important properties
which certain spaces have, among them Euclidean spaces.

As a final remark regarding this issue, it may be interesting to note that there are
some connections between some auxiliary tools used here and in  \cite{ImaiKawamuraMatousekReemTokuyamaCGTA},
e.g., \cite[Proposition 5]{ImaiKawamuraMatousekReemTokuyamaCGTA}. In addition, the setting of the existence assertion in
\cite{ImaiKawamuraMatousekReemTokuyamaCGTA} (but not of the computational part, namely
\cite[Section 4]{ImaiKawamuraMatousekReemTokuyamaCGTA}, in which the setting is a finite dimensional strictly convex normed space) is proper geodesic metric spaces. Nevertheless
 the proofs of \cite[Proposition 5]{ImaiKawamuraMatousekReemTokuyamaCGTA} and the assertions
 mentioned above (e.g., Theorem \bref{thm:AlgUniConvex}) are different since the involved mappings are not the same  and hence one needs to find a strategy which is appropriate for each case separately. An important feature used in the proof of Theorem \bref{thm:AlgUniConvex} is the fact that the sites are positively separated (this also allows us to consider infinitely many sites) while in  \cite[Proposition 5]{ImaiKawamuraMatousekReemTokuyamaCGTA} the sites satisfy the more general condition of being merely disjoint, but now one must consider only two sites (by definition) and the mapping must have finitely many components. In addition, it seems that some of the auxiliary
assertions described in \cite[Section 4]{ImaiKawamuraMatousekReemTokuyamaCGTA} can
actually be generalized to proper metric spaces having the geodesic inclusion property using  some of the tools mentioned in Section  \bref{sec:proofs} below (see also Section \bref{sec:TopologicalProperties} below).

\begin{figure}
\begin{minipage}[htb]{0.49\textwidth}
\begin{center}
{\includegraphics[scale=0.51]{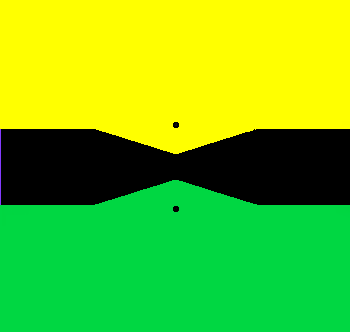}}
\end{center}
 \caption{The components of the greatest double zone diagram $M$ mentioned in Example \bref{ex:mM}.}
\label{fig:GreatestDoubleZoneStrangeNorm}
\end{minipage}
\begin{minipage}[htb]{0.49\textwidth}
\begin{center}
{\includegraphics[scale=0.51]{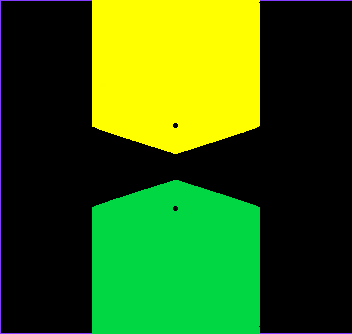}}
\end{center}
 \caption{The components of the least double zone diagram $m$ mentioned in Example \bref{ex:mM}.}
\label{fig:LeastDoubleZoneStrangeNorm}
\end{minipage}
\hfill
\begin{minipage}[t]{0.45\textwidth}
\begin{center}
{\includegraphics[scale=0.51]{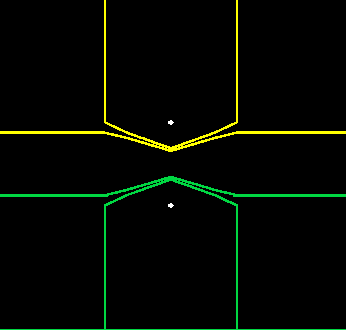}}
\end{center}
 \caption{$I^{(2)}$ and $O^{(2)}$ for the double zone diagrams mentioned in Example \bref{ex:mM} (where neighbor endpoints are connected by solid lines).}
\label{fig:ZD-Iteration2-StrangeNorm-2Sites-0002}
\end{minipage}
\begin{minipage}[t]{0.48\textwidth}
\begin{center}
{\includegraphics[scale=0.5]{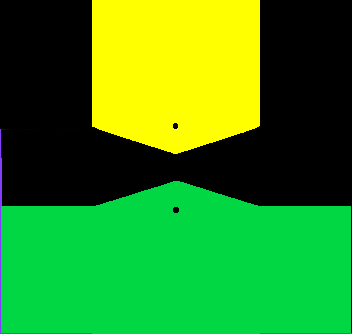}}
\end{center}
 \caption{The zone diagram $(m_1,M_2)$ mentioned in Example \bref{ex:mM}.}
\label{fig:ZD-StrangeNorm-2Sites-0002}
\end{minipage}
\end{figure}

We finish this section by discussing briefly an example which illustrates the phenomenon of non-uniqueness of zone and double zone diagrams.
\begin{expl}\label{ex:mM}
An illustration of Theorem \bref{thm:AlgUniConvex} for the case where $m$ and $M$ from \beqref{eq:mM} satisfy $m\neq M$ is given in Figures \bref{fig:GreatestDoubleZoneStrangeNorm}--\bref{fig:LeastDoubleZoneStrangeNorm}.
Here the sites are $P_1=\{(0,1)\}$ and $P_2=\{(0,-1)\}$. They are located in a rectangle in $\R^2$ endowed with the
following norm:
\begin{equation}\label{eq:alpha delta}
\|(x_1,x_2)\|:=\delta\sqrt{\alpha^2|x_1|^2+|x_2|^2}+(1-\alpha\delta)|x_1|+(1-\delta)|x_2|,\,\, \forall (x_1,x_2)\in\R^2,
\end{equation}
where $\alpha$ and $\delta$ are fixed numbers satisfying $\alpha\in [0,\infty)$, $\delta\in [0,1]$ and $\alpha\delta\in [0,1]$ (it can be checked easily that $\|\cdot\|$ is indeed a norm by separating into cases according to the values of $\alpha$ and $\delta$). In Figures \bref{fig:GreatestDoubleZoneStrangeNorm}--\bref{fig:ZD-StrangeNorm-2Sites-0002} we took $\alpha=\delta=0.1$. The norm \beqref{eq:alpha delta} is a certain mixture between the Euclidean norm (obtained for $\alpha=\delta=1)$ and the $\ell_1$ norm (obtained for $\alpha=0$ or $\delta=0$). When $\alpha,\delta\in (0,1)$, then $\|\cdot\|$ is strictly convex but not smooth (the unit sphere contains points which have several supporting lines).
A closely related example was discussed in \cite[Section 5, Appendix C]{KMT2012} in the context of
non-uniqueness of zone diagrams (double zone diagrams were not mentioned). It follows
from the discussion presented there that indeed $m\neq M$ if $\alpha$ and $\delta$ are sufficiently small. Two different zone diagrams can be obtained in this case: either $(m_1,M_2)$ or $(M_1,m_2)$. The first of them is shown in Figure  \bref{fig:ZD-StrangeNorm-2Sites-0002}.

Figure \bref{fig:ZD-Iteration2-StrangeNorm-2Sites-0002} shows an approximation of $m$ and $M$ using $I^{(2)}$ and $O^{(2)}$. For producing this figure the endpoints of 4000 rays emanating from each site were computed in the Voronoi algorithm stage (see Section \bref{sec:ActualCompute} below for more details about this stage). After this stage neighbor endpoints were connected by a solid line. Without connecting the neighbor endpoints some parts of the boundaries of the components of $I^{(2)}$ and $O^{(2)}$ appear as not being full. This happens  because many rays should be produced in a very small angle (i.e., to belong to the intersection of a very narrow cone with the unit sphere) due to the location of the site with respect to these boundary parts (these parts and the site are almost located on the same line, thus many rays should be produced in directions very close to the two possible directions of the line). Instead of (or complementary to) connecting neighbor endpoints by a line segment, one can produce the rays in a non-uniform way, namely to select unit vectors on the unit sphere of the space in a non-uniform way such that many rays will emanate in the corresponding sector but relatively few in other sectors, in contrast to the current way of producing the rays in a roughly uniform way.
\end{expl}

\section{The actual computation of $I^{(n)}$ and $O^{(n)}$ and related analysis}\label{sec:ActualCompute}

In this section we provide details about practical and theoretical aspects of the computation method. For the sake of convenience, the section is divided into several subsections. A corresponding implementation (including a source code) can be found in \cite{Vdream2017web}. (As a side remark we note that this implementation can do additional tasks of independent interest, such as simulating a crystal growth of the Voronoi cells and, as a byproduct, illustrating the unit ball of the considered normed space.)

\subsection{The computational model}\label{subsec:ComputationalModel} The involved geometrical objects considered here have a complicated and also non-classical nature. For instance, their definition is based on an implicit relation, the inducing sites can be complicated, the space can be of any (possibly high) dimension, the computed regions can have exotic shapes, the boundaries of the computed objects may be non-algebraic even in very simple settings  \cite{MonterdeOngay2014jour} and probably in many more settings, etc. Hence it is hard to
use directly a conventional model such as a linear interpolation model (which is used for approximating  a usually nonlinear curve by a polygonal one) or an algebraic model in which the underlying  structure is algebraic. The approach considered here is based on an approximation model involving shooting rays from the sites in various directions. It can be thought of as being a ``polar model'' or a ``spherical model'' or a ``constant orientation model''. The advantage of this model is that a significant part of the difficulty in the computation is reduced to a one dimensional setting (the ray on which one has to compute the so-called 
``endpoint'': see Subsection \bref{subsec:InOn} below). This reduction enables a simple approximate computation of the regions (of the given iterations $I^{(n)}$ and $O^{(n)}$) and to any required precision, and hence the advancement from the existing theory of zone diagrams is not only the  general setting consider here (e.g., rather general distance functions, general sites) but also a progress in more familiar settings such as the Euclidean plane with point sites (as explained in Subsection \bref{subsec:Background} and in Section \bref{sec:QualitativeDescription} above, it is a challenging task to compute zone diagrams even in this simple setting). Once the above-mentioned model is used, it also enables the use of the familiar polygonal approximation model (i.e., a linear interpolation model) in a natural and simple way: see Subsection \bref{subsec:InOn}  below. In what follows we discuss our model in the case where our world (the geodesic metric space) is a compact subset of a normed space. The reason for doing this is because in this case we have a working implementation \cite{Vdream2017web}, but in principle the details below can be generalized (a related brief and preliminary discussion can be found in \cite[Section 2.6]{Reem2010PhD}).

\subsection{\bf The actual computation of $I^{(n)}$ and $O^{(n)}$:}\label{subsec:InOn}
As mentioned in Section \bref{sec:QualitativeDescription}, in order to compute the iterations $I^{(n)}$ and $O^{(n)}$ one has to know how to compute, or at least to approximate, the dominance regions $\dom(P,A)$ induced by general nonempty sets $P,A$. The way we choose to overcome this difficulty is to use the approximation algorithm for computing Voronoi diagrams of general sites in general normed spaces which was introduced in \cite{ReemISVD09}. In a nutshell, for approximating $\dom(P,A)$ one uses a certain ray-shooting technique, based on the fact that $\dom(P,A)$ can be represented as a union of  rays emanating from the points of $P$
(see e.g., Figures \bref{fig:VoronoiL2718281828}--\bref{fig:ZoneL2718281828}).

\begin{figure}
\begin{minipage}[t]{0.45\textwidth}
\begin{center}
{\includegraphics[scale=0.5]{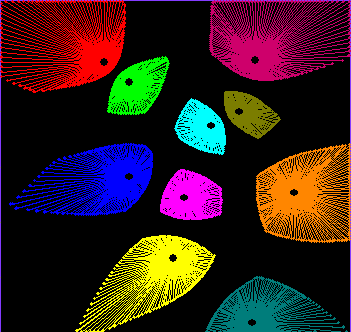}}
\end{center}
 \caption{An approximation of the zone diagram of Figure \bref{fig:IntroZone} (five iterations, 160 rays approximate each region).}
\label{fig:Zone2D-10Sites-1inSite-L2-UnitSphere005-It5-Rays}
\end{minipage}
\begin{minipage}[t]{0.45\textwidth}
\begin{center}
{\includegraphics[scale=0.5]{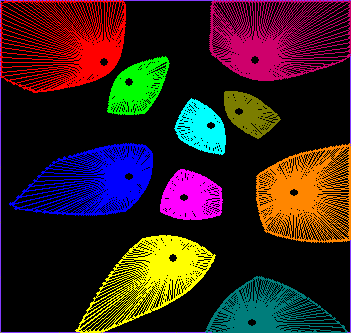}}
\end{center}
 \caption{As in Figure \bref{fig:Zone2D-10Sites-1inSite-L2-UnitSphere005-It5-Rays}, where neighbor endpoints are connected by a line segment.}
\label{fig:Zone2D-10Sites-1inSite-L2-UnitSphere005-It5-RaysEndpoints}
\end{minipage}
\hfill
\begin{minipage}[t]{0.45\textwidth}
\begin{center}
{\includegraphics[scale=0.5]{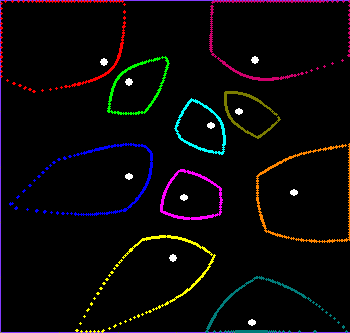}}
\end{center}
 \caption{An approximation of the zone diagram of Figure \bref{fig:IntroZone} (five iterations, 160 endpoints approximate each region).}
\label{fig:Zone2D-10Sites-1inSite-L2-UnitSphere005-It5-Endpoints}
\end{minipage}
\begin{minipage}[t]{0.45\textwidth}
\begin{center}
{\includegraphics[scale=0.5]{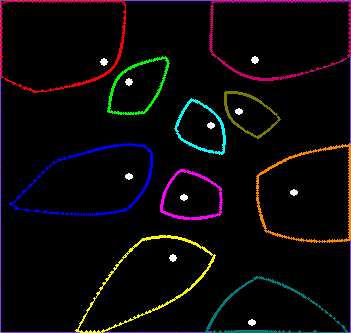}}
\end{center}
 \caption{As in Figure \bref{fig:Zone2D-10Sites-1inSite-L2-UnitSphere005-It5-Endpoints}, where neighbor endpoints are connected by a line segment.}
\label{fig:Zone2D-10Sites-1inSite-L2-UnitSphere005-It5-EndpointsConnect}
\end{minipage}
\end{figure}


First, the world $X$ is assumed to be a large compact subset, e.g., a rectangle or a  hyperbox.  Now one approximates $P$ using a finite collection of points $p$; this is always possible if  $P$ is compact. After that a  finite collection of directions is chosen (that is, a collection of  unit vectors, possibly uniformly distributed on the unit sphere), then one approximates the endpoints of the rays emanating from the points of $P$ in these directions up to any required  precision using the technique described in \cite{ReemISVD09}. At this stage $\dom(P,A)$ is represented by this collection of rays and it is regarded as being computed (of course, this is only an approximation of the real set $\dom(P,A)$). In practice, for each point $p\in P$ one stores the endpoints of the rays emanating from $p$. When producing pictures, one can simply draw the whole ray between each $p\in P$ and each of its endpoints, or one can draw only the endpoints. In dimension 2 one can also draw the endpoints and then connect neighbor endpoints (i.e., endpoints corresponding to neighbor unit vectors) by a line segment so that a polygonal approximation (a mesh) of $\dom(P,A)$ will be obtained  (in higher dimensions one can uses a mesh based on hyperboxes: this is also easily done). This representation is illustrated in various figures, among them Figures \bref{fig:Zone2D-10Sites-1inSite-L2-UnitSphere005-It5-Rays}--\bref{fig:Zone2D-10Sites-1inSite-L2-UnitSphere005-It5-EndpointsConnect}, \bref{fig:VoronoiL2718281828}--\bref{fig:ZoneL2718281828}.  When using this polygonal approximation, one can have a new representation to $\dom(P,A)$ as a collection of points, by replacing the  original endpoints by the centers (or any other intermediate point) of the corresponding polygonal edges which connect two neighbor endpoints.

For computing $I^{(n)}$ and $O^{(n)}$ one  computes their corresponding components iteratively: the components are $\dom(P_k,A_k)$ where $k$ runs over all the indices in $K$ (in practice $K$ is a finite set, that is, $|K|<\infty$) and $A_k$ depends on the iteration and on $k$. In iteration $n+1$ we have $A_k=\bigcup_{j\neq k}(I^{(n)})_k$ or $A_k=\bigcup_{j\neq k}(O^{(n)})_k$ where $(I^{(n)})_k$ and $(O^{(n)})_k$  represent the $k$-th component of $I^{(n)}$ and $O^{(n)}$, respectively, and $n\in\N\cup\{0\}$. For instance, the $k$-th component of $I^{(1)}$ is $\dom(P_k,A_k)$, where $A_k=\bigcup_{j\neq k} (O^{(0)})_k$ and $(O^{(0)})_k=\dom\bigl(P_k,\bigcup_{j\neq k}P_j\bigr)$. Thus, at each iteration we know $A_k$, as a collection of endpoints, from a previous computation. Most of the figures of this paper (in particular, Figures \bref{fig:Zone2D-10Sites-1inSite-L2-UnitSphere001-It1-EndpointsConnect}--\bref{fig:Zone2D-10Sites-1inSite-L2-UnitSphere001-It4-EndpointsConnect}, \bref{fig:Zone2D-10Sites-1inSite-L2-UnitSphere005-It5-Rays}--\bref{fig:Zone2D-10Sites-1inSite-L2-UnitSphere005-It5-EndpointsConnect}, and \bref{fig:Zone2D-10Sites-1inSite-L1-UnitSphere005-It2-Endpoints}--\bref{fig:ZoneLinfty}) were produced using this computational process.
%

\begin{figure}
\begin{minipage}[t]{0.45\textwidth}
\begin{center}
{\includegraphics[scale=0.6]{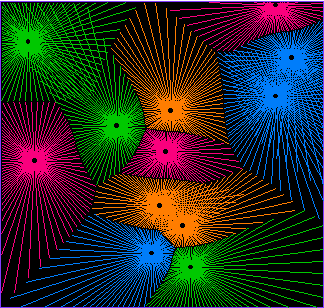}}
\end{center}
 \caption{An approximation of the Voronoi diagram of four sites in $(\R^2,\ell_p)$, where $p\approx 2.71$;
 each site consists of 3 points and 84 rays  emanate from each point of each site.}
\label{fig:VoronoiL2718281828}
\end{minipage}
\hfill
\begin{minipage}[t]{0.45\textwidth}
\begin{center}
{\includegraphics[scale=0.6]{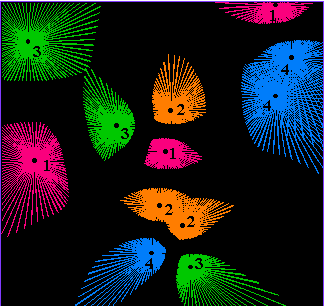}}
\end{center}
 \caption{An approximation of the zone diagram (and hence of the double zone diagram) of the sites of Figure \bref{fig:VoronoiL2718281828}; again 84 rays  emanate from each point of each site.}
\label{fig:ZoneL2718281828}
\end{minipage}
\end{figure}

An important parameter which determines the level of approximation of the regions is the number of rays (or, actually, the density of the corresponding unit vectors in the unit sphere of the space) used in the construction of the dominance regions. In Figures \bref{fig:Zone2D-10Sites-1inSite-L2-UnitSphere001-It1-EndpointsConnect}--\bref{fig:Zone2D-10Sites-1inSite-L2-UnitSphere001-It4-EndpointsConnect} the number of rays used for approximating each region is 800, and only the endpoints are displayed, where neighbor endpoints are connected by a line segment. The number of rays used for approximating each region in Figure \bref{fig:ZD-Iteration2-StrangeNorm-2Sites-0002} is 4000. Again, only endpoints are displayed and neighbor endpoints are connected by a line segment. In Figures \bref{fig:Zone2D-10Sites-1inSite-L2-UnitSphere005-It5-Rays}--\bref{fig:Zone2D-10Sites-1inSite-L2-UnitSphere005-It5-EndpointsConnect} we used 160 rays for approximating each region, and sometimes either the whole ray or only the endpoints are displayed (neighbor endpoints may or may not be connected by a line segment). From each point of each site in Figures  \bref{fig:VoronoiL2718281828}--\bref{fig:ZoneL2718281828} emanate 84 rays, and the whole ray is displayed. Despite the difference in the level of approximation,  the various approximations of the  zone diagram of Figure \bref{fig:IntroZone} (e.g., Figures \bref{fig:Zone2D-10Sites-1inSite-L2-UnitSphere001-It4-EndpointsConnect}, \bref{fig:Zone2D-10Sites-1inSite-L2-UnitSphere005-It5-Rays}--\bref{fig:Zone2D-10Sites-1inSite-L2-UnitSphere005-It5-EndpointsConnect}) yield similar shapes.

\subsection{\bf Time complexity:}\label{sec:complexity}
In this subsection we present the time complexity of the approximation method described in previous subsections. The worst case time complexity is determined by the time complexity associated with the computation of the dominance regions $\dom(P_k,A_k)$, $k\in K$.
Essentially, it is determined by the number of distance comparisons required to compute the
entire number of endpoints. For every $k\in K$ the number of rays emanating from each point of $P_k$ is $C_1C_2^{\rm dim-1}$, where
$\rm dim$ is the dimension, $C_1=C_1(\rm dim)>0$ is a constant depending on the norm (and, through the norm, also on the dimension), and $C_2>0$ is a constant depending on the level of  approximation of the unit sphere  (roughly speaking, $1/C_2$ is a bound on the largest  possible distance between a point on the unit sphere and the set of unit vectors created by the user). Hence the total number of rays emanating from the points of $P_k$ is $C_1C_2^{\rm dim-1}|P_k|$.
Starting from iteration $n=1$, each region is represented by the collection of its endpoints.
Since there are $|K|-1$ regions in $A_k$, namely either $(I^{(n-1)})_j$, $j\in K\backslash \{k\}$ or $(O^{(n-1)})_j$, $j\in K\backslash \{k\}$,  the number of endpoints which represent $A_k$ is
$C_1 C_2^{\rm dim-1}\sum_{j\neq k}|P_j|$. For determining an endpoint up to some user-defined error parameter, one makes $C_3 |A_k|$ distance  comparisons along a given ray (emanated from some point in $P_k$), where $C_3>0$ is a constant depending on the level of approximation.

We conclude from the above-mentioned discussion that the number of distance comparisons done for computing $\dom(P_k,A_k)$ is $C_3|A_k|C_1C_2^{{\rm dim}-1}|P_k|$. Thus the  time complexity for computing all the regions in a given iteration (either an inner approximation or an outer approximation) is $O(\sum_{k=1}^{|K|}C_3|A_k|C_1C_2^{{\rm dim}-1}|P_k|)$, that is, the complexity is $O\bigl(C_3(C_1C_2^{\rm dim-1})^2\sum_{k=1}^{|K|}\bigl(|P_k|\sum_{j\neq k}|P_j|\bigr)\bigr)$ (here, of course, $O(\cdot)$ is the standard big-Oho notation for upper bounds and it is not related to the outer approximation iteration $O^{(n)}$). Now we observe that the same number of computations is done in each iteration and that each iteration consists of two sub-iterations (one for computing the inner approximation and one for the outer approximation). Consequently, the time complexity for computing $I^{(1)}$,$O^{(1)},\ldots, I^{(n)}$, $O^{(n)}$ is $O\bigl(2n\cdot C_3(C_1C_2^{\rm dim-1})^2\sum_{k=1}^{|K|}|P_k|\sum_{j\neq k}|P_j|\bigr)$.

To obtain the total time complexity we need to add to the above-mentioned expression the time needed for computing $I^{(0)}$ and $O^{(0)}$. Since $I^{(0)}$ is simply the collection of sites, there are no distance comparisons done for computing it (only trivial memory operations which are $O\bigl(\sum_{k=1}^{|K|}|P_k|)\bigr)$. Since $O^{(0)}$ is the collection of Voronoi cells of the sites, since each cell is computed by computing endpoints along rays as explained above, since there are $C_1C_2^{\rm dim-1}|P_k|$ rays emanating from each $P_k$ and along each ray we perform $C_3A_k$ distance comparisons, and since in iteration $n=0$ each $A_k$ is represented by the points of $\cup_{j\neq k}P_j$,  the number of distance comparisons needed for calculating $O^{(0)}$ is $O\bigl(\sum_{k=1}^{|K|}\bigl(C_1C_2^{\rm dim-1}|P_k|C_3\sum_{j\neq k}|P_j|\bigr)\bigr)$. To conclude, the total time complexity for computing $I^{(0)}$,$O^{(0)},\ldots, I^{(n)}$, $O^{(n)}$ using the computational model and the ray-shooting algorithm mentioned in Subsections \bref{subsec:ComputationalModel}--\bref{subsec:InOn} above is
\begin{align*}
O\Biggl(\sum_{k=1}^{|K|}|P_k|\Biggr)&+O\Biggl(\sum_{k=1}^{|K|}\biggl(C_1C_2^{\rm dim-1}|P_k|C_3\sum_{j\neq k}|P_j|\biggr)\Biggr)\\
&+O\Biggl(2n\cdot C_3(C_1C_2^{\rm dim-1})^2\sum_{k=1}^{|K|}|P_k|\sum_{j\neq k}|P_j|\Biggr)\\
&\qquad\qquad=O\bigl(|K|G+C_3C_1C_2^{\rm dim-1}G^2(|K|-1)|K|(1+2nC_1C_2^{\rm dim-1})\bigr),
\end{align*}
where $G$ is any number satisfying $G\geq |P_k|$ for all $k\in K$. So essentially the time complexity grows quadratically with respect to the number of regions and number of points in each site, and at least exponentially with respect to the dimension (this growth is nothing but another evidence to the well-known curse of dimensionality phenomenon; other examples related to this phenomenon can be found in \cite[pp.\,93--94,\,99,\,135,\,141,\,282,\,284,\,286,\,297--298,\, 319]{Edelsbrunner-book-1987}, \cite[pp.\,205,\,212,\,231,\,233,\,234,\,236,\,241]{SharirAgarwal1995book}). Note that the numbers $|K|$, $|P_k|$, $k\in K$,  and $G$ (if used) are parameters depending on the input; $\rm dim$ and $C_1(\rm dim)$ are global  parameters; $C_2$ and $C_3$ are parameters depending on the user (the level of approximation). Better performance can be achieved by improved techniques for distance comparisons, in the spirit of \cite[Section 10]{ReemVoronoiParallel2015prep} and \cite[Section 2.4]{Reem2010PhD}. When such techniques are applied on certain configurations of sites, e.g., sites which are points which are uniformly distributed, then it is possible to reduce significantly the number of calculations. Another way to improve the performance is to apply parallel computing techniques. Such techniques are naturally supported in our setting since each region, and in fact, even each endpoint in each region, can be computed independently of the other ones.

\subsection{\bf A few clarifications}
We finish this section by clarifying certain issues which perhaps have not been very clear so far. It should be emphasized that the goal of this section  is to  describe schematically a
practical way for  approximating the regions which appear in each iteration and to roughly evaluate the number of calculations done in the process. While the description given here is not perfect, it is much more detailed than corresponding descriptions in previous works. Full analysis requires full analysis of the algorithm
presented in \cite{ReemISVD09} and this latter analysis is planned to be discussed elsewhere (done in a preliminary form in \cite{Reem2010PhD}). For instance, the fact that the algorithm presented in \cite{ReemISVD09} can approximate a given dominance region up to any desired precision is a consequence of the stability
of this algorithm whose proof is in the spirit of the one given in  \cite{ReemGeometricStabilityArxiv}.

If the users want to approximate (with respect to the Hausdorff distance) a given region up to some error parameter $\epsilon$, then they need to approximate well enough the sites, to choose in advance enough approximating rays (this is determined by the error parameter related
to the unit sphere), and to fix a small enough error parameter for the endpoints of the rays. If the users decide in advance how many iterations they want to perform for approximating the (double) zone diagram and what is the level of approximation of the regions in the final iteration, then they can estimate in advance the number of calculations by iterative ``reverse engineering'': using the target error parameter $\epsilon$, one estimates
the error parameters needed as input for the final iteration, and from them the error parameters needed for the previous iteration, and so on, until the initial iteration. This gives an estimate on the initial error parameters.

In the above description the number of iterations $n$ was chosen by the users but there was no guarantee that the real (double) zone diagram will be approximated well by $I^{(n)}$ or $O^{(n)}$. However, since the algorithm converges to the double zone diagram according to  Theorem \bref{thm:AlgUniConvex} above, then given $\epsilon>0$, there is a number $n_0$, depending only on $\epsilon$, such that for any integer $n\geq n_0$ the regions of $I^{(n)}$ and $O^{(n)}$ will be at Hausdorff  distance of at most $\epsilon$ from the limit regions. Hence, if one can compute $n_0(\epsilon)$, then one can know in advance the corresponding needed initial error parameters. Unfortunately, it is not clear how to estimate $n_0(\epsilon)$ and this is a major open problem.

\section{A byproduct: topological properties of Voronoi cells}\label{sec:TopologicalProperties}
The analysis (Section \bref{sec:proofs} below) leading to the convergence theorems mention in Section \bref{sec:ConvergenceResult} has resulted in several byproducts which we believe are of independent interest. In particular, it is shown in Theorem \bref{thm:BoundaryInterior}
below that under relatively general conditions the boundaries of the Voronoi cells
coincide with the corresponding bisectors, and hence these bisectors cannot be ``fat''.

\begin{thm}\label{thm:BoundaryInterior}
Let $(X,d)$ be a geodesic metric space which has the geodesic inclusion
property. Let $P,A\subseteq X$ be nonempty and disjoint. Suppose that for all $x\in X$
the distances $d(x,P)$ and $d(x,A)$ are attained  (namely, there exist $p\in P$ and $a\in A$ such that $d(x,P)=d(x,p)$ and $d(x,A)=d(x,a)$).
Let $\overline{S}$, $\Int(S)$, $\partial (S)$ be the closure/interior/boundary of $S\subseteq X$, respectively.
Then
\begin{equation}\label{eq:closure}
\dom(P,A)=\overline{\{x\in X: d(x,P)<d(x,A)\}},
\end{equation}
\begin{equation}
\partial(\dom(P,A))=\{x\in X: d(x,P)=d(x,A)\}=\bisect(P,A),\label{eq:boundary}
\end{equation}
\begin{equation}
\Int(\dom(P,A))=\{x\in X: d(x,P)<d(x,A)\},\label{eq:interior}
\end{equation}
and consequently, $\Int(\bisect(P,A))=\emptyset$, that is, the bisector between $P$ and $A$ cannot be ``fat''.
\end{thm}
Theorem \bref{thm:BoundaryInterior} generalizes a few results: \cite[p.\,111]{CMRS1993} (which refers to \cite{Mazon1992}) in which the setting is $\R^2$ with point sites and a strictly convex norm,   \cite[Theorem 2]{Wilker1975jour} (finite dimensional Euclidean spaces with sites having disjoint closures),  and \cite[Lemma 6]{ImaiKawamuraMatousekReemTokuyamaCGTA} (finite dimensional strictly convex normed spaces with disjoint closed sites). Its proof is inspired from a somewhat different  proof of
\cite[Lemma 6]{ImaiKawamuraMatousekReemTokuyamaCGTA} rather
than directly from \cite[Lemma 6]{ImaiKawamuraMatousekReemTokuyamaCGTA}. A careful analysis
of this second proof enabled us to generalize it to geodesic metric spaces having the geodesic inclusion property and actually to discover this class of spaces.

When the geodesic metric space does not have the geodesic inclusion property, then \beqref{eq:closure}--\beqref{eq:interior} can fail to hold and the bisector can be fat even in simple setting; see, e.g., \cite[p.\,390, Figure 37]{Aurenhammer}, \cite[p.\,191, Figure 3.7.2]{OBSC},
\cite[p.\,260 of the conference version]{ReemGeometricStabilityArxiv},\cite[Section 6]{ReemVorStabilityNonUC2012},\cite[Section 4]{ReemTopologicalPropertiesPreprint2013} (two or four point sites in the plane with either the $\ell_{1}$ or the $\ell_{\infty}$ norms). Recently results similar to Theorem \bref{thm:BoundaryInterior} have been established in
  \cite[Section 7]{ReemVorStabilityNonUC2012}, where the setting is a general
 normed space and one assumes that the distance to the sites is
attained and that the sites are aligned in a certain way with respect to the
structure of the unit sphere of the space (no two points of different sites form a line segment which is parallel to a nondegenerate line segment contained in the unit sphere; as before, when this alignment condition is not satisfied, then counterexamples exist, namely the same ones as above), and also in \cite{ReemTopologicalPropertiesPreprint2013}, where the setting is two positively separated sites of a general form and the space is a (possibly infinite dimensional) uniformly convex normed space.

We finish this short section by saying that there exist additional works which investigate  properties of bisectors in various settings. Some of them can be found in the following very partial list of references, as well as in some of the references cited therein: \cite{Busemann1955book,CorbalanMazonRecio1996,CMRS1993,Horvath2000,HorvathMartini2013jour,LevenSharir1987,Ma2000,MartiniSwanepoel2004,Mazon1992,PonceSantibanez2014jour}.

\section{Proofs of the convergence theorems and related claims}\label{sec:proofs}
This section presents the proofs of the convergence theorems and related claims.
We use the notation $\overline{S}$, $\Int(S)$, $\partial (S)$ to denote the closure/interior/boundary of $S\subseteq X$ respectively. We note that some of the assertions below hold in a more general setting than stated (for instance, in Lemma \bref{lem:GeneralDistance} the conditions on the distance function can be significantly weaken, so that one only needs, say,  the assumption that $d\colon X^2\to [-\infty,\infty]$), but we decided to restrict ourselves only to metric spaces (or particular classes of metric spaces).

The first lemmas describe simple properties of dom and Dom. Here and elsewhere we make use of the known and simple fact \cite[Lemma 5.4]{ReemReichZone} (in the Euclidean plane with point sites it was observed in \cite[Lemma 3(ii)]{AMT2}) that given a metric space $(X,d)$ and a fixed $\emptyset \neq P\subseteq X$, we have $\dom(P,B)\subseteq\dom(P,A)$ whenever $\emptyset\neq A\subseteq B$ (namely, $\dom(P,\cdot)$ is antimonotone).

\begin{lem}\label{lem:GeneralDistance}
Let $(X,d)$ be a metric space. Then
\begin{enumerate}[(a)]
\item\label{item:d_x_Agamma} $d\bigl(x,\bigcup_{\gamma\in \Gamma}A_{\gamma}\bigr)=\inf\{d(x,A_{\gamma}): \gamma\in \Gamma\}$ for any $x\in X$ and any collection $\{A_{\gamma}\}_{\gamma\in \Gamma}$ of nonempty subsets in $X$.
\item\label{item:UnionIntersection} $\dom\bigl(P,\bigcup_{\gamma\in \Gamma} A_{\gamma}\bigr)=\bigcap_{\gamma\in \Gamma} \dom(P,A_{\gamma})$
for any collection $\{A_{\gamma}\}_{\gamma\in \Gamma}$ of nonempty subsets in $X$ and any $P\subseteq X$ nonempty.
\item\label{item:FirstComponent} $\dom\bigl(\bigcup_{\gamma\in \Gamma} P_{\gamma},A\bigr)\subseteq\bigcup_{\gamma\in \Gamma} \dom(P_{\gamma},A)$
for any collection $\{P_{\gamma}\}_{\gamma\in \Gamma}$ of nonempty subsets in $X$ and any $A\subseteq X$ nonempty. If, in addition, for each $x\in X$ the distance $d\bigl(x,\bigcup_{\gamma\in \Gamma}P_{\gamma}\bigr)$ is attained at some $P_{\gamma_0}$, i.e., $d\bigl(x,\bigcup_{\gamma\in \Gamma}P_{\gamma}\bigr)=d(x,P_{\gamma_0})$ for some $\gamma_0\in \Gamma$, then the equality $\dom\bigl(\bigcup_{\gamma\in \Gamma} P_{\gamma},A\bigr)=\bigcup_{\gamma\in \Gamma} \dom(P_{\gamma},A)$ holds.
\item \label{item:DomUnionIntersect}$\Dom\bigl(\bigcup_{\gamma\in \Gamma} R^{\gamma}\bigr)=\bigcap_{\gamma\in\Gamma} \Dom(R^{\gamma})$ for any collection $\{R^{\gamma}\}_{\gamma\in \Gamma}$
of tuples, where each $R^{\gamma}$ is indexed by the same set of
indices $K$, i.e., $R^{\gamma}=(R_k^{\gamma})_{k\in K}$. Here $\Dom$ is defined with respect  to some tuple $(P_k)_{k \in K}$ of nonempty subsets in $X$.
\end{enumerate}
\end{lem}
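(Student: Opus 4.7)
The plan is to handle all four parts by straightforward manipulation of infimums together with the defining equivalence $x\in\dom(P,A)\iff d(x,P)\le d(x,A)$, with part~(a) serving as the engine for parts~(b)--(d).

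For part~(a), the plan is to expand $d(x,\bigcup_\gamma A_\gamma)=\inf\{d(x,y):y\in\bigcup_\gamma A_\gamma\}$ and regroup the infimum: $\inf_\gamma\inf\{d(x,y):y\in A_\gamma\}=\inf_\gamma d(x,A_\gamma)$. For part~(b), I would chain this with the elementary fact that a real number bounds an infimum from above iff it bounds every term; thus $x\in\dom(P,\bigcup_\gamma A_\gamma)$ iff $d(x,P)\le\inf_\gamma d(x,A_\gamma)$ iff $d(x,P)\le d(x,A_\gamma)$ for every $\gamma$, i.e.\ $x\in\bigcap_\gamma\dom(P,A_\gamma)$.

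For part~(c), the plan is to handle the two inclusions separately. The easy one (right-to-left) comes for free: if $x\in\dom(P_{\gamma_0},A)$ for some $\gamma_0$, then $d(x,\bigcup_\gamma P_\gamma)\le d(x,P_{\gamma_0})\le d(x,A)$, placing $x$ in $\dom(\bigcup_\gamma P_\gamma,A)$. For the stated forward inclusion, I would start from $x\in\dom(\bigcup_\gamma P_\gamma,A)$; part~(a) gives only $\inf_\gamma d(x,P_\gamma)\le d(x,A)$, and to place $x$ in some specific $\dom(P_{\gamma_0},A)$ one must exhibit an actual index $\gamma_0$ with $d(x,P_{\gamma_0})\le d(x,A)$. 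This is exactly what the attainment hypothesis supplies: pick $\gamma_0$ realising $d(x,\bigcup_\gamma P_\gamma)=d(x,P_{\gamma_0})$; then $d(x,P_{\gamma_0})\le d(x,A)$, so $x\in\bigcup_\gamma\dom(P_\gamma,A)$. Combining the two inclusions yields the equality under the attainment hypothesis.

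For part~(d), the plan is to work componentwise. Fix $k\in K$ and compute $(\Dom(\bigcup_\gamma R^\gamma))_k=\dom(P_k,\bigcup_{j\ne k}\bigcup_\gamma R^\gamma_j)=\dom(P_k,\bigcup_\gamma\bigcup_{j\ne k}R^\gamma_j)$ after swapping the two unions; then part~(b) converts this into $\bigcap_\gamma\dom(P_k,\bigcup_{j\ne k}R^\gamma_j)=\bigcap_\gamma(\Dom(R^\gamma))_k=(\bigcap_\gamma\Dom(R^\gamma))_k$, using that intersections of tuples are taken componentwise. Since $k$ was arbitrary, the tuples agree. The main obstacle in the whole lemma is the forward inclusion in part~(c): a uniform bound on an infimum does not by itself yield a bound on any specific term, so one genuinely needs the attainment hypothesis to pass from the collective infimum to a pointwise inequality singling out a concrete index.
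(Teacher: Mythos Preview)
Your argument is correct and follows essentially the same route as the paper: part~(a) by elementary manipulation of infimums, part~(b) from (a) via the defining inequality, part~(c) by monotonicity in the first variable for one inclusion and the attainment hypothesis for the other, and part~(d) componentwise from~(b) after swapping the two unions. The paper phrases (a) as a short contradiction and (b) via antimonotonicity of $\dom(P,\cdot)$, but these are cosmetic differences.

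One remark on part~(c): the inclusion printed in the lemma is in fact a typo --- the unconditional inclusion that holds is $\bigcup_{\gamma}\dom(P_{\gamma},A)\subseteq\dom(\bigcup_{\gamma}P_{\gamma},A)$, and this is exactly what the paper's own proof establishes (via monotonicity of $\dom(\cdot,A)$). The reverse inclusion, $\dom(\bigcup_{\gamma}P_{\gamma},A)\subseteq\bigcup_{\gamma}\dom(P_{\gamma},A)$, genuinely requires the attainment hypothesis, as you correctly diagnosed in your final paragraph; a simple counterexample is $X=\mathbb{R}$, $A=\{0\}$, $P_n=\{1+1/n\}$, $x=1/2$. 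So what you actually proved --- the ``easy'' $\supseteq$ unconditionally, then $\subseteq$ under attainment --- matches the paper's proof and the intended (correct) statement.
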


\begin{proof}
\begin{enumerate}[(a)]
\item Let $x\in X$, $s:=d\bigl(x,\bigcup_{\gamma\in \Gamma}A_{\gamma}\bigr)$ and $t:=\inf\{d(x,A_{\gamma}): \gamma\in \Gamma\}$. Then $s\leq d(x,A_{\gamma})$ for all $\gamma\in \Gamma$ by the definition of $s$, and so $s\leq t$.  If $s<t$, then there is $y\in \bigcup_{\gamma\in \Gamma}A_{\gamma}$ such that $d(x,y)<t$, and since $y\in A_{\gamma}$ for some $\gamma\in \Gamma$, we have $d(x,A_{\gamma})\leq d(x,y)<t$, a contradiction with the definition of $t$. Hence $s=t$.
\item Since $\dom(P,\cdot)$ is antimonotone, we have $\textnormal{dom}\bigl(P,\bigcup_{i\in \Gamma} A_{i}\bigr)\subseteq
\textnormal{dom}(P,A_{\gamma})$ for any ${\gamma\in \Gamma}$. Consequently, we have
$\textnormal{dom}\bigl(P,\bigcup_{\gamma\in \Gamma} A_{\gamma}\bigr)\subseteq
\bigcap_{\gamma\in \Gamma}\textnormal{dom}(P,A_{\gamma})$. Conversely,
suppose that $x\in
\bigcap_{\gamma\in \Gamma}\textnormal{dom}(P,A_{\gamma})$. If $x\notin
\textnormal{dom}\bigl(P,\bigcup_{\gamma\in \Gamma} A_{\gamma}\bigr)$, then there is $y\in
\bigcup_{\gamma\in \Gamma} A_{\gamma}$ such that $d(x,P)>d(x,y)$. But $y\in
A_{\gamma}$ for some ${\gamma}$, and $x\in
\textnormal{dom}(P,A_{\gamma})$, so $d(x,P)\leq d(x,A_{\gamma})\leq
d(x,y)$, a contradiction. Thus $x\in
\textnormal{dom}\bigl(P,\bigcup_{\gamma\in \Gamma} A_{\gamma}\bigr)$.
\item By monotonicity of $\dom(\cdot, A)$, i.e., $\dom(P,A)\leq \dom(Q,A)$ whenever $\emptyset\neq P\subseteq Q$ (which is easily verified), we have $\dom(P_{\gamma},A)\subseteq \dom(\bigcup_{i\in \Gamma} P_{i},A)$ for any ${\gamma}\in \Gamma$. As a result,
$\bigcup_{\gamma\in \Gamma}\dom(P_{\gamma},A)\subseteq \dom\bigl(\bigcup_{\gamma\in \Gamma}
P_{\gamma},A\bigr)$. Now fix $x\in\dom\bigl(\bigcup_{\gamma\in \Gamma} P_{\gamma},A\bigr)$. We know that $d\bigl(x,\bigcup_{\gamma\in \Gamma}P_{\gamma}\bigr)=d(x,P_{\gamma_0})$ for some $\gamma_0\in \Gamma$. Therefore,  $d(x,P_{\gamma_0})=d\bigl(x,\bigcup_{\gamma\in \Gamma}P_{\gamma}\bigr)\leq d(x,A)$. Consequently, $x\in  \dom(P_{\gamma_0},A)\subseteq\bigcup_{\gamma\in \Gamma}\dom(P_{\gamma},A)$.
\item Since the union and intersection are taken component-wise, the assertion follows from the definition of the $\Dom$ mapping (equation \beqref{eq:TZoneDef})
and part (\bref{item:UnionIntersection}):
\begin{align*}
\quad \Dom\biggl(\bigcup_{\gamma\in
\Gamma}R^{\gamma}\biggr)&=\biggl(\textnormal{dom}\biggl(P_k,\bigcup_{j\neq
k}\biggl(\bigcup_{\gamma\in \Gamma}{R_j^{\gamma}}\biggr)\biggr)\biggr)_{k\in
K}=\biggl(\textnormal{dom}\biggl(P_k,\bigcup_{\gamma\in \Gamma}\biggl(\bigcup_{j\neq
k}{R_j^{\gamma}}\biggr)\biggr)\biggr)_{k\in
K}\\&=\biggl(\bigcap_{\gamma\in \Gamma}\textnormal{dom}\biggl(P_k,\bigcup_{j\neq
k}{R_j^{\gamma}}\biggr)\biggr)_{k\in K}=\bigcap_{\gamma\in \Gamma}\Dom(R^{\gamma}).
\end{align*}
\end{enumerate}
\end{proof}

\begin{lem}\label{lem:DomPk}
Let $(P_k)_{k \in K}$ be a tuple of  nonempty and closed subsets in a metric space $(X,d)$.
\begin{enumerate}[(a)]
\item $\Dom$ is antimonotone, i.e., $\Dom(S)\subseteq \Dom(R)$ whenever $R,S\in \underset{{k\in K}}\prod
X_k$ (see Definition \bref{def:zone}) and $R\subseteq S$. In addition, $\Dom^2$ is monotone, that is, $R\subseteq S\Rightarrow \Dom^2(R)\subseteq \Dom^2(S)$.

\item\label{item:In R On}  Consider the inner approximation sequence $(I^{(n)})_{n=0}^{\infty}$ and the outer one $(O^{(n)})_{n=0}^{\infty}$ (Algorithm  \bref{alg:InOn} above). Then $I^{(n)}=\Dom^{2n}(P_k)_{k\in K}$ and $O^{(n)}=\Dom^{2n+1}(P_k)_{k\in K}$ for each $n\in\N\cup\{0\}$, the sequence $(I^{(n)})_{n=0}^{\infty}$ increasing, the sequence $(O^{(n)})_{n=0}^{\infty}$ is decreasing,
and each nonnegative integer $n$
\begin{equation}\label{eq:InOn}
I^{(n)}\subseteq \bigcup_{q=0}^{\infty}I^{(q)}\subseteq \bigcap_{q=0}^{\infty}O^{(q)}\subseteq O^{(n)}.
\end{equation}
In addition, if $R$ is a zone or double zone diagram in $X$, then for each $n\in\N\cup\{0\}$
\begin{equation}\label{eq:DomIterations}
I^{(n)}\subseteq\bigcup_{q=0}^{\infty}I^{(q)}\subseteq R \subseteq \bigcap_{q=0}^{\infty}O^{(q)}\subseteq O^{(n)}.
\end{equation}
\end{enumerate}
\end{lem}
\begin{proof}
\begin{enumerate}[(a)]
\item This claim is known: see, e.g., \cite[Lemma 5.4]{ReemReichZone} (in the Euclidean plane it was observed in \cite[p.\,1184]{AMTn}).
\item The definition of $I^{(n)}$ and $O^{(n)}$ (Algorithm \bref{alg:InOn} above) and induction show that $I^{(n)}=\Dom^{2n}(P_k)_{k\in K}$ and $O^{(n)}=\Dom^{2n+1}(P_k)_{k\in K}$ for each $n\in\N\cup\{0\}$. Now let $(X)_{k\in K}$ be the tuple whose components are $X$. An immediate check shows that $P_k=\dom(P_k,X)$ and $\Dom(X)_{k\in K}=(P_k)_{k \in K}$. This equality, the inclusion $(P_k)_{k\in K}\subseteq (X)_{k\in K}$, the definition of the set on which $\Dom$ acts (Definition \bref{def:zone} above), and the monotonicity of $\Dom^2$, imply that
\begin{equation*}
I^{(0)}=(P_k)_{k\in K}\subseteq \Dom^2 (P_k)_{k\in K}\subseteq \Dom^2 (X)_{k\in K}=\Dom(P_k)_{k\in K}=O^{(0)}\subseteq (X)_{k\in K}.
\end{equation*}
By iterating this inclusion with $\Dom^2$, using the monotonicity of $\Dom^2$, and using the equalities $I^{(n)}=\Dom^{2n}(P_k)_{k\in K}$ and $O^{(n)}=\Dom^{2n+1}(P_k)_{k\in K}$ for each $n\in\N\cup\{0\}$, we see that $(I^{(n)})_{n=0}^{\infty}$ is increasing, $(O^{(n)})_{n=0}^{\infty}$ is decreasing, and \beqref{eq:InOn} holds.
Finally, after iterating the inclusion $(P_k)_{k\in K}\subseteq R\subseteq (X)_{k\in K}$ using $\Dom^2$ we obtain  \beqref{eq:DomIterations} when $R$ is a double zone diagram. But a zone diagram is a double  zone diagram
[$\Dom(\Dom(R))=\Dom(R)=R$ whenever $R=\Dom(R)$], hence  \beqref{eq:DomIterations} is true also when $R$ is a zone diagram.
\end{enumerate}
\end{proof}

\begin{lem}\label{lem:dom}
Let $(X,d)$ be a metric space and let $P,A\subseteq X$ be nonempty.  Then
\begin{enumerate}[(a)]
\item\label{item:closure} $\dom(P,A)$ is a closed set.
\item\label{item:Aclosure} $\dom(P,A)=\dom(P,\overline{A})=\dom(\overline{P},A)=\dom(\overline{P},\overline{A})$.
\item\label{item:DomClosure} For a tuple $R=(R_k)_{k\in K}$ of nonempty subsets, let $\overline{R}=(\overline{R_k})_{k\in K}$.
Then $\Dom(\overline{R})=\Dom(R)=\overline{\Dom(R)}$ (with respect to a given  tuple $(P_k)_{k\in K}$).
\end{enumerate}
\end{lem}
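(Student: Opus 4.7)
The plan is to prove the three parts in order, each reducing to elementary properties of the distance function $x \mapsto d(x,S)$.

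For part (\ref{item:closure}), I would note that for any nonempty $S \subseteq X$, the function $x \mapsto d(x,S)$ is $1$-Lipschitz and hence continuous. Therefore $f(x) := d(x,P) - d(x,A)$ is continuous on $X$, and $\dom(P,A) = f^{-1}((-\infty,0])$ is the preimage of a closed set under a continuous function, hence closed. Alternatively, one could verify sequential closedness directly: if $x_n \in \dom(P,A)$ with $x_n \to x$, then $d(x_n,P) \leq d(x_n,A)$ for every $n$, and passing to the limit using continuity of the distance functions yields $d(x,P) \leq d(x,A)$.

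For part (\ref{item:Aclosure}), the key observation is the identity $d(x,S) = d(x,\overline{S})$ for any $x \in X$ and any nonempty $S \subseteq X$. The inequality $d(x,\overline{S}) \leq d(x,S)$ is immediate from $S \subseteq \overline{S}$. For the reverse, given $y \in \overline{S}$, pick $s_n \in S$ with $s_n \to y$; then $d(x,S) \leq d(x,s_n) \to d(x,y)$, so $d(x,S) \leq d(x,y)$ for every $y \in \overline{S}$, hence $d(x,S) \leq d(x,\overline{S})$. Applying this identity to both $P$ and $A$ in the defining inequality $d(x,P) \leq d(x,A)$, all four sets $\dom(P,A)$, $\dom(P,\overline{A})$, $\dom(\overline{P},A)$, $\dom(\overline{P},\overline{A})$ coincide.

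For part (\ref{item:DomClosure}), the equality $\Dom(\overline{R}) = \Dom(R)$ follows by applying part (\ref{item:Aclosure}) componentwise: for each $k$,
\[
(\Dom(\overline{R}))_k = \dom\Bigl(P_k, \bigcup_{j \neq k} \overline{R_j}\Bigr) = \dom\Bigl(P_k, \overline{\textstyle\bigcup_{j \neq k} R_j}\Bigr) = \dom\Bigl(P_k, \bigcup_{j \neq k} R_j\Bigr) = (\Dom(R))_k,
\]
where I use that the closure of a union contains the union of closures (it suffices for part (\ref{item:Aclosure}) that the union be replaced by any set with the same closure, and one may also simply use $d(x,\bigcup_{j \neq k} R_j) = d(x,\bigcup_{j \neq k} \overline{R_j})$ via Lemma \ref{lem:GeneralDistance}\eqref{item:d_x_Agamma}). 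Finally, $\Dom(R) = \overline{\Dom(R)}$ because each component $(\Dom(R))_k$ is closed by part (\ref{item:closure}), and closure of a tuple is taken componentwise.

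None of these steps is a real obstacle; the only point requiring any care is justifying the componentwise closure rewriting in (\ref{item:DomClosure}), which is cleanest via the identity $d(x,S) = d(x,\overline{S})$ combined with Lemma \ref{lem:GeneralDistance}\eqref{item:d_x_Agamma} rather than by manipulating closures of unions directly.
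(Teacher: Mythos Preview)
Your proof is correct and follows essentially the same approach as the paper: continuity of $x\mapsto d(x,P)-d(x,A)$ for part (a), the identity $d(x,S)=d(x,\overline{S})$ for part (b), and a componentwise reduction for part (c). The only minor difference is that in part (c) the paper decomposes $\dom(P_k,\bigcup_{j\neq k}\overline{R_j})$ into $\bigcap_{j\neq k}\dom(P_k,\overline{R_j})$ via Lemma~\ref{lem:GeneralDistance}\eqref{item:UnionIntersection} before applying part (b), whereas you go directly through $d(x,\bigcup_{j\neq k}\overline{R_j})=d(x,\bigcup_{j\neq k}R_j)$; both routes are equally short and valid.
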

\begin{proof}
\begin{enumerate}[(a)]
\item If $x=\lim_{n\to \infty}x_n$ where $x_n\in \dom(P,A)$ for all $n\in\N$, then $d(x_n,P)\leq
d(x_n,A)$, and this inequality is preserved in the limit because the
function $x\mapsto d(x,P)-d(x,A)$ is continuous (with respect to the topology
induced on $X$ by $d$). Thus the limit of a sequence in $\dom(P,A)$ is also in $\dom(P,A)$, namely $\dom(P,A)$ is closed.
\item This claim follows from the general fact that $d(x,B)=d(x,\overline{B})$ for each  $x\in X$ and each $B\subseteq X$.
\item By part \beqref{item:Aclosure} and by Lemma \bref{lem:GeneralDistance}\,\beqref{item:UnionIntersection}, we have
\begin{equation*}
\quad\quad \dom\biggl(P_k,\bigcup_{j\neq k}\overline{R_j}\biggr)=\bigcap_{j\neq k}\dom(P_k,\overline{R_j})
=\bigcap_{j\neq k}\dom(P_k,R_j)=\dom\biggl(P_k,\bigcup_{j\neq k}R_j\biggr).
\end{equation*}
Thus $\Dom(\overline{R})=\bigl(\dom\bigl(P_k,\bigcup_{j\neq k}\overline{R_j}\bigr)\bigr)_{k\in K}=\bigl(\dom\bigl(P_k,\bigcup_{j\neq k}R_j\bigr)\bigr)_{k\in K}=\Dom(R)$. Finally, by part \beqref{item:closure} and by definition, we have
\begin{equation*}
\overline{\Dom(R)}=\biggl(\overline{\dom\biggl(P_k,\bigcup_{j\neq k}P_j\biggr)}\biggr)_{k\in K}=
\biggl(\dom\biggl(P_k,\bigcup_{j\neq k}P_j\biggr)\biggr)_{k\in K}=\Dom(R).
\end{equation*}
\end{enumerate}
\end{proof}

The following lemma and propositions (followed by some proofs
of related claims) discuss issues related to geodesic metric spaces.

\begin{lem}\label{lem:GeodesicEquality}
Let $(X,d)$ be a geodesic metric space and let $a,b,c\in X$.
Then equality in the triangle inequality $d(a,c)\leq d(a,b)+d(b,c)$ holds
if and only if  $b\in [a,c]_{\gamma}$ for some isometric mapping $\gamma$.
\end{lem}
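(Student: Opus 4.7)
\medskip
\noindent\textbf{Proof proposal.}
The ($\Leftarrow$) direction is essentially a definitional check. Suppose $b\in [a,c]_{\gamma}$, where $\gamma:[r_1,r_2]\to X$ is an isometry with $\gamma(r_1)=a$ and $\gamma(r_2)=c$, and write $b=\gamma(r)$ for some $r\in[r_1,r_2]$. Then $d(a,b)=r-r_1$, $d(b,c)=r_2-r$, and $d(a,c)=r_2-r_1$, so the three quantities sum correctly and equality holds in the triangle inequality. This case requires no further thought.

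For the nontrivial direction ($\Rightarrow$), the plan is to build a geodesic from $a$ to $c$ by concatenating geodesics through $b$. Use the assumption that $(X,d)$ is a geodesic metric space to choose an isometry $\gamma_1:[0,d(a,b)]\to X$ with $\gamma_1(0)=a$, $\gamma_1(d(a,b))=b$, and an isometry $\gamma_2:[0,d(b,c)]\to X$ with $\gamma_2(0)=b$, $\gamma_2(d(b,c))=c$. Set $L:=d(a,b)+d(b,c)$ and define $\gamma:[0,L]\to X$ by $\gamma(t)=\gamma_1(t)$ for $t\in[0,d(a,b)]$ and $\gamma(t)=\gamma_2(t-d(a,b))$ for $t\in[d(a,b),L]$. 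The definition is consistent at $t=d(a,b)$ since both expressions equal $b$, and by construction $\gamma(0)=a$, $\gamma(L)=c$.

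The crux is to verify that $\gamma$ is an isometry. When $s,t$ lie in the same piece this is immediate from $\gamma_1$ or $\gamma_2$ being isometric. The only interesting case is $0\le s\le d(a,b)\le t\le L$. The triangle inequality through the junction point $b=\gamma(d(a,b))$ gives the upper bound
\[
d(\gamma(s),\gamma(t))\le d(\gamma(s),b)+d(b,\gamma(t))=(d(a,b)-s)+(t-d(a,b))=t-s.
\]
For the matching lower bound, apply the triangle inequality to the chain $a,\gamma(s),\gamma(t),c$:
\[
d(a,c)\le s+d(\gamma(s),\gamma(t))+(L-t).
\]
Invoking the standing hypothesis $d(a,c)=L$ and rearranging yields $t-s\le d(\gamma(s),\gamma(t))$. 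Combining the two bounds gives the required equality, so $\gamma$ is an isometry onto $[a,c]_{\gamma}$, which contains $b$ by construction. This completes the reverse direction.

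I do not expect a real obstacle here: the argument is essentially the standard concatenation of geodesics, and the only ``use'' of the hypothesis $d(a,c)=d(a,b)+d(b,c)$ is in the mixed-interval step above. No appeal to the geodesic inclusion property, to properness, or to strict convexity is needed; the statement is a general feature of geodesic metric spaces. The mildly delicate point to state cleanly is simply that although each geodesic segment may be non-unique, one only needs the existence of some $\gamma$ realizing $b$ on a geodesic from $a$ to $c$, which is precisely what the concatenation supplies.
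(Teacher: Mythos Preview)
Your proof is correct and follows essentially the same route as the paper: concatenate a geodesic from $a$ to $b$ with one from $b$ to $c$, then verify the concatenation is an isometry in the mixed-interval case via the chain $a,\gamma(s),\gamma(t),c$ together with the hypothesis $d(a,c)=d(a,b)+d(b,c)$. The only cosmetic difference is that the paper packages the upper and lower bounds into a single chain of inequalities forced to collapse to equalities, whereas you state the two bounds separately.
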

\begin{proof}
Suppose first that $b\in [a,c]_{\gamma}$ for some $b\in X$ and some
isometric mapping $\gamma: [r_1,r_3]\to X$. We can write $a=\gamma(r_1),b=\gamma(r_2)$, and $c=\gamma(r_3)$
for some real number $r_2\in [r_1,r_3]$. Since $\gamma\colon[r_1,r_3]\to X$ preserves
distances we have
\begin{align*}
d(a,b)+d(b,c)&=d(\gamma(r_1),\gamma(r_2))+d(\gamma(r_2),\gamma(r_3))\\
&=(r_2-r_1)+(r_3-r_2)=r_3-r_1=d(a,c),
\end{align*}
namely equality in the triangle inequality.

Assume now the reverse direction, i.e., that $b\in X$ satisfies $d(a,b)+d(b,c)=d(a,c)$.
Since $(X,d)$ is a geodesic metric space we can find isometric mappings  $\gamma_1\colon[r_1,r_2]\to X$
and $\gamma_2\colon[r_2,r_3]\to X$ such that $\gamma_1(r_1)=a$, $\gamma_1(r_2)=b$, $d(a,b)=r_2-r_1$,
 $\gamma_2(r_2)=b$, $\gamma_2(r_3)=c$, and $d(b,c)=r_3-r_2$ (for some real numbers $r_1\leq r_2\leq r_3$).
Define $\gamma\colon[r_1,r_3]\to X$ by
\begin{equation*}
\gamma(r)=\left\{ \begin{array}{ll}
\gamma_1(r) & \textnormal{if}\,\, r\in [r_1,r_2],\\
\gamma_2(r) & \textnormal{if}\,\, r\in [r_2,r_3].
\end{array}
\right.
\end{equation*}
It will be proved that $\gamma$ is an isometric mapping and that $b\in [a,c]_{\gamma}$.
By the definition of $\gamma$ we have $\gamma(r_2)=b$ and hence $b\in \gamma([r_1,r_3])$.
For proving that $\gamma$ preserves
distances, one observes that when restricted to $[r_1,r_2]$ or to $[r_2,r_3]$ this holds
because $\gamma_i,i=1,2$ preserve distances. It remains to show that if
 $t_1\in [r_1,r_2]$ and $t_2\in [r_2,r_3]$, then the equality $d(\gamma(t_1),\gamma(t_2))=t_2-t_1$ holds.
 Indeed, by the triangle inequality, by the fact that the restriction of $\gamma$
 to the intervals $[r_1,r_2]$ and $[r_2,r_3]$ is an isometric mapping,
and using the properties of $a,b,c$ that were assumed above, it follows that
\begin{equation}\label{eq:EqualityTriangle}\begin{split}
d(a,c)&\leq d(a,\gamma(t_1))+d(\gamma(t_1),\gamma(t_2))+d(\gamma(t_2),c)\\
&\leq d(\gamma(r_1),\gamma(t_1))+d(\gamma(t_1),\gamma(r_2))+d(\gamma(r_2), \gamma(t_2))+d(\gamma(t_2),\gamma(r_3))\\
&=(t_1-r_1)+(r_2-t_1)+(t_2-r_2)+(r_3-t_2)=(r_2-r_1)+(r_3-r_2)\\
&=d(a,b)+d(b,c)=d(a,c).
\end{split}
\end{equation}
Thus there is equality all over the way. This fact and
the fact that $\gamma$ preserves distances when restricted to the intervals $[r_1,r_2]$ and $[r_2,r_3]$ imply
the desired equality
\begin{equation}\label{eq:t_1t_2}
d(\gamma(t_1),\gamma(t_2))=d(\gamma(t_1), \gamma(r_2))+d(\gamma(r_2), \gamma(t_2))=(r_2-t_1)+(t_2-r_2)=t_2-t_1.
\end{equation}
\end{proof}

\begin{prop}\label{prop:GeodesicInclusionStrictlyConvex}
Let $(X,|\cdot|)$ be a normed space. Then $(X,|\cdot|)$ is strictly convex if and only if it has the geodesic inclusion property ( Definition \bref{def:GeodesicMetric}\,\beqref{def:GeodesicMetric:GeodesicInclusion} above). Moreover, given a convex subset of $C$ of $X$ having at least two different points, if $(X,|\cdot|)$ is strictly convex, then $C$ with the induced norm from $X$ has the geodesic inclusion property.
\end{prop}
\begin{proof}
Assume first that $(X,|\cdot|)$ is strictly convex and let $C$ be a convex subset of $X$ having at least two different points. Given two different points $x,y\in C$, we claim that a geodesic segment
$[x,y]_{\gamma}$ which connects $x=\gamma(r_1)$ and $y=\gamma(r_2)$ in $X$ (where $r_2> r_1$
and $\gamma\colon[r_1,r_2]\to X$ preserves distances) must be the line segment $[x,y]$ (in particular, any geodesic segment which connects $x$ and $y$ in $C$, i.e., $\gamma([r_1,r_2])\subseteq C$, must coincide with $[x,y]$).
Indeed, let $b:=\gamma(r)\in [x,y]_{\gamma}$  for some $r\in (r_1,r_2)$. Then
\begin{align*}
|x-y|\leq |x-b|+|b-y|&=|\gamma(r_1)-\gamma(r)|+|\gamma(r)-\gamma(r_2)|\\
&=r-r_1+r_2-r=r_2-r_1=|x-y|,
\end{align*}
showing that $|x-b|+|b-y|=|x-y|$. This is in contrast to the assumed strict convexity of the space
unless  $b$ belongs to the line segment $[x,y]$. Thus $[x,y]_{\gamma}\subseteq [x,y]$.
On the other hand, since $\gamma$ is continuous, its image $\gamma([r_1,r_2])=[x,y]_{\gamma}$
is a (nontrivial) connected subset of $[x,y]$. But the only nontrivial connected subsets of a line segment
are line segments. We conclude that $[x,y]_{\gamma}$ is a line segment which contains
$x$ and $y$. The equality $[x,y]_{\gamma}=[x,y]$ follows.

\begin{figure}
\begin{minipage}[t]{0.45\textwidth}
\begin{center}
{\includegraphics[scale=1]{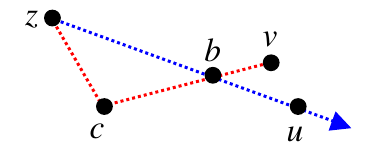}}
\end{center}
 \caption{Illustration of the proof of Proposition \bref{prop:GeodesicInclusionStrictlyConvex} in the non-strictly convex case.}
\label{fig:NotStrictlyConvex}
\end{minipage}
\end{figure}

Returning to the geodesic inclusion property that we want to prove that $C$ has, let $u,v,b,z\in C$ satisfy
$b\neq z$, $b\in [u,z]_{\gamma_1}$, and $b\in [v,z]_{\gamma_2}$. The geodesic inclusion property obviously holds if $u=b$ or $v=b$.
Otherwise $|u-b|>0$, $|v-b|>0$.  Since we already know from the previous paragraph that the geodesic segments in $C$ are line segments, we have $b\in [u,z]$ and $b\in [v,z]$. Thus $(b-z)/|b-z|=(u-b)/|u-b|$ and $(b-z)/|b-z|=(v-b)/|v-b|$,
showing that $[b,u]$ and $[b,v]$ are on the same ray emanating from $b$ in the joint
direction $(b-z)/|b-z|$. Hence the shorter interval among $[b,u]$, $[b,v]$
is contained in the longer one and it follows that either $u\in [v,z]$ or $v\in [u,z]$. Since any line segment is a geodesic segment, we conclude that $C$ with the induced metric from the norm has the geodesic inclusion property, as required.

Now assume that the given normed space $(X,|\cdot|)$ is not strictly convex. Our goal is to prove that this space does not have the geodesic inclusion property. See Figure \bref{fig:NotStrictlyConvex} for an illustration of the proof. First we observe that each such a space must have dimension of at least 2 (since any norm $\|\cdot\|$ on a one-dimensional normed space must have the form $\|x\|=\lambda|x|$ for all $x\in X$ where $\lambda>0$ is a given positive constant not depending on $x$ and where $|\cdot|$ is the usual absolute value function; hence this space is strictly convex). Since the space is not strictly convex, we can find three different points $v,c,z$ such that $c\notin [v,z]$ and still there is equality in the triangle inequality, that is, $|z-v|=|z-c|+|c-v|$. Therefore, if we define a function $\gamma$ from the real-line interval $[0,|z-v|]$ to $X$ by $\gamma(t):=v+t(c-v)/|c-v|$ when $t\in [0,|c-v|]$ and $\gamma(t):=c+t(z-c)/|z-c|$ when $t\in [|c-v|,|z-v|]$ , then it can be checked similarly to the proof of Lemma \bref{lem:GeodesicEquality} above (by minor variations of \beqref{eq:EqualityTriangle} and \beqref{eq:t_1t_2}) that $\gamma$ is an isometry.  Since $\Gamma_1:=[v,c]\cup [c,z]=\gamma([0,|z-v|])$, it follows that $\Gamma_1$ is a geodesic segment. Now we select a point $b\in (v,c)$ and a point $u\notin [b,z]$ but on the ray emanating from $z$ in the direction of $b$. The set $\Gamma_2:=[u,z]$ is a line segment and hence a geodesic segment. It follows that $b\in \Gamma_1\cap \Gamma_2$ but neither $u\in\Gamma_1$ nor $v\in \Gamma_2$. Hence $X$ does not have the geodesic inclusion property, as claimed.
\end{proof}

\begin{prop}\label{prop:StrictSegment}
Let $(X,d)$ be a geodesic metric space which has the geodesic
inclusion property.  Let $A\subseteq X$ be nonempty.
Let $p,z\in X$ and suppose that $d(z,p)\leq d(z,A)$ and $p\notin A$.
Suppose also that $d(x,A)$ is attained for all $x\in [p,z)$.
Then $d(x,p)<d(x,A)$ for all $x\in [p,z)$. In particular, if $(X,d)$ is proper,
 $\emptyset\neq A$ is closed and for some $p,z\in X$ we have $d(z,p)\leq d(z,A)$ and $p\notin A$,
then $d(x,p)<d(x,A)$ for all $x\in [p,z)$.
\end{prop}
\begin{proof}
It can be assumed that  $z\neq p$, because otherwise the assertion is obvious (void).
Fix $x\in [p,z)$ and let $a_x\in A$ be chosen such
that $d(x,A)=d(x,a_x)$.
Since $p\notin A$, $z\neq p$, and $d(z,p)\leq d(z,A)$, it follows that $z,p$ and $a_x$ are all different.
Using the facts that $[p,z]$ is a geodesic segment, that $x\in [p,z]$, and that $d(z,p)\leq d(z,A)$, it follows
from the triangle inequality and Lemma \bref{lem:GeodesicEquality} that
\begin{equation}\label{eq:zxpA}
\quad \quad d(z,x)+d(x,p)=d(z,p)\leq d(z,A)
\leq d(z,a_x)\leq d(z,x)+d(x,a_x),
\end{equation}
so $d(x,p)\leq d(x,a_x)=d(x,A)$. Assume for a contradiction that
\begin{equation}\label{eq:d(x,p)d(x,a_x)}
d(x,p)=d(x,a_x).
\end{equation}
Then there is equality in \beqref{eq:zxpA} and hence, because of Lemma \bref{lem:GeodesicEquality},
it follows that $x\in [a_x,z]_{\gamma_1}$ for some isometric mapping $\gamma_1$.
Because $x\neq z$ the inclusion $x\in [a_x,z)_{\gamma_1}$ holds.
However, we have already assumed that $x\in [p,z)$ and $[p,z]$ is a geodesic segment, namely $x\in [p,z)_{\gamma_2}$ for some isometric mapping $\gamma_2$.
Thus, using the geodesic inclusion property of the space
(where $x$ is the bifurcation point), either $p\in [a_x,z]_{\gamma_1}$
or $a_x\in [p,z]_{\gamma_2}$. As explained after the definition of
the geodesic inclusion property, in the first case actually $p\in[a_x,x]_{\gamma_1}$ (otherwise
$p\in (x,z]_{\gamma_1}$ and hence, since $\gamma_1$ preserves distances, $d(p,z)<d(x,z)$;
but $d(x,z)\leq d(p,z)$ because $x\in [p,z]_{\gamma_2}$, a contradiction). Consequently, the relations $p\in[a_x,x]_{\gamma_1}$ and $p\neq a_x$ imply that
$d(x,a_x)=d(x,p)+d(p,a_x)>d(x,p)$, a contradiction to \beqref{eq:d(x,p)d(x,a_x)}.
Thus the second case implied by the geodesic inclusion property holds, i.e., $a_x\in [p,z]_{\gamma_2}$.
But then actually $a_x\in [p,x]_{\gamma_2}$ by a similar argument as used above in the first case.
Therefore $d(p,x)=d(p,a_x)+d(a_x,x)$, and so \beqref{eq:d(x,p)d(x,a_x)} implies that $d(p,a_x)=0$,
a contradiction to the fact established earlier that $p\neq a_x$. Therefore \beqref{eq:d(x,p)d(x,a_x)} cannot hold, that is, $d(x,p)<d(x,A)$, as required.

Finally, if $(X,d)$ is proper, $\emptyset\neq A$ is closed and for some $p,z\in X$ we have $d(z,p)\leq d(z,A)$ and $p\notin A$,
then a simple well-known argument based on compactness (without referring to any geodesic assumption on $X$)
shows that $d(x,A)$ is actually attained for all $x\in X$ and hence the previous paragraphs imply that
$d(x,p)<d(x,A)$ for all $x\in [p,z)$.
\end{proof}
For a later use, it is the time to prove  Theorem \bref{thm:BoundaryInterior}.
\begin{proof}[{\bf Proof of Theorem \bref{thm:BoundaryInterior}}]
We first establish \beqref{eq:closure}.
From the inclusion $\{x\in X: d(x,P)<d(x,A)\}\subseteq \dom(P,A)$ and the fact that $\dom(P,A)$
is closed (as proved in Lemma \bref{lem:dom}\,\beqref{item:closure}) it follows that the inclusion $\overline{\{x\in X: d(x,P)<d(x,A)\}}\subseteq \dom(P,A)$
holds. For the second inclusion, let $z\in \dom(P,A)$. If $d(z,P)<d(z,A)$, then obviously the point $z$ is in the set $\overline{\{x\in X: d(x,P)<d(x,A)\}}$. Now suppose that $d(z,P)=d(z,A)$.
By assumption there is $p\in P$ such that $d(z,P)=d(z,p)$, and since $p\in P$, it  follows that $d(p,P)=0$ and $p\notin A$. These properties and the fact that $d(p,A)$ is attained imply that $d(p,A)>0$. Hence if $z=p$, then $z\in \overline{\{x\in X: d(x,P)<d(x,A)\}}$.  If $z\neq p$, then $[p,z)\neq \emptyset$, and Proposition  \bref{prop:StrictSegment} implies that every $x\in [p,z)$ (arbitrarily  close to $z$) satisfies the inequality $d(x,P)\leq d(x,p)<d(x,A)$. Thus again $z\in \overline{\{x\in X: d(x,P)<d(x,A)\}}$.

We now turn to prove \beqref{eq:boundary}. Let $f:X\to \R$ be defined by  $f(x):=d(x,P)-d(x,A)$ for all $x\in X$.
Because of the continuity of $f$ and because $\dom(P,A)=\{x\in X: f(x)\leq 0\}$
one obtains that all the points
in $\{x\in X: d(x,P)<d(x,A)\}$ and $\{x\in X: d(x,A)<d(x,P)\}$
are interior points. It follows that $\partial(\dom(P,A))\subseteq \{x\in X: d(x,P)=d(x,A)\}$
without any assumption on the sites or on $X$. For the reverse inclusion, let
 $z$ satisfy $d(z,P)=d(z,A)$. Then $z\in \dom(P,A)$. Let $a\in A$ satisfy $d(z,a)=d(z,A)$.
 Then $a\neq z$ since otherwise the equality $d(z,P)=d(z,A)$ and the fact that $d(z,P)$ is
 attained would imply that $z\in P\cap A$, a contradiction. Hence $[a,z)\neq\emptyset$. The
 inclusion  $[a,z)\subseteq \{x\in X: d(x,A)<d(x,P)\}$ holds
because of Proposition \bref{prop:StrictSegment} (with $P$ instead of $A$
and $a$ instead of $p$) and it proves that arbitrarily close to $z$
there are points outside $\dom(P,A)$. Thus 	\beqref{eq:boundary} holds.

To deduce \beqref{eq:interior} we use \beqref{eq:boundary}, the equality
\begin{multline*}
\partial(\dom(P,A))\cup \Int(\dom(P,A))=\dom(P,A)\\= \{x\in X: d(x,P)=d(x,A)\}\cup\{x\in X: d(x,P)<d(x,A)\},
\end{multline*}
 and the fact that the terms in both unions are disjoint. Finally, to see that the equality $\Int(\bisect(P,A))=\emptyset$ holds, assume for a contradiction that this equality does not hold. Then there is a point $x_0\in X$ which is the center of an open ball $B$ contained in $\bisect(P,A))$. Since we know from \beqref{eq:boundary} that  $\bisect(P,A)=\partial(\dom(P,A))$, we conclude that there are in $B$ points from $\dom(P,A)$ and from $X\backslash \dom(P,A)$. Let $x\in B\cap (X\backslash \dom(P,A))$. Then, on the one hand, $d(x,P)=d(x,A)$ (since $x\in B\subseteq\bisect(P,A)$), but on the other hand  $d(x,A)<d(x,P)$ (since $x\notin \dom(P,A)$), a contradiction which proves the assertion.
\end{proof}

The next two lemmas have a somewhat independent nature but they are needed for a later use.
They are  probably known, and their proofs are given for the sake of completeness.
\begin{lem}\label{lem:DistIntersect}
Let $(X,d)$ be a proper metric space and let $(A_{\gamma})_{\gamma=1}^{\infty}$ be a decreasing sequence
of closed subsets of $X$ such that $A:=\bigcap_{\gamma=1}^{\infty}A_{\gamma}$ is nonempty.
Then $d(x,A)=\lim_{\gamma\to\infty}d(x,A_{\gamma})$  for each $x\in X$.
\end{lem}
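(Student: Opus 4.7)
Fix $x\in X$ and write $L_\gamma := d(x,A_\gamma)$. The plan is to establish the two inequalities $\limsup_\gamma L_\gamma \le d(x,A)$ and $\liminf_\gamma L_\gamma \ge d(x,A)$ separately, the first being essentially trivial and the second being the substantive step where properness enters.

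First I would observe that since $(A_\gamma)$ is decreasing and $A=\bigcap_\gamma A_\gamma$ is contained in every $A_\gamma$, the monotonicity of $B\mapsto d(x,B)$ (under reverse inclusion) gives $L_\gamma \le L_{\gamma+1} \le d(x,A)$. Hence $(L_\gamma)$ is non-decreasing and bounded above by $d(x,A)$, so the limit $L := \lim_{\gamma\to\infty} L_\gamma$ exists and satisfies $L\le d(x,A)$.

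The heart of the argument is the reverse inequality. Since $A_\gamma$ is closed and $(X,d)$ is proper, the infimum in $d(x,A_\gamma)$ is attained: pick $a_\gamma\in A_\gamma$ with $d(x,a_\gamma)=L_\gamma$. The sequence $(a_\gamma)$ is bounded (by $L+1$, say, for $\gamma$ large), so by properness it admits a subsequence $(a_{\gamma_k})$ converging to some $a\in X$. The key step is to verify $a\in A$: for any fixed $m$, once $\gamma_k\ge m$ we have $a_{\gamma_k}\in A_{\gamma_k}\subseteq A_m$, and since $A_m$ is closed the limit $a$ lies in $A_m$; as this holds for every $m$, indeed $a\in A$. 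Finally, $d(x,A)\le d(x,a)=\lim_k d(x,a_{\gamma_k}) = \lim_k L_{\gamma_k} = L$, completing the proof.

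The only place where real care is needed is the step $a\in A$, which crucially uses both the monotonicity of $(A_\gamma)$ (so that the tail of the subsequence lies inside each prescribed $A_m$) and the closedness of the $A_m$; without monotonicity this would fail. Properness is used only to extract the convergent subsequence and (for convenience) to attain the distance to each $A_\gamma$. No geodesic or normed-space structure is needed, which is why this auxiliary statement stands apart from the rest of the paper.
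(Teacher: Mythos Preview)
Your proof is correct and follows essentially the same approach as the paper: use monotonicity to get $L_\gamma\nearrow L\le d(x,A)$, use properness to attain $d(x,A_\gamma)$ at some $a_\gamma$ and to extract a convergent subsequence, then use monotonicity and closedness of the $A_\gamma$ to place the limit point in $A$ and conclude $d(x,A)\le L$. The paper's argument is organized slightly differently in wording but is the same proof.
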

\begin{proof}
Let $x\in X$. The definition of a proper metric space implies that any bounded sequence in $X$ has a convergent
subsequence.  Thus, a well known and simple argument shows that the distance between $x$ and any nonempty closed subset is attained. Therefore for each $\gamma\in\N$ there exists $x_{\gamma}\in A_{\gamma}$ such that
$d(x,A_{\gamma})=d(x,x_{\gamma})$. Since $(A_{\gamma})_{\gamma=1}^{\infty}$ is decreasing and $A_{\gamma}$ is closed
for each $\gamma$, it follows that any limit point of the sequence $(x_{\gamma})_{\gamma\in \N}$ is in $A_{\gamma}$ for each $\gamma$ and hence in  $A$. Since
$(A_{\gamma})_{\gamma=1}^{\infty}$ is decreasing, the sequence $(d(x,x_{\gamma}))_{\gamma\in \N}$ is increasing and its limit is bounded by $d(x,A)$. Hence $(x_{\gamma})_{\gamma\in \N}$ is bounded and  there exists $x_{\infty}\in A$ and a subsequence such that $x_{\infty}=\lim_{\beta\to\infty}x_{\gamma_{\beta}}$. From the continuity of the distance function we deduce that $d(x,x_{\infty})=\lim_{\beta\to\infty}d(x,x_{\gamma_{\beta}})\leq d(x,A)$. But
$d(x,A)\leq d(x,x_{\infty})$ by definition because $x_{\infty}\in A$. Therefore  $\lim_{\gamma\to\infty}d(x,A_{\gamma})=\lim_{\beta\to\infty}d(x,A_{\gamma_{\beta}})=d(x,x_{\infty})=d(x,A)$.
\end{proof}

\begin{lem}\label{lem:HausdorffUnionIntersection}
Let $(X,d)$ be a compact metric space. Let $(A_n)_{n=1}^{\infty}$ be a decreasing sequence  of nonempty closed sets of $X$, and let $(B_n)_{n=1}^{\infty}$ be an increasing  sequence of nonempty sets of $X$. Let $A:=\bigcap_{n=1}^{\infty}A_n$ and $B:=\overline{\bigcup_{n=1}^{\infty}B_n}$. Then $A$ and $B$ are nonempty and
$A=\lim_{n\to \infty}A_n$ and $B=\lim_{n\to\infty}B_n$ with respect to the Hausdorff distance.
\end{lem}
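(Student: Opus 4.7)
The plan is to treat the two Hausdorff limits separately, and in each case to reduce the Hausdorff convergence to a uniform convergence statement for the continuous distance functions $x\mapsto d(x,A_n)$ and $x\mapsto d(x,B_n)$, handled via Dini's theorem on the compact space $X$. Observe first that the ``easy halves'' of the Hausdorff distance come for free from the inclusion structure: since $A\subseteq A_n$ for every $n$ one has $\sup_{a\in A}d(a,A_n)=0$, and since $B_n\subseteq B$ for every $n$ one has $\sup_{b_n\in B_n}d(b_n,B)=0$. So in each case only one of the two suprema appearing in the Hausdorff distance must be shown to tend to zero.

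For the sequence $(A_n)$, I first note that $A$ is nonempty by the finite intersection property applied to the decreasing family of nonempty closed subsets of the compact space $X$. Lemma~\ref{lem:DistIntersect} (which applies since compactness implies properness) gives the pointwise identity $\lim_{n\to\infty} d(x,A_n)=d(x,A)$ for every $x\in X$. Because $A_n\supseteq A_{n+1}$, the continuous functions $f_n(x):=d(x,A_n)$ are pointwise non-decreasing on the compact space $X$ and converge pointwise to the continuous function $f(x):=d(x,A)$; Dini's theorem then delivers uniform convergence $f_n\to f$ on $X$. Since $f_n(y)=0$ whenever $y\in A_n$, I conclude $\sup_{y\in A_n}d(y,A)=\sup_{y\in A_n}(f(y)-f_n(y))\leq \sup_{x\in X}|f(x)-f_n(x)|\to 0$, which supplies the missing half of the Hausdorff distance for $(A_n)$.

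For the sequence $(B_n)$, I set $g_n(x):=d(x,B_n)$ and exploit the opposite monotonicity: $B_n\subseteq B_{n+1}$ makes $(g_n)$ pointwise non-increasing. The pointwise limit is identified by $\lim_{n\to\infty} g_n(x)=\inf_n d(x,B_n)=d(x,\bigcup_n B_n)=d(x,\overline{\bigcup_n B_n})=d(x,B)$, using Lemma~\ref{lem:GeneralDistance}\eqref{item:d_x_Agamma} for the first equality and the general fact that the distance to a set equals the distance to its closure for the second. Now $B$ is compact as a closed subset of the compact space $X$, and on $B$ the continuous functions $g_n$ decrease pointwise to $0$; Dini's theorem then yields $g_n\to 0$ uniformly on $B$, i.e., $\sup_{b\in B}d(b,B_n)\to 0$, which is the remaining half for $(B_n)$.

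The only delicate points are the correct verification of the hypotheses of Dini's theorem (pointwise monotonicity of the sequences of distance functions, continuity of all functions involved, compactness of the relevant domain, and continuity of the pointwise limit) and the pointwise identification of the limit for the sequence $(g_n)$; both are routine in the compact setting, and no serious obstacle is anticipated.
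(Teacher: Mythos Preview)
Your proof is correct and takes a genuinely different route from the paper's. The paper argues each half by contradiction via sequential compactness: for the $(A_n)$ case it extracts points $x_m\in A_{n_m}$ with $d(x_m,A)\geq\epsilon$, passes to a convergent subsequence, and shows the limit lies in $A$; for the $(B_n)$ case it extracts points $x_m\in B$ with $d(x_m,B_{n_m})\geq\epsilon$, passes to a subsequence, and reaches a contradiction using the triangle inequality. Your argument instead encodes compactness through Dini's theorem applied to the monotone sequences of $1$-Lipschitz distance functions $f_n=d(\cdot,A_n)$ and $g_n=d(\cdot,B_n)$, reducing the Hausdorff statement to uniform convergence of these functions. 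Your approach is shorter and more conceptual, and it also makes explicit use of Lemma~\ref{lem:DistIntersect} (for the pointwise limit in the $(A_n)$ case), which the paper's proof of the present lemma does not invoke. The paper's direct subsequence argument, on the other hand, is self-contained and avoids appealing to Dini's theorem.
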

\begin{proof}
First, we observe that $A\neq\emptyset$ since the family $\{A_n: n\in\N\}$ is a family of closed sets having the finite intersection  property  and the space is compact. Now suppose by way of contradiction that it is not true that $\lim_{n\to\infty}D(A_n,A)=0$, where $D$ is the Hausdorff distance. Then there exists some $\epsilon>0$ and a subsequence $(n_m)_{m\in N}$ of natural numbers such that $\epsilon\leq D(A_{n_m},A)$ for each $m\in\N$. Since $A\subseteq A_n$ for each $n\in\N$, the definition of the Hausdorff  distance (Definition \bref{def:Hausdorff} above) and the fact that the function $x\mapsto d(x,A)$ is continuous on the closed and hence compact set $A_n$ for each $n\in\N$ (and thus attains a maximum value there), imply that for each $m\in\N$ there exists $x_m\in A_{n_m}$ such that $\epsilon \leq d(x_m,A)$. Let $(x_{m_l})_{l\in\N}$  be a convergent subsequence of the sequence $(x_m)_{m=1}^{\infty}$, and let $x$ be its limit. Since the sequence $(A_n)_{n\in\N}$ is decreasing, it follows that $x_{m_{l'}}\in A_{n_{m_l}}$ whenever $l,l'\in\N$ and  $l\leq l'$. Therefore $x\in A_{n_{m_l}}$ for all $l\in\N$ because $A_{n_{m_l}}$ is closed. Thus $x\in \bigcap_{l=1}^{\infty}A_{n_{m_l}}$.  But $\bigcap_{l=1}^{\infty}A_{n_{m_l}}=A$ because the intersection is decreasing. Hence  $0=d(x,A)=\lim_{l\to\infty}d(x_{m_l},A)\geq \epsilon$, a contradiction.

Now we turn to proving that $B=\lim_{n\to\infty}B_n$ (the statement that $B\neq\emptyset$ is obvious). Suppose by way of contradiction that this is not true. Then there exists some $\epsilon>0$ and a subsequence $(n_m)_{m=1}^{\infty}$ of natural numbers such that $\epsilon\leq D(B_{n_m},B)$ for each $m\in\N$. Since $B_n\subseteq B$ for each $n\in\N$, the definition of the Hausdorff  distance and the fact that for each $m\in\N$ the function $x\mapsto d(x,B_{n_m})$ is continuous on the closed and hence compact set $B$ for each $n\in\N$ (and thus attains a maximum value there) imply that there exists $x_m\in B$ such that $\epsilon \leq d(x_m,B_{n_m})$.  Let $(x_{m_l})_{l\in\N}$  be a convergent subsequence of the sequence $(x_m)_{m=1}^{\infty}$, and let $x$ be its limit. Then $d(x,x_{m_l})<\epsilon/4$ for all $l\in\N$ large enough. Since $B$ is closed it follows that $x\in B$, so by the definition of $B$ there exists $n_0\in \N$ such that $d(x,B_{n_0})<\epsilon/2$. But $B_{n_0}\subseteq B_n$ for each $n\in\N$ larger than $n_0$, and so $d(x,B_n)\leq d(x,B_{n_0})<\epsilon/2$ for all such $n$. In particular,  $d(x,B_{n_{m_l}})<\epsilon/2$ for $l\in\N$ sufficiently large. Thus
\begin{equation*}
\epsilon\leq d(x_{m_l},B_{n_{m_l}})\leq d(x_{m_l},x)+d(x,B_{n_{m_l}})<3\epsilon/4,
\end{equation*}
a contradiction.
\end{proof}

The next claims discuss several subtler properties of $\dom$ and $\Dom$.  Lemma \bref{lem:AntiIntersec_dom} and
Proposition \bref{prop:mDomM} generalize \cite[Lemma 3.1]{AMTn} and \cite[Lemma 5.1]{AMTn} respectively, and are partly inspired by them.
\begin{lem}\label{lem:Dom3}
Let $(X,d)$ be a metric space and let $P=(P_k)_{k\in K}$ be a tuple of nonempty subsets of $X$ having the property that for all $k\in K$,
\begin{equation*}
r_k:=\inf\{d(P_k,P_j):j\neq k\}>0.
\end{equation*}
For $Q\subseteq X$ nonempty and $r>0$, let $B(Q,r):=\{x\in X:d(x,Q)<r\}$. We denote by $(\Dom P)_k$ the $k$-th  component of $\Dom P$. Then:
\begin{enumerate}[(a)]
\item\label{item:B(P,r)Dom2} $B(P_k,r_k/4)\subseteq (\Dom^2 P)_k\subseteq (\Dom^{\gamma} P)_k$ for each integer $\gamma\geq 1$ and any $k\in K$.
\item \label{item:r} If $(X,d)$ is a geodesic metric space, then $d(x,B(Q,r))+r=d(x,Q)$ for each $Q\subseteq X$ nonempty, each $r>0$ and each $x\notin B(Q,r)$.
\item\label{item:Dom3Disjoint} If $d(x,B(P_k,r))<d(x,P_k)$ for all $k\in K$, $r>0$, and  $x\notin \overline{P_k}$, then
for each integer $\gamma\geq 2$ the components of $\Dom^{\gamma}P $ are disjoint. Moreover, if $X$ is a geodesic metric
 space, then for all $j,k\in K$,$j\neq k$ and all integer $\gamma\geq 2$, the following inequality holds: $(r_k/8)+(r_j/8)\leq d((\Dom^{\gamma}P)_k,(\Dom^{\gamma} P)_j)$.
\item\label{item:r_k} Suppose  that $(X,d)$ is a geodesic metric space. Given $j,k\in K$, $j\neq k$, and an integer $\gamma\geq 2$, if $x\in (\Dom^{\gamma}P)_j$, then $r_k/4\leq d(x,P_k)$.
\end{enumerate}
\end{lem}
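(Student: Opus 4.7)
I will address the four parts in order, using the positive separation hypothesis $r_k>0$ and Lemma \ref{lem:DomPk} throughout.

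\textbf{Part (a).} For the first inclusion, let $x\in B(P_k,r_k/4)$ and take any $y$ in $\bigcup_{j\neq k}(\Dom P)_j$, say $y\in(\Dom P)_j$ with $j\neq k$. Since $y\in \dom(P_j,\bigcup_{i\neq j}P_i)$, we have $d(y,P_j)\leq d(y,P_k)$, and the triangle inequality (applied to the infimum defining $d(P_j,P_k)$) gives $r_k\leq d(P_j,P_k)\leq d(y,P_j)+d(y,P_k)\leq 2d(y,P_k)$, so $d(y,P_k)\geq r_k/2$. Hence $d(x,y)\geq d(y,P_k)-d(x,P_k)\geq r_k/2 - r_k/4 = r_k/4 > d(x,P_k)$, which shows $d(x,\bigcup_{j\neq k}(\Dom P)_j)\geq r_k/4 > d(x,P_k)$, i.e., $x\in(\Dom^2 P)_k$. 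For the second inclusion, recall from Lemma \ref{lem:DomPk} that $\Dom^{2n}P=I^{(n)}$ is increasing and $\Dom^{2n+1}P=O^{(n)}$ is decreasing in $n$, with $I^{(n)}\subseteq O^{(m)}$ for all $n,m\geq 0$; so $\Dom^2 P=I^{(1)}$ is contained in $\Dom^\gamma P$ for every $\gamma\geq 1$.

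\textbf{Part (b).} This is a direct geodesic construction. Fix $x\notin B(Q,r)$ and $q\in Q$; since $d(x,q)\geq d(x,Q)\geq r$, the geodesic $[x,q]$ contains, for each $\varepsilon\in(0,r)$, a point $z$ with $d(z,q)=r-\varepsilon$ and therefore $d(x,z)=d(x,q)-(r-\varepsilon)$. This $z$ lies in $B(Q,r)$ since $d(z,Q)\leq r-\varepsilon<r$, so $d(x,B(Q,r))\leq d(x,q)-r+\varepsilon$. Taking the infimum over $q\in Q$ and letting $\varepsilon\to 0^+$ yields $d(x,B(Q,r))\leq d(x,Q)-r$.

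\textbf{Part (c).} For disjointness, suppose $x\in(\Dom^\gamma P)_k\cap(\Dom^\gamma P)_j$ with $j\neq k$ and $\gamma\geq 2$. From $x\in(\Dom^\gamma P)_k$ we obtain $d(x,P_k)\leq d(x,(\Dom^{\gamma-1}P)_i)$ for each $i\neq k$; by part (a) applied to index $i$ (using $\gamma-1\geq 1$) we have $(\Dom^{\gamma-1}P)_i\supseteq B(P_i,r_i/4)$, hence $d(x,P_k)\leq d(x,B(P_i,r_i/4))$. A first check rules out $x\in\overline{P_i}$: if $x\in\overline{P_i}$ then $d(x,B(P_i,r_i/4))=0$, forcing $d(x,P_k)=0$, which contradicts $d(\overline{P_k},\overline{P_i})\geq r_k>0$. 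The standing hypothesis of (c) now gives the strict inequality $d(x,P_k)<d(x,P_i)$ for every $i\neq k$. Taking $i=j$ yields $d(x,P_k)<d(x,P_j)$; the symmetric argument applied to $x\in(\Dom^\gamma P)_j$ yields $d(x,P_j)<d(x,P_k)$, a contradiction. For the quantitative ``moreover'' part, take $x\in(\Dom^\gamma P)_k$ and $y\in(\Dom^\gamma P)_j$; by the disjointness just proved and (a), $x\notin(\Dom^\gamma P)_j\supseteq B(P_j,r_j/4)$, so by (b) applied to $Q=P_j$, $r=r_j/4$, we get $d(x,B(P_j,r_j/4))+r_j/4\leq d(x,P_j)$, which combined with the inequality $d(x,P_k)\leq d(x,B(P_j,r_j/4))$ yields
\begin{equation*}
d(x,P_k)+r_j/4\leq d(x,P_j).
\end{equation*}
Symmetrically, $d(y,P_j)+r_k/4\leq d(y,P_k)$. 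Two triangle inequalities give $d(x,P_j)\leq d(x,y)+d(y,P_j)$ and $d(y,P_k)\leq d(x,y)+d(x,P_k)$; adding the four inequalities and cancelling the common terms $d(x,P_k)+d(y,P_j)$ on both sides yields $r_j/4+r_k/4\leq 2d(x,y)$, i.e., $d(x,y)\geq r_k/8+r_j/8$.

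\textbf{Part (d).} Assume $X$ is a geodesic metric space. Then (b) immediately yields $d(x,B(P_k,r))<d(x,P_k)$ for every $k$, $r>0$, $x\notin\overline{P_k}$ (split according to whether $x\in B(P_k,r)$ or not), so the hypothesis of (c) is in force. If $x\in(\Dom^\gamma P)_j$ satisfied $d(x,P_k)<r_k/4$, then $x\in B(P_k,r_k/4)\subseteq(\Dom^\gamma P)_k$ by (a), contradicting the disjointness established in (c). Hence $d(x,P_k)\geq r_k/4$.

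The main delicate point will be the ``moreover'' estimate of part (c): it forces one to combine (a), the disjointness half of (c), and (b) in a single chain, and one must verify the exclusion $x\notin B(P_j,r_j/4)$ carefully before invoking (b). The other steps are essentially routine manipulations with the triangle inequality.
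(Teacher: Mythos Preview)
Your proof is correct and follows essentially the same approach as the paper's: the same triangle-inequality chain in (a), a geodesic construction in (b), the strict-inequality contradiction in (c), and the combination of (a) and (c) in (d). The only cosmetic differences are that in (b) you use an $\varepsilon$-point on $[x,q]$ rather than a boundary point of $B(Q,r)$, and in (c) you argue directly for general $\gamma\geq 2$ rather than first reducing to $\gamma=3$; your explicit verification (deferred to (d)) that the geodesic hypothesis implies the strict-inequality hypothesis of (c) is actually a bit more careful than the paper, which uses this implicitly.
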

\begin{proof}
\begin{enumerate}[(a)]
\item Let $k\in K$ and suppose that $x\in B(P_k,r_k/4)$, i.e., $d(x,P_k)< r_k/4$.
Definition \bref{def:zone} implies that  $(\Dom^2 P)_k=\dom\bigl(P_k,\bigcup_{j\neq k}\dom\bigl(P_j,\bigcup_{i\neq j}P_i\bigr)\bigr)$. Hence, in order to
prove that $x\in (\Dom^2 P)_k$ it suffices to prove that $r_k/4\leq d(x,y)$ for all $y\in \bigcup_{j\neq k}\dom(P_j,\bigcup_{i\neq j}P_i)$. Given some $y$ in this union, it belongs to $\dom\bigl(P_j,\bigcup_{i\neq j}P_i\bigr)$ for some $j\in K\backslash\{k\}$. This fact combined with the condition $k\neq j$ imply that
\begin{equation*}
d(y,P_j)\leq d\biggl(y,\bigcup_{i\neq j}P_i\biggr)\leq d(y,P_k)\leq d(y,x)+d(x,P_k)<d(y,x)+(r_k/4).
\end{equation*}
This inequality, the triangle inequality, and the definition of $r_k$, all imply that
\begin{equation*}
r_k\leq d(P_j,P_k)\leq d(P_j,y)+d(y,x)+d(x,P_k)\leq 2d(x,y)+2r_k/4,
\end{equation*}
i.e., $d(x,P_k)<r_k/4\leq d(x,y)$ and the assertion follows. Finally, Lemma \bref{lem:DomPk}  implies that $I^{(2)}=\Dom^2 P\subseteq \Dom^{\gamma} P$ for any integer $\gamma\geq 1$, and hence $(\Dom^2 P)_k\subseteq (\Dom^{\gamma} P)_k$ for each $k\in K$.
\item
Let $p\in Q$. Since $x\notin B(Q,r)$,
a simple argument  shows that the intersection of the compact segment $[x,p]$  with the closed set $\partial B(Q,r)$  is attained at some point $y\in [x,p]$, e.g., at $y=\gamma(t)$ for $t=\inf\{u\in [0,d(x,p)]: \gamma(u)\in B(Q,r)\}$
(where $\gamma:[0,d(x,p)]\to X$, is the isometric function which maps $[0,d(x,p)]$ onto $[x,p]$). It must be that
$s:=r-d(y,p)\leq 0$. Indeed, if this inequality does not hold, then $s>0$. Hence the open ball $B(y,s)$ is well defined and for each $z$ in this ball we have $d(z,Q)\leq d(z,p)\leq d(z,y)+d(y,p)<s+d(y,p)=r$. Thus $z\in B(Q,r)$ for all $z\in B(y,s)$ and hence $y$ is in the interior of $B(Q,r)$, a contradiction to the fact that $y\in \partial(B(Q,r))$.
Therefore, since $y\in [x,p]$, we obtain $d(x,p)=d(x,y)+d(y,p)\geq d(x,y)+r$. Thus, for all $p\in Q$
\begin{equation*}
d(x,B(Q,r))=d(x,\overline{B(Q,r)})\leq d(x,y)\leq d(x,p)-r.
\end{equation*}
As a result, $d(x,B(Q,r))+r\leq \inf\{d(x,p): p\in Q\}=d(x,Q)$. To see that actually $d(x,B(Q,r))+r=d(x,Q)$, let $\epsilon>0$ be arbitrary. By the definition  of $d(x,B(Q,r))$ there exists $z\in B(Q,r)$ satisfying $d(x,z)<d(x,B(Q,r))+\epsilon$. This inequality, the triangle inequality, and the fact that $z\in B(Q,r)$ imply that $d(x,Q)\leq d(x,z)+d(z,Q)<d(x,B(Q,r))+\epsilon+r$. Since $\epsilon$ can be an arbitrary small positive number, we conclude that $d(x,Q)\leq d(x,B(Q,r))+r$. Hence $d(x,B(Q,r))+r=d(x,Q)$.
\item Suppose that we know that the components of $\Dom^{3}P $ are disjoint.
Then for each integer $\gamma\geq 2$ the components of $\Dom^{\gamma}P $ are disjoint,  because $\Dom^{\gamma}P\subseteq \Dom^{3}P$ by Lemma  \bref{lem:DomPk}. We now prove that indeed the components of $\Dom^{3}P $ are disjoint.
Let $k_1\neq k_2$ be two indices in $K$, and assume to the contrary that $x\in (\Dom^3 P)_{k_1}\bigcap (\Dom^3 P)_{k_2}$.
By the definition of $\Dom$ and by Part \beqref{item:B(P,r)Dom2} we have
\begin{align*}
d(x,P_{k_1})\leq d\biggl(x,\bigcup_{j\neq k_1}(\Dom^2 P)_j\biggr) \leq d(x,(\Dom^2 P)_{k_2})\\
\leq d(x,B(P_{k_2},r_{k_2}/4))\leq d(x,P_{k_2}).
\end{align*}
Hence $x\notin \overline{P_{k_2}}$, because if $x\in \overline{P_{k_2}}$, then from the above inequality $d(x,P_{k_1})\leq d(x,P_{k_2})=0$, namely $x\in \overline{P_{k_1}}$, a contradiction to the assumption that $0<r_{k_1}\leq d(P_{k_1},P_{k_2})$. Thus $d(x,B(P_{k_2},r_{k_2}/4))< d(x,P_{k_2})$ by assumption, so actually $d(x,P_{k_1})<d(x,P_{k_2})$. In the same way $d(x,P_{k_2})< d(x,P_{k_1})$, a contradiction.

Finally, suppose that $X$ is a geodesic metric space.
Let $j,k\in K$ be different. Since $(\Dom^{\gamma}P)_i\subseteq (\Dom^3 P)_i$ for each integer $\gamma\geq 2$ and each $i\in K$ (by Lemma  \bref{lem:DomPk}), it suffices to show that $(r_k/8)+(r_j/8)\leq d(x,y)$ for all $x\in (\Dom^3 P)_{k}$ and $y\in(\Dom^3 P)_j$.
 By definition, the triangle inequality and parts \beqref{item:B(P,r)Dom2}, \beqref{item:r},
\begin{align*}
d(x,P_k)\leq d\biggl(x,\bigcup_{i\neq k}(\Dom^2 P)_i\biggr)& \leq d(x,(\Dom^2 P)_j)
\leq d(x,B(P_j,r_{j}/4))\\
&\leq d(x,P_j)-r_j/4\leq d(x,y)+d(y,P_j)-(r_j/4).
\end{align*}
In the same way $d(y,P_j)\leq d(y,x)+d(x,P_k)-(r_k/4)$. Now we add these two inequalities and do additional elementary arithmetic operations which imply that indeed  $(r_k/8)+(r_j/8)\leq d(x,y)$.
\item If $x\in (\Dom^{\gamma}(P))_j$, then $x\notin B(P_k,r_k/4)\subseteq (\Dom^{\gamma}(P))_k$ from Parts  \beqref{item:B(P,r)Dom2} and \beqref{item:Dom3Disjoint}. Hence Part \beqref{item:r} implies that $r_k/4\leq d(x,P_k)-d(x,B(P_k,r_k/4))\leq d(x,P_k)$.
\end{enumerate}
\end{proof}

\begin{lem}\label{lem:AntiIntersec_dom}
Let $(X,d)$ be a metric space, let $P\subseteq X$ be nonempty and suppose that $\{C_{\gamma}\}_{\gamma=1}^{\infty}$
is a family of subsets of $X$ such that $\bigcap_{\gamma=1}^{\infty} C_{\gamma}\neq
\emptyset$. If
\begin{equation}\label{eq:domInLess}
\dom\Biggl(P,\bigcap_{\gamma=1}^{\infty} C_{\gamma}\Biggr)= \overline{\Biggl\{x\in X:
d(x,P)<d\Biggl(x,\bigcap_{\gamma=1}^{\infty} C_{\gamma}\Biggr)\Biggr\}},
\end{equation}
and
\begin{equation}\label{eq:ALimAk}
d\Biggl(y,\bigcap_{\gamma=1}^{\infty} C_{\gamma}\Biggr)=\limsup_{\gamma\to \infty}d(y,C_{\gamma}),\quad \forall y\in X,
\end{equation} then
\begin{equation}\label{eq:AntiCont}
\dom\Biggl(P,\bigcap_{\gamma=1}^{\infty} C_{\gamma}\Biggr)=\overline{\bigcup_{\gamma=1}^{\infty}\dom(P,C_{\gamma})}.
\end{equation}
\end{lem}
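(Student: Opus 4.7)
The plan is to prove the two set-inclusions separately. For the inclusion $\overline{\bigcup_{\gamma=1}^{\infty}\dom(P,C_{\gamma})}\subseteq \dom(P,\bigcap_{\gamma=1}^{\infty}C_{\gamma})$, I would note that $\bigcap_{\beta=1}^{\infty}C_{\beta}\subseteq C_{\gamma}$ for every $\gamma$ and then invoke the antimonotonicity of $\dom(P,\cdot)$ in its second argument (the observation underlying Lemma~\ref{lem:GeneralDistance}\eqref{item:UnionIntersection}) to get $\dom(P,C_{\gamma})\subseteq \dom(P,\bigcap_{\beta}C_{\beta})$ for each $\gamma$. Taking the union over $\gamma$ and then the closure, and using that $\dom(P,\bigcap_{\beta}C_{\beta})$ is closed by Lemma~\ref{lem:dom}\eqref{item:closure}, yields the easy direction.

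For the reverse inclusion I would first appeal to hypothesis \eqref{eq:domInLess} to reduce matters to showing that the ``strict'' set $S:=\{x\in X: d(x,P)<d(x,\bigcap_{\gamma=1}^{\infty}C_{\gamma})\}$ is contained in $\overline{\bigcup_{\gamma=1}^{\infty}\dom(P,C_{\gamma})}$. The key observation is that, thanks to \eqref{eq:ALimAk}, any $x\in S$ satisfies $\limsup_{\gamma\to\infty}d(x,C_{\gamma})=d(x,\bigcap_{\gamma=1}^{\infty}C_{\gamma})>d(x,P)$. By the very definition of $\limsup$, one can then pick $\varepsilon>0$ so small that $\limsup_{\gamma\to\infty}d(x,C_{\gamma})-\varepsilon>d(x,P)$, and this produces infinitely many indices $\gamma$ with $d(x,C_{\gamma})>d(x,P)$, i.e., with $x\in \dom(P,C_{\gamma})$. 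In particular $x\in \bigcup_{\gamma=1}^{\infty}\dom(P,C_{\gamma})\subseteq \overline{\bigcup_{\gamma=1}^{\infty}\dom(P,C_{\gamma})}$. Thus $S\subseteq \overline{\bigcup_{\gamma=1}^{\infty}\dom(P,C_{\gamma})}$, and because the right-hand side is closed I may take closures of both sides and combine with \eqref{eq:domInLess} to obtain $\dom(P,\bigcap_{\gamma=1}^{\infty}C_{\gamma})=\overline{S}\subseteq \overline{\bigcup_{\gamma=1}^{\infty}\dom(P,C_{\gamma})}$.

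There is no real obstacle here; the proof is essentially an exercise in unpacking the hypotheses. Hypothesis \eqref{eq:domInLess} plays exactly the role of allowing the argument to ignore the ``equality'' part of $\dom(P,\bigcap_{\gamma}C_{\gamma})$ (which is crucial, since a point on the equidistant locus from $P$ and $\bigcap_{\gamma}C_{\gamma}$ need not belong to any single $\dom(P,C_{\gamma})$), while hypothesis \eqref{eq:ALimAk} does the work of transferring the strict inequality $d(x,P)<d(x,\bigcap_{\gamma}C_{\gamma})$ into a strict inequality $d(x,P)<d(x,C_{\gamma})$ for some (in fact infinitely many) $\gamma$. If there is a subtle point, it is the need to use $\limsup$ rather than $\lim$ in \eqref{eq:ALimAk}: if $d(x,C_{\gamma})$ is not known to converge, one has to argue via the standard characterization of $\limsup$, which nonetheless still produces the required $\gamma$ since a supremum exceeding $d(x,P)$ forces some individual term to exceed $d(x,P)$.
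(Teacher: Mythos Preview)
Your proof is correct and follows essentially the same approach as the paper: both directions are handled identically in spirit, with the easy inclusion coming from antimonotonicity and closedness of $\dom$, and the hard inclusion coming from passing to a nearby point in the strict set via \eqref{eq:domInLess} and then using \eqref{eq:ALimAk} to land in some $\dom(P,C_\gamma)$. Your organization (first reducing to $S$, then showing $S\subseteq\bigcup_\gamma\dom(P,C_\gamma)$ directly and taking closures) is a minor repackaging of the paper's argument, and your handling of the $\limsup$ is in fact slightly more careful than the paper's.
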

\begin{proof}
By the antimonotonicity of $\dom(P,\cdot)$ we have $\dom\bigl(P,\bigcap_{j=1}^{\infty}
C_j\bigr)\supseteq\dom(P,C_{\gamma})$ for all $\gamma\in\N$. Since $\dom\bigl(P,\bigcap_{j=1}^{\infty}
C_j\bigr)$ is a closed set, the inclusion $\dom(P,\bigcap_{\gamma=1}^{\infty}
C_{\gamma})\supseteq\overline{\bigcup_{\gamma=1}^{\infty}\dom(P,C_{\gamma})}$ follows. For
the reverse inclusion, let $\epsilon>0$ be given and suppose that
$x\in \dom(P,\bigcap_{\gamma=1}^{\infty} C_{\gamma})$. We should prove that there
are $\gamma\in \N$ and $y\in \dom(P,C_{\gamma})$ such that $d(y,x)<\epsilon$.

By \beqref{eq:domInLess} there is $y\in X$ such that
$d(x,y)<\epsilon$ and $r:=d\bigl(y,\bigcap_{\gamma=1}^{\infty} C_{\gamma}\bigr)-d(y,P)>0$,
and by \beqref{eq:ALimAk} there is $\gamma\in\N$ large enough such that
$\bigl|d\bigl(y,\bigcap_{j=1}^{\infty} C_j\bigr)-d(y,C_{\gamma})\bigr|<r/2$. Hence
\begin{equation*}
d(y,P)+r/2<d\biggl(y,\bigcap_{\gamma=1}^{\infty} C_{\gamma}\biggr)<d(y,C_{\gamma})+r/2.
\end{equation*}
Thus $d(y,P)<d(y,C_{\gamma})$, and so $y\in \dom(P,C_{\gamma})$.
\end{proof}

\begin{prop}\label{prop:mDomM}
Let $(X,d)$ be a geodesic metric space and let $(P_k)_{k\in K}$ be a tuple of nonempty closed  sets in $X$ satisfying  $r:=\inf\{d(P_k,P_j): j,k\in K,\, j\neq k\}>0$. For each nonnegative integer $\gamma$
and each $k\in K$ let
\begin{equation}\label{eq:R_gammaA_k}
R^{\gamma}:=\Dom^{2\gamma+1}(P_k)_{k\in K}=O^{(\gamma)}, \quad
A_{\gamma,k}:=\bigcup_{j\neq k} R_j^{\gamma},\quad A_k:=\bigcap_{\gamma=1}^{\infty}A_{\gamma,k}.
\end{equation}
\begin{enumerate}[(a)]
\item\label{item:UnionIntersect} The following equality holds for all $k\in K$:
\begin{equation}\label{eq:UnionIntersect}
\bigcup_{j\neq  k}\bigcap_{\gamma=1}^{\infty}R_j^{\gamma}=\bigcap_{\gamma=1}^{\infty}\bigcup_{j\neq k} R_j^{\gamma}=A_k.
\end{equation}
\item\label{item:AkNonEmpty} $A_{\gamma,k}$ is a closed set  for each $k\in K$ and each $\gamma\geq 1$.
$A_k$ is nonempty, closed, and satisfies $d(P_k,A_k)\geq r/4$  for each $k\in K$.
\item\label{item:DomCapCup} If for each $k\in K$ both \beqref{eq:domInLess} and \beqref{eq:ALimAk}  hold with $C_{\gamma}=A_{\gamma,k}$, then
\begin{equation}\label{eq:DomCapCup}
\Dom\Biggl(\bigcap_{\gamma=0}^{\infty} \Dom^{2\gamma+1}(P_k)_{k\in K}\Biggr)=
\overline{\bigcup_{\gamma=0}^{\infty}\Dom^{2\gamma}(P_k)_{k\in K}}.
\end{equation}
\item\label{item:StrictlyConvexProper} If, in addition, $(X,d)$ is  proper and has the geodesic
inclusion property, then \beqref{eq:domInLess}--\beqref{eq:ALimAk} (with $C_{\gamma}=A_{\gamma,k}$, $k\in K$ arbitrary)
and hence \beqref{eq:DomCapCup} hold.
\end{enumerate}
\end{prop}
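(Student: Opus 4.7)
The plan is to treat the four parts in order, with part (a) --- the interchange of union and intersection flagged earlier in the text as the subtle point --- carrying the real content, while the remaining parts reduce to applications of previously established machinery (Lemma~\ref{lem:Dom3}, Lemma~\ref{lem:AntiIntersec_dom}, Lemma~\ref{lem:DistIntersect}, Proposition~\ref{prop:StrictInt}). The central fact that drives everything is that for each $\gamma\ge 1$ the components of $R^\gamma=\Dom^{2\gamma+1}(P)$ are pairwise separated by a universal positive distance: since $2\gamma+1\ge 3\ge 2$, Lemma~\ref{lem:Dom3}\eqref{item:Dom3Disjoint} yields $d(R_j^\gamma,R_i^\gamma)\ge (r_i+r_j)/8\ge r/4$ for all $i\neq j$.

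For part (a), the inclusion $\supseteq$ is formal. For $\subseteq$, fix $x$ in the right-hand side and, for each $\gamma\ge 1$, let $j(\gamma)\neq k$ be the index (necessarily unique, by the separation above) with $x\in R_{j(\gamma)}^\gamma$. The outer sequence is decreasing by Lemma~\ref{lem:DomPk}, so $R^{\gamma+1}\subseteq R^\gamma$ componentwise; hence $x\in R_{j(\gamma+1)}^{\gamma+1}\subseteq R_{j(\gamma+1)}^\gamma$, and uniqueness of the containing component forces $j(\gamma+1)=j(\gamma)$. Thus $j(\gamma)$ is a constant $j_0\neq k$ and $x\in\bigcap_\gamma R_{j_0}^\gamma$. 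This index-stabilization argument is the main obstacle: it is the only step where the combinatorial structure of the iterates (beyond formal set manipulations) is genuinely used, and it would fail without a uniform positive separation of the components --- which is why the assumption $r>0$ and the passage $2\gamma+1\ge 3$ matter here.

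For part (b), the same separation yields closedness of each $A_{\gamma,k}$ for $\gamma\ge 1$, even when $K$ is infinite: a convergent sequence in $\bigcup_{j\neq k}R_j^\gamma$ is eventually trapped inside a single component because consecutive terms eventually satisfy $d(x_n,x_m)<r/4$, so the indices $j_n$ must stabilize to some $j_0$ and the limit lies in the closed set $R_{j_0}^\gamma$. Non-emptiness of $A_k$ is immediate from the inclusion $P_j\subseteq R_j^\gamma$ (any dominance region contains its own site), giving $\bigcup_{j\neq k}P_j\subseteq A_k$. Finally, any $x\in A_k\subseteq A_{1,k}$ lies in some $R_j^1=(\Dom^3 P)_j$ with $j\neq k$, so Lemma~\ref{lem:Dom3}\eqref{item:r_k} delivers $d(x,P_k)\ge r_k/4\ge r/4$, whence $d(P_k,A_k)\ge r/4$.

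For parts (c) and (d) the strategy is to feed the ingredients produced by (a) and (b) into the abstract commutation lemma. I would verify (d) first: since $(A_{\gamma,k})_{\gamma\ge 1}$ is a decreasing family of nonempty closed sets by (b), Lemma~\ref{lem:DistIntersect} gives the full continuity $d(y,A_k)=\lim_\gamma d(y,A_{\gamma,k})$, which a fortiori yields \eqref{eq:ALimAk}; and since $d(P_k,A_k)\ge r/4>0$ forces $P_k\cap A_k=\emptyset$, Proposition~\ref{prop:StrictInt} applied to $P=P_k$ and $A=A_k$ gives \eqref{eq:domInLess}. Part (c) then follows by reading both sides of \eqref{eq:DomCapCup} componentwise: after commuting union and intersection via part (a), the $k$-th component of the left-hand side becomes $\dom(P_k,\bigcap_\gamma A_{\gamma,k})$, while the right-hand side has $k$-th component $\overline{\bigcup_\gamma(\Dom^{2(\gamma+1)}P)_k}$, which equals $(\overline{\bigcup_{\gamma\ge 0}\Dom^{2\gamma}P})_k$ by the monotonicity of the inner sequence from Lemma~\ref{lem:DomPk}. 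Lemma~\ref{lem:AntiIntersec_dom} with $C_\gamma=A_{\gamma,k}$ equates these two expressions, completing the proof.
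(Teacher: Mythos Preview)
Your proof is correct and follows essentially the same route as the paper: the index-stabilization argument in part~(a) via the uniform separation from Lemma~\ref{lem:Dom3}\eqref{item:Dom3Disjoint}, the closedness-from-separation argument in part~(b), and the reduction of parts~(c)--(d) to Lemma~\ref{lem:AntiIntersec_dom}, Lemma~\ref{lem:DistIntersect}, and Proposition~\ref{prop:StrictInt} are exactly the paper's moves. One cosmetic slip: in part~(a) you have the labels ``$\supseteq$'' and ``$\subseteq$'' swapped (the formal inclusion is $\bigcup_j\bigcap_\gamma\subseteq\bigcap_\gamma\bigcup_j$, and your index-stabilization argument correctly establishes the reverse containment), but the mathematics is right.
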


\begin{proof}
\begin{enumerate}[(a)]
\item Fix $k\in K$.
If $x\in \bigcup_{j\neq  k}\bigcap_{\gamma=1}^{\infty}R_j^{\gamma}$, then
$x\in \bigcap_{\gamma=1}^{\infty}R_j^{\gamma}$ for some $j\neq k$. Since $R_j^{\gamma}\subseteq \bigcup_{i\neq k} R_i^{\gamma}$ for all $\gamma\in\N$, it follows that
$\bigcap_{\gamma=1}^{\infty}R_j^{\gamma}\subseteq \bigcap_{\gamma=1}^{\infty}\bigcup_{i\neq k} R_i^{\gamma}$, so $x\in \bigcap_{\gamma=1}^{\infty}\bigcup_{i\neq k} R_i^{\gamma}$. On the other hand, let $x\in \bigcap_{\gamma=1}^{\infty}\bigcup_{j\neq k} R_j^{\gamma}$. Given  $\gamma\in \N$, Lemma \bref{lem:Dom3}\,\beqref{item:Dom3Disjoint} (note that $2\gamma+1\geq 3$ and
$R^{\gamma}=\Dom^{2\gamma+1}(P_k)_{k\in K}$) implies that  $R_i^{\gamma}\cap R_j^{\gamma}=\emptyset$ for indices $j\neq i$. Hence for each $\gamma\in\N$ there exists exactly one  index $j\neq k$ such that $x\in R_j^{\gamma}$. It must be that all these indices coincide. Indeed, if this is not true, then $x\in R_j^{\gamma}\cap R_{j'}^{\gamma'}$ for two natural numbers $\gamma'>\gamma$ and two indices $j'\neq j$. But  $R_{j'}^{\gamma'}\subseteq R_{j'}^{\gamma}$ by Lemma \bref{lem:DomPk}. Thus $x\in R_j^{\gamma}\cap R_{j'}^{\gamma}$, a contradiction to Lemma \bref{lem:Dom3}\,\beqref{item:Dom3Disjoint}. As a result
$x\in \bigcap_{\gamma=1}^{\infty}R_j^{\gamma}$ for some $j\neq k$, and we conclude that $x\in \bigcup_{j\neq  k}\bigcap_{\gamma=1}^{\infty}R_j^{\gamma}$. This inclusion, the one proved before, and the definition of $A_k$, all imply \beqref{eq:UnionIntersect}.

\item Lemma \bref{lem:DomPk} implies that $P_j\subseteq \bigcap_{\gamma=1}^{\infty}R_j^{\gamma}$ for each $j\in K$, and so  \beqref{eq:UnionIntersect} implies that $\emptyset\neq\bigcup_{j\neq k}P_j\subseteq \bigcup_{j\neq k}\bigcap_{\gamma=1}^{\infty}R_j^{\gamma}=A_k$ for all $k\in K$. For each $\gamma\in \N$ and $k\in K$ we have $r/4\leq d(P_k,A_{\gamma,k})\leq d(P_k,A_k)$, where the left inequality follows from Lemma \bref{lem:Dom3}\,\beqref{item:r_k} and the definition \beqref{eq:R_gammaA_k} of $A_{\gamma,k}$ (we use Lemma \bref{lem:Dom3}\,\beqref{item:r_k} with each $x\in A_{\gamma,k}$ and recall that $r\leq r_k$) and the right inequality follows from the inclusion $A_k\subseteq A_{\gamma,k}$. From Lemma \bref{lem:DomPk}
the intersection in \beqref{eq:R_gammaA_k} which defines $A_k$ is decreasing. By Lemma \bref{lem:dom}\,\beqref{item:closure} and Lemma \bref{lem:Dom3}\,\beqref{item:Dom3Disjoint} each $A_{\gamma,k}$ is a closed set, because it is a union of closed and disjoint sets with a positive distance (at least $r/4$)  between any two  different members in the union.
 Therefore $A_k$ is the intersection of closed sets and thus closed.

\item By Part \beqref{item:AkNonEmpty} we know that $A_k\neq \emptyset$ for any $k\in K$. Given $k\in K$, our assumption
 implies that \beqref{eq:domInLess}--\beqref{eq:ALimAk} hold with $C_{\gamma}=A_{\gamma,k}$ for
each $\gamma\in \N$. By Part \beqref{item:UnionIntersect} we know that \beqref{eq:UnionIntersect}
holds. These facts, together with  Lemma  \bref{lem:DomPk}, Lemma \bref{lem:AntiIntersec_dom},  the definition of $\Dom$,
and the fact that intersection, union, and closure on tuples are taken component-wise, imply
\begin{multline*}
\Dom\Biggl(\bigcap_{\gamma=0}^{\infty}\Dom^{2\gamma+1}(P_k)_{k\in K}\Biggr)
=\Dom\Biggl(\bigcap_{\gamma=0}^{\infty}R^{\gamma}\Biggr)
=\Dom\Biggl(\bigcap_{\gamma=1}^{\infty}R^{\gamma}\Biggr)\\
=\Biggl(\dom\Biggl(P_k,\bigcup_{j\neq k}\bigcap_{\gamma=1}^{\infty}R_j^{\gamma}\Biggr)\Biggr)_{k\in K}
=\Biggl(\dom\Biggl(P_k,\bigcap_{\gamma=1}^{\infty}\bigcup_{j\neq k}R_j^{\gamma}\Biggr)\Biggr)_{k\in K}\\
=\Biggl(\dom\Biggl(P_k,\bigcap_{\gamma=1}^{\infty}A_{\gamma,k}\Biggr)\Biggr)_{k\in K}
=\Biggl(\overline{\bigcup_{\gamma=1}^{\infty}\dom(P_k,\bigcup_{j\neq k}R_j^{\gamma})}\Biggr)_{k\in K}\\
=\overline{\Biggl(\bigcup_{\gamma=1}^{\infty}\Biggl(\Dom(R^{\gamma})\Biggr)_k\Biggr)_{k\in K}}
=\overline{\bigcup_{\gamma=1}^{\infty}\Dom(R^{\gamma})}
=\overline{\bigcup_{\gamma=0}^{\infty}\Dom^{2\gamma}(P_k)_{k\in K}}.
\end{multline*}

\item Since for each $k\in K$ the intersection in \beqref{eq:R_gammaA_k} which defines $A_k$ is decreasing (from Lemma \bref{lem:DomPk}) of closed subsets (part \beqref{item:AkNonEmpty}),
Lemma  \bref{lem:DistIntersect} implies that $d(x,A_k)=\lim_{\gamma\to\infty}d(x,A_{\gamma,k})$ for each $x\in X$
and hence \beqref{eq:ALimAk} holds with $C_{\gamma}:=A_{\gamma,k}$ for every $\gamma\in\N$.
Because $d(P_k,A_k)\geq r/4>0$ by Part \beqref{item:AkNonEmpty} and because $(X,d)$ is a proper geodesic metric space having the geodesic inclusion property, we conclude from Theorem \bref{thm:BoundaryInterior}
(equality \beqref{eq:closure}) that \beqref{eq:domInLess} holds with $P_k$ instead of $P$ and $A_k$ instead of $A$.
From part \beqref{item:DomCapCup} we conclude that \beqref{eq:DomCapCup} holds.
\end{enumerate}
\end{proof}

Now it is possible to prove the assertions formulated in
Section \bref{sec:ConvergenceResult}.

\begin{proof}[{\bf Proof of Theorem \bref{thm:Metric}}]
Lemma  \bref{lem:dom}\,\beqref{item:DomClosure} and Lemma \bref{lem:GeneralDistance}\,\beqref{item:DomUnionIntersect} imply that
\begin{equation}\label{eq:Dom(m)M}
\Dom(m)=\Dom\Biggl(\overline{\bigcup_{n=0}^{\infty}I^{(n)}}\Biggr)
=\Dom\Biggl(\bigcup_{n=0}^{\infty}I^{(n)}\Biggr)
=\bigcap_{n=0}^{\infty}\Dom(I^{(n)})=\bigcap_{n=0}^{\infty}O^{(n)}=M.
\end{equation}
\end{proof}
\begin{proof}[{\bf Proof of Theorem \bref{thm:AlgUniConvex}}]
By assumption $(X,d)$ is a proper geodesic metric space which has
the geodesic inclusion property and  the sites satisfy \beqref{eq:d(P_k,P_j)>0}. Hence we can apply
Proposition  \bref{prop:mDomM}\,\beqref{item:StrictlyConvexProper} which implies that
\begin{equation}\label{eq:Dom(M)m}
\Dom(M)=\Dom\Biggl(\bigcap_{\gamma=0}^{\infty}\Dom^{2\gamma+1}(P)\Biggr)
=\overline{\bigcup_{\gamma=0}^{\infty}\Dom^{2\gamma}(P)}=\overline{\bigcup_{n=0}^{\infty}I^{(n)}}=m.
\end{equation}
Thus $m=\Dom(M)$. This and Theorem \bref{thm:Metric} imply
that $M=\Dom(m)=\Dom^2(M)$ and $m=\Dom(M)=\Dom^2(m)$. Hence both $m$ and $M$ are double zone diagrams. From Lemma \bref{lem:DomPk}\,\beqref{item:In R On} and Lemma \bref{lem:dom}\,\beqref{item:DomClosure} it can be seen that $m$ and $M$ are,
respectively, the least and the greatest double zone diagrams. Finally, if $|K|=2$, then
 \beqref{eq:Dom(m)M}, \beqref{eq:Dom(M)m}, and the definition of $\Dom$, imply that  $(M_1,M_2)=\Dom(m)=((\dom(P_1,m_2),\dom(P_2,m_1))$ and $(m_1,m_2)=\Dom(M_1,M_2)=(\dom(P_1,M_2),\dom(P_2,M_1))$.
Hence $m_1=\dom(P_1,M_2)$ and $M_2=\dom(P_2,m_1)$. But  the definition of $\Dom$ implies that $\Dom(m_1,M_2)=(\dom(P_1,M_2),\dom(P_2,m_1))$. Therefore one has $\Dom(m_1,M_2)=(m_1,M_2)$, namely, $(m_1,M_2)$ is a zone
 diagram.  In the same way $(M_1,m_2)$ is a zone diagram.
\end{proof}

\begin{proof}[\bf Proof of Corollary \bref{cor:CompactLimit}]
Fix $k\in K$ and for each $n\in \N\cup\{0\}$ denote $A_{n}:=O^{(n)}_k$ and $B_n:=I^{(n)}_k$. Lemma \bref{lem:DomPk} implies that $(A_n)_{n=0}^{\infty}$ is decreasing and $(B_n)_{n=0}^{\infty}$ is increasing. As a result, since $X$ is compact \beqref{eq:Mm_k} follows from Lemma  \bref{lem:HausdorffUnionIntersection}.
\end{proof}

\begin{proof}[{\bf Proof of Corollary \bref{cor:ZoneDoubleZone}}]
If the least and the greatest double zone diagram coincide,
then, without any restriction on the sites or the space, there exists a unique zone diagram  which
is equal to both of them \cite[Corollary 6.2]{ReemReichZone}.
 Since in the proof of \cite[Theorem 1.1]{KMT2012}
it was proved that the  least double zone diagram coincides with the greatest one
when $(X,d)$ is a finite dimensional Euclidean space, the assertion follows from Theorem \bref{thm:AlgUniConvex}.
\end{proof}

\section{Concluding remarks and open problems}\label{sec:end}
We conclude the paper with the following remarks.
\begin{remark}
We believe that the methods and ideas presented in this paper can be extended to other implicit geometric objects. One of the reasons behind this belief is the fact that an important component in the iterative algorithm described in this paper for computing zone and double zone diagrams (Sections \bref{sec:QualitativeDescription}, \bref{sec:ActualCompute} above) is the approximation algorithm for computing Voronoi diagrams of general sites introduced in \cite{ReemISVD09},  and the computation of other implicit geometric objects, such as $k$-sectors, makes heavy use of Voronoi diagrams of general sites \cite[Section 4.1 and Proposition 5]{ImaiKawamuraMatousekReemTokuyamaCGTA}.
\end{remark}
\begin{remark}
A very interesting open problem is to establish error estimates for  the convergence rate (speed) of $(I^{(n)})_{n=0}^{\infty}$ and  $(O^{(n)})_{n=0}^{\infty}$. One of the reasons that this problem is interesting, is the observed fast convergence of these sequences (usually 4-5  iterations suffice for obtaining a very good approximation). We feel that the approach and error estimates developed in  \cite{ReemGeometricStabilityArxiv} may help here. We also feel,  but not entirely sure, that the rate of convergence is geometric (that is, linear).
\end{remark}
\begin{remark}
The availability of the implementation \cite{Vdream2017web} enables one to preform various interesting experiments and to discover new and unexplained phenomena. For instance, although in general a zone diagram is not necessarily unique, even in the case of $\R^2$ with two point sites (see e.g., \cite[Sections 1,5]{KMT2012} or Example \bref{ex:mM} above), and although our theoretical results, when restricted to normed spaces, are limited to normed spaces which are strictly convex, experiments show that Algorithm \bref{alg:InOn} actually converges to a unique zone diagram most of the times even when the considered normed space is not strictly convex.  This claim is illustrated in Figures \bref{fig:Zone2D-10Sites-1inSite-L1-UnitSphere005-It2-Endpoints}--\bref{fig:ZoneLinfty}.

Another interesting phenomenon which has been observed experimentally is the geometric stability of zone diagrams with respect to certain changes. More precisely, two types of stability have been observed. The first type is that small perturbations in the shapes of the sites (e.g., due to a translation or a distortion) lead to small change in the shapes of the regions of the corresponding zone diagram (formally, the changes are measured with respect to the Hausdorff distance). Figures \bref{fig:Zone2D-5Sites-55inSite-ShapesType3-L2-UnitSphere005-It5-Rays}--\bref{fig:Zone2D-5Sites-55inSite-ShapesType2-L2-UnitSphere005-It5-Rays} illustrate this phenomenon. Similarly, small changes in the norm of the space (where the sites are fixed) yield small changes in the shapes of the regions of the zone diagram, as illustrated in  Figures \bref{fig:Zone2D-10Sites-1inSite-L1-UnitSphere005-It2-Endpoints}--\bref{fig:Zone2D-10Sites-1inSite-L1-UnitSphere0005-It6-Rays}   comparing to Figure  \bref{fig:IntroZone} and also in Figures \bref{fig:Zone2D-5Sites-55inSite-ShapesType2-L2-UnitSphere005-It5-Rays}--\bref{fig:Zone2D-5Sites-55inSite-ShapesType2-L_alpha_delta-UnitSphere005-It5-Rays}. Here it is worth saying a few words on the formal way to measure these changes, since some caution is needed. Indeed, if we want to measure the changes with respect to the Hausdorff distance, then we face a problem because the Hausdorff distance depends on the norm and what we change is the norm itself. A possible solution to this issue is to consider all of the involved sets, namely the sites and the regions before and after the perturbations, as being sets in the Euclidean plane, and to  measure the changes with respect to the Hausdorff distance induced by this Euclidean norm. This choice is natural since in the real world, when we compare the difference between the shapes, we simply consider them, using our eyes, as sets located on the flat page on  which the pictures (before and after the perturbations) are embedded, and due to the Euclidean nature of our world, this flat page is modeled by the Euclidean plane.

Anyway, it is an open problem to explain the above-mentioned phenomena, and we believe that the approach presented in \cite{ReemGeometricStabilityArxiv,ReemVorStabilityNonUC2012} can help here.
\end{remark}

\begin{figure}

\begin{minipage}[t]{0.45\textwidth}
\begin{center}
{\includegraphics[scale=0.5]{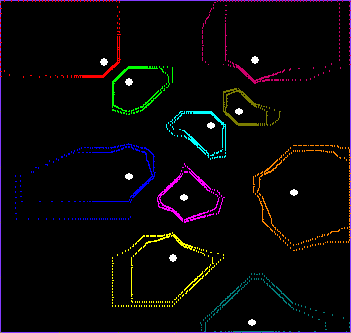}}
\end{center}
 \caption{$I^{(2)}$ and $O^{(2)}$ of the same point sites as in Figure \bref{fig:IntroZone}, now with respect to the $\ell_1$ norm. Here 160 endpoints approximate the boundary of each region. }
\label{fig:Zone2D-10Sites-1inSite-L1-UnitSphere005-It2-Endpoints}
\end{minipage}
\begin{minipage}[t]{0.45\textwidth}
\begin{center}
{\includegraphics[scale=0.5]{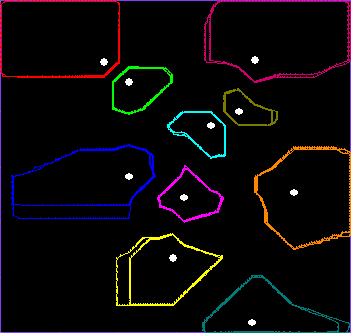}}
\end{center}
 \caption{The setting of Figure \bref{fig:Zone2D-10Sites-1inSite-L1-UnitSphere005-It2-Endpoints}, where now $I^{(3)}$ and $O^{(3)}$  are displayed and neighbor endpoints are connected by a line segment. }
\label{fig:Zone2D-10Sites-1inSite-L1-UnitSphere005-It3-EndpointsConnect}
\end{minipage}
\hfill
\begin{minipage}[t]{0.45\textwidth}
\begin{center}
{\includegraphics[scale=0.5]{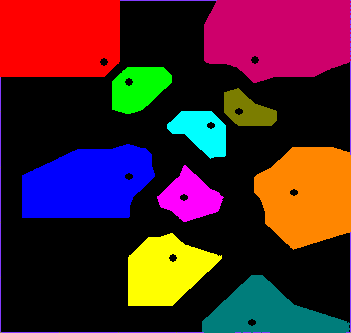}}
\end{center}
 \caption{A good approximation of the zone diagram of Figures \bref{fig:Zone2D-10Sites-1inSite-L1-UnitSphere005-It2-Endpoints}--\bref{fig:Zone2D-10Sites-1inSite-L1-UnitSphere005-It3-EndpointsConnect} (six iterations).}
\label{fig:Zone2D-10Sites-1inSite-L1-UnitSphere0005-It6-Rays}
\end{minipage}
\begin{minipage}[t]{0.45\textwidth}
\begin{center}
{\includegraphics[scale=0.5]{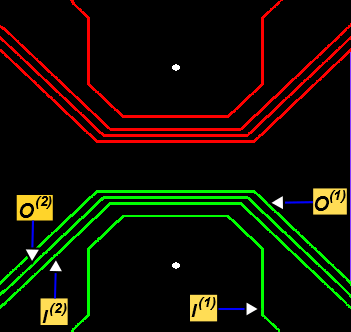}}
\end{center}
 \caption{$I^{(1)}$, $I^{(2)}$, $O^{(1)}$, and $O^{(2)}$ for $P_1=\{(0,3)\}$ and $P_2=\{(0,-3)\}$ with respect to the $\ell_{\infty}$ norm (the boundaries of the second component are labeled). In agreement with \cite[Example 2.4]{ReemReichZone}.}
\label{fig:ZoneLinfty}
\end{minipage}
\end{figure}

\begin{figure}

\begin{minipage}[t]{0.45\textwidth}
\begin{center}
{\includegraphics[scale=0.5]{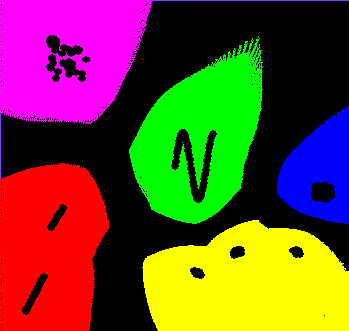}}
\end{center}
 \caption{An approximation (five iterations) of the zone diagram of five complicated sites in a square in $(\R^2,\ell_2)$. Here 160 rays approximate  each region. The approximation is very good since each site consists of many points and hence the emanating rays cover most of the regions.}
\label{fig:Zone2D-5Sites-55inSite-ShapesType3-L2-UnitSphere005-It5-Rays}
\end{minipage}
\begin{minipage}[t]{0.45\textwidth}
\begin{center}
{\includegraphics[scale=0.5]{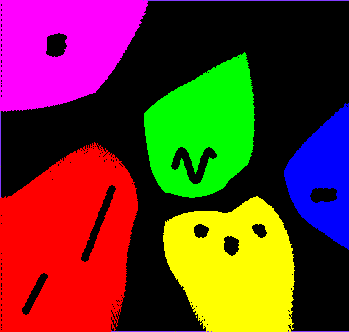}}
\end{center}
 \caption{The setting of Figure \bref{fig:Zone2D-5Sites-55inSite-ShapesType3-L2-UnitSphere005-It5-Rays}, with perturbed sites. The change in the zone diagram is not very big, but it is not negligible (e.g., near the bottom right corner, since the perturbation of some sites is not negligible).}
\label{fig:Zone2D-5Sites-55inSite-ShapesType4-L2-UnitSphere005-It5-Rays}
\end{minipage}
\hfill
\begin{minipage}[t]{0.45\textwidth}
\begin{center}
{\includegraphics[scale=0.5]{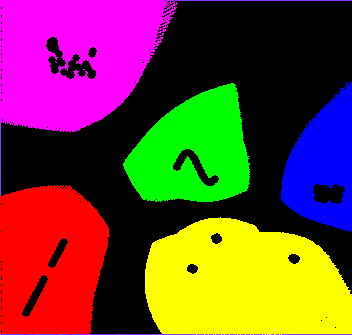}}
\end{center}
 \caption{The setting of Figure \bref{fig:Zone2D-5Sites-55inSite-ShapesType3-L2-UnitSphere005-It5-Rays}, with other perturbed sites. Despite this perturbation, the zone diagram has not varied much.}
\label{fig:Zone2D-5Sites-55inSite-ShapesType2-L2-UnitSphere005-It5-Rays}
\end{minipage}
\begin{minipage}[t]{0.45\textwidth}
\begin{center}
{\includegraphics[scale=0.5]{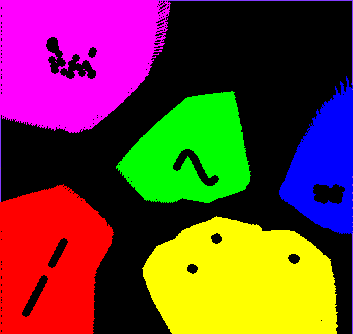}}
\end{center}
 \caption{The setting of Figure \bref{fig:Zone2D-5Sites-55inSite-ShapesType2-L2-UnitSphere005-It5-Rays}, where now the norm is defined in \beqref{eq:alpha delta} with $\alpha=0.7$, $\delta=0.5$. Despite the change in the norm, the zone diagram varied only slightly.}
\label{fig:Zone2D-5Sites-55inSite-ShapesType2-L_alpha_delta-UnitSphere005-It5-Rays}
\end{minipage}
\hfill
\end{figure}

\begin{remark}
It is interesting to find out whether Theorem \bref{thm:AlgUniConvex} can be generalized to other settings, for instance to all proper geodesic metric spaces, or at least to all finite dimensional normed spaces and to some ``nice'' manifolds.
\end{remark}
\begin{remark}
It will be interesting to extend the theory of geodesic metric spaces which has the geodesic inclusion property (Section \bref{sec:Geodesic} above) and, in particular, to give more examples of geodesic metric spaces which have this property. For instance to prove or disprove whether locally compact, complete and connected smooth (or piecewise smooth) Riemannian manifolds \cite{Jost2011book}, as well as Hadamard spaces \cite{Bacak2014book} (in particular, hyperbolic spaces \cite[p.\,10]{Bacak2014book}) have this property.
\end{remark}

\begin{remark}\label{rem:WeakSolution}
In Section \bref{sec:Intro} we discussed briefly some similarities and differences between  implicit computational geometry and differential equations. In what follows we mention another similarity: the relation between double zone diagrams and  the concept of weak solutions. A weak solution to a partial differential equation \cite[pp.\,201,\,221,\,292 and elsewhere]{Brezis2011book} is a function that does not satisfy the original equation but rather it satisfies a weaker condition which any true solution (also called ``a strong'' or ``a classical'' solution) to the original equation (if such a solution exists) must satisfy. Hence a weak solution generalizes the concept  of a strong solution. It is useful to consider weak solutions since in many cases it is possible to prove their existence. Once this is done, the weak solution becomes a  candidate to be a strong solution. Now one needs to prove that the weak solution is indeed a strong solution, and in many cases this can be done.  Returning to zone and double zone diagrams, we see that a possible technique to prove the existence of zone diagrams is to consider first a certain  double zone diagram (as was already said before, the existence of double zone diagrams is known in a general setting \cite[Theorem 5.5]{ReemReichZone}) and then to show that such a double zone diagram must in fact be a zone diagram. This technique has been used in various works \cite{ReemReichZone,KMT2012}. Hence, in some sense double zone diagrams can be regarded as weak solutions to the zone diagram equation $R=\Dom(R)$.
\end{remark}

\section*{Acknowledgments}
This paper is the output of a long and challenging iterative process. Any honest and constructive feedback that has been given to me  during this process is appreciated. I want to use this opportunity to thank the editors of DCG for their patience and to thank the referee for valuable comments. I also want to thank Mario Ponce for inviting me for a visit in the Pontifical Catholic University of Chile (Santiago) during June 2015, for helpful discussions, warm hospitality, and for allowing me to speak there (and in the Pontifical Catholic University of  Valpara\'iso) on issues related to implicit computational geometry in general and to this paper in particular.
A very sad moment along the way was when I discovered that Ji{\v{r}}{\'{\i}} (Jirka)  Matou{\v{s}}ek, one of the founders of implicit computational geometry, passed away. Unfortunately for me, I have not had the opportunity to meet him, but at least I had the honor to communicate with him electronically and to be a coauthor of him. It was through these  communications and also through his scientific outcome that I have discovered his exceptional abilities. I hope and believe that his contributions will have a lasting value and I dedicate this paper to him.

Parts of this work were done during various years, when I have been associated with the following places: The Technion, Haifa, Israel (2009-2010, 2016-2017), the University of Haifa, Haifa, Israel (2010-2011), the National Institute for Pure and Applied Mathematics  (IMPA), Rio de Janeiro, Brazil (2011-2013), and the Institute of Mathematics and Computational Sciences (ICMC), University of S\~ao Paulo, S\~ao Carlos, Brazil (2014-2015), and it is an opportunity for me to thank the BSF and FAPESP. Special thanks are for a special postdoc fellowship (``P\'os-doutorado de Excel\^encia'') given to me when I was at IMPA.

\bibliographystyle{spmpsci}

\end{document}